\theoremstyle{Theorem}
\newtheorem{theo}{Theorem}
\theoremstyle{Lemma}
\newtheorem{lem}[theo]{Lemma}
\newtheorem{prop}[theo]{Proposition}
\newtheorem{rem}[theo]{Remark}
\newtheorem{assumption}[theo]{Assumption}
\theoremstyle{Definition}
\newtheorem{defi}[theo]{Definition}
\theoremstyle{Corollary}
\newtheorem{cor}[theo]{Corollary}
\newtheorem{example}[theo]{Example}
\DeclareFontFamily{U}{matha}{\hyphenchar\font45}
\DeclareFontShape{U}{matha}{m}{n}{
      <5> <6> <7> <8> <9> <10> gen * matha
      <10.95> matha10 <12> <14.4> <17.28> <20.74> <24.88> matha12
      }{}
\DeclareSymbolFont{matha}{U}{matha}{m}{n}
\DeclareFontFamily{U}{mathx}{\hyphenchar\font45}
\DeclareFontShape{U}{mathx}{m}{n}{
      <5> <6> <7> <8> <9> <10>
      <10.95> <12> <14.4> <17.28> <20.74> <24.88>
      mathx10
      }{}
\DeclareSymbolFont{mathx}{U}{mathx}{m}{n}
\DeclareMathDelimiter{\vvvert}{0}{matha}{"7E}{mathx}{"17}
\DeclareMathAlphabet{\mathpzc}{OT1}{pzc}{m}{it}
\newcommand{\E}{{\mathbb E}}
\newcommand{\N}{{\mathbb N}}
\newcommand{\Z}{{\mathbb Z}}
\newcommand{\R}{{\mathbb R}}
\newcommand{\Tr}{\textnormal{Tr}}
\newcommand{\LR}{\lambda}
\newcommand{\bsec}{\pmb{\beta}}
\newcommand{\be}{\begin{equation}} 
\newcommand{\ee}{\end{equation}}
\numberwithin{theo}{section}
\newcommand{\TK}{\textcolor{red}}
\begin{document}

\setlength{\droptitle}{-4em} 

\title{ \textbf{ Validating Approximate Slope Homogeneity in Large Panels}}


\author{
{\small  Tim Kutta\thanks{Correspondences may be addressed to Tim Kutta (tim.kutta@rub.de).}, Holger Dette} \\
{\small Colorado State University,  Ruhr-Universit\"{a}t Bochum} 
}

\maketitle
\begin{abstract}
The analysis of large data panels is a cornerstone of modern economics. An important benefit of panel analysis is the possibility to reduce noise by intersectional pooling. In order to control the resulting heterogeneity bias, various works have proposed homogeneity tests - in the context of linear panels for slope homogeneity. However, such tests risk detecting even minute deviations from perfect homogeneity, discouraging pooling even when practically beneficial. To address this problem, we introduce the hypothesis of approximate slope homogeneity and propose a test for this more pragmatic hypothesis. In contrast to most of the related literature, our test statistic is asymptotically pivotal and still valid in the presence of simultaneous temporal and intersectional dependence (even for large panels). We also demonstrate uniform consistency against classes of local alternatives.  A simulation study and a data example underline the usefulness of our approach.
 \end{abstract}

\noindent {\em Keywords:}
large intersections, panel data, self-normalization

\bigskip

%
%
%
%
%
%
%
%
%

\section{Introduction}

%
%
%
%
%
%
%
%
%

The analysis of large data panels is a cornerstone of modern economics. Key benefits of data panels derive from intersectional pooling, i.e. combining observations across individuals, which boosts sample size and consequently reduces variance of statistical estimates. Enhanced stability of pooled estimates compared with their individual counterparts and the associated gains in interpretability, have been testified by a number of classical studies, such as \cite{Baltagi1997} and \cite{Baltagi2003}. Moreover, pooled estimates are attractive from the standpoint of statistical inference, as they circumvent  the problem of multiple testing (for each individual) and allow broader statements about the underlying population.
Nevertheless, it is not true that “anything goes” in terms of intersectional pooling: If the observed individuals are so heterogeneous that no representative trend exists among them,  pooled estimators might pay a smaller variance, by an immense bias (see e.g. \cite{Hsiao1997, Hsiao}). 

In linear panels, the question of homogeneity concerns the individual regression slopes. 
As in most cases slope homogeneity cannot be determined a priori, many authors have developed 
goodness of fit tests  to investigate if this assumption is reasonable  (see  \cite{Zellner, Swamy, Pesaran1996, Phillips, Pesaran,  Blomquist2013b, breitung2016} among others). 
These tests have come to encompass evermore practical scenarios, such as panels with large intersections
and to some extent dependent observations. However, the use of homogeneity tests has not been met with unequivocal approval. Critics  have pointed out that tests tend to reject homogeneity, even when practically pooling is advisable (see e.g. \cite{Baltagi2008, Blomquist}). Moreover and more fundamentally, the hypothesis of perfect homogeneity is usually implausible in high dimensional panels, such that it seems conceptually questionable to test
it in the first place.
These insights have spurred recent trends of searching for alternative poolability criteria (see \cite{Campello}) or to subdivide heterogeneous individuals into more homogeneous groups (\cite{Lin, Blomquist, Sarafidis2015,  Su2016, Wang}).

In this paper,  we contribute to this discussion. After a brief review of 
some related literature on slope homogeneity tests,
which is relevant in this context, we propose in  Section \ref{Section2}
the pragmatic notion of \textit{approximate slope homogeneity}. 
This hypothesis is rigorously defined in Section 
\ref{sec_22}. It
expresses the belief, that there is always some amount of individual heterogeneity in the data 
and the real question is whether it is small enough to permit a pooled analysis or not.
In Section \ref{Section3} we present a test for approximate slope homogeneity,  which keeps 
its preassigned nominal level for large sample sizes and cross-sections. Moreover, we establish uniform
 consistency of this test against classes of local alternatives. 
In contrast to most works on slope homogeneity tests, which impose strong assumptions
either on the dependence structure of the model errors (such as  temporal and intersectional independence in \cite{Pesaran}), or  on the panel itself (such as  small intersection and long time frame in \cite{Blomquist}), our approach works for panels with large intersections, even in the presence of simultaneous intersectional and temporal dependence. We achieve this by constructing a \textit{self-normalized} statistic, that automatically cancels out the long-run variance and is thus asymptotically distribution free. This
makes our method easy to implement and  interpret. Moreover, compared to alternatives such as bootstrap,  self-normalizations are computationally parsimonious. 
Mathematically, our approach  rests on a weak invariance principle for a time-sequential dispersion measure of the slopes. 
Finally, we investigate finite sample properties in Section \ref{Section4} by means of a simulation study and demonstrate practical applicability by virtue of a data example. Proofs and technical details are deferred to the supplement.

\section{Approximate Slope Homogeneity} \label{Section2}

We  consider the linear panel model
\begin{equation} \label{model_1}
y_{i,t} =x_{i,t}' \beta_i+\varepsilon_{i,t} \quad \quad t=1,...,T, \,\, i=1,...,N,
\end{equation}
where $x_{i,t}$ is a $K$-dimensional vector of regressors, $\beta_i$ is a $K$-dimensional vector of slope coefficients and $\varepsilon_{i,t}$ a centered, real model error with unknown distribution.  We call $t$ the \textit{time dimension} of the panel and $i$ the \textit{individual} or \textit{intersectional component}.  We  collect all equations concerning the $i$th individual in the model
\begin{equation} \label{model_2}
y_{i} =X_{i} \beta_i+\varepsilon_{i} \quad \quad  i=1,...,N,
\end{equation}

where $X_{i}=(x_{i,1},...,x_{i,T})' $ is a regression matrix of dimension $T \times K$ and $y_{i}=(y_{i,1},...,y_{i,T})'$ and $\varepsilon_{i}=(\varepsilon_{i,1},...,\varepsilon_{i,T})'$ are  $T-$dimensional vectors. Sometimes we  refer to all regressors collectively and therefore define the compounded matrix $\mathbf{X}=(X_1,...,X_N)$.  
Often panel models comprise constant, individual specific intercepts, which are omitted in model \eqref{model_2} for simplicity (intercepts can be eliminated by subtracting the individual mean over the temporal observations, see Remark \ref{rem_intercepts} for details).

The analysis of a data panel usually begins by assessing the variability between the  slopes $\beta_1,...,\beta_N$. This can be done by a slope homogeneity test, which composes a summary statistic to decide whether all slopes are equal (the hypothesis) or not. 
In Section  \ref{sec_21} we review some literature  on these  tests as well as recent alternatives. Afterwards, in Section \ref{sec_22}, we introduce and discuss the related, but weaker notion of {\it approximate slope homogeneity.}

\subsection{Testing Slope Homogeneity} \label{sec_21}

The literature on slope homogeneity tests is extensive and therefore we confine ourselves to some important works most closely related to our own. 
The hypothesis of (exact) \textit{slope homogeneity} in model \eqref{model_2} proposes equality of all individual slopes, i.e.
\begin{equation}
    \label{det1}
H_0^{exact}: \beta:=\beta_1=...=\beta_N.
\end{equation}
Roughly speaking, $H_0^{exact}$ implies the same impact of the regressors on each individual - a requirement that is usually not \textit{a priori} clear, explaining why validation by tests is deemed necessary. However, if $H_0^{exact}$ does hold, it opens the door to a profound data analysis, that is otherwise infeasible. For instance, it is possible to analyze in detail the  regressors' influence on the dependent variable, by considering the entries of $\beta$ (positive or negative effects), test for the overall explanatory power of the model ($\beta=0$) or check whether $\beta$ conforms to an otherwise predetermined model. In contrast, individual analysis is notoriously unstable and inference is plagued by the multiple testing problem. 

A simple way to test $H_0^{exact}$ is provided by the standard  $F$-test (see e.g. \cite{Pesaran}). The $F$-test is asymptotically valid for $T \to \infty$ and fixed $N$,  if the model errors are centered,  independent, homoscedastic and satisfy exogeneity.  While the $F$-test is well-known and still prevalent in many applications, the requirement of $N<T$ is too restrictive for many economic panels, where the number of individuals usually exceeds that of temporal observations. Additionally, the assumption of homoscedasticity is often difficult to justify and may itself require pre-testing. 

Dispersion type statistics as developed by \cite{Swamy} are a way to tackle  both  problems. 
Swamy's test statistic 
\begin{equation} \label{Swamy}
\hat S_{Swam} := \frac{1}{N} \sum_{i=1}^N \frac{\|X_i(\hat \beta_i - \hat \beta_{pool})\|^2}{\|y_i-X_i \hat \beta_i\|^2/(T-K)},
\end{equation}
divides by individual variance estimators, thus eliminating the effect of heteroscedasticity. 
Here  $\| \cdot \|$ denotes the euclidean norm, $\hat \beta_i := [X_i' X_i]^{-1} X_i y_i$ the 
ordinary least squares  (OLS) estimator 
of the  individual  parameter $\beta_i$ 
and 
\begin{equation} \label{beta_pool}
\hat \beta_{pool}:= \Big( \sum_{i=1}^N \frac{X_i' X_i}{\|y_i-X_i \hat \beta_i\|^2}\Big)^{-1} \sum_{i=1}^N \frac{X_i' y_i}{\|y_i-X_i \hat \beta_i\|^2}
\end{equation}
the pooled version. 
In \cite{Pesaran} a rigorous asymptotic theory for Swamy's original statistic as well as modified versions for large intersectional scenarios has been developed. In particular, it is  
shown that under $H_0^{exact}$  the statistic $\sqrt{N} (\hat S_{Swam}-K)/\sqrt{2K}$ converges weakly to a standard normal  distribution, 
 if $N,T  \to \infty $ and  $N/T^2 \to 0$. Notice that in this case $N$ can have a larger order than  $T$, allowing 
 for a  wider range of applications than the traditional $F$-test. 
This assumption on the ratio of intersectional and temporal  observations can be further relaxed, if the estimator  $\hat \beta^{pool}$ in \eqref{Swamy}
is replaced by a carefully reweighted version and the division by $\|y_i-X_i \hat \beta_i\|^2$ is replaced by $\|y_i-X_i \hat \beta_{pool}\|^2$. Notice that  under $H_0^{exact}$ the latter yields a more accurate variance estimate. With these modifications,  weak convergence 
under the null hypothesis \eqref{det1} holds if  $N/T^4 \to 0$. However, we emphasize that this improvement rests on perfect homogeneity of all slopes.
 Other approaches to test slope homogeneity, such as \cite{Hausman1978} tests have been considered in the literature (see \cite{Pesaran1996, Phillips}) and we refer the interested reader to  \cite{Pesaran}  for a discussion on the applicability and problems of Hausman tests. \\
One important weakness of the aforementioned tests is their assumption of (conditional) independence of the errors, which is often invalid in applications. An early approach to accommodate intersectional dependence is the seemingly unrelated regression equations (SURE) method, which however is best suited to situations, where $N$ is much smaller than $T$ \citep{Zellner}. Different ways on how to incorporate cross-sectional dependence, such as common factor models, are reviewed in \cite{Chudik}. \cite{Blomquist2013b}  discuss a general slope homogeneity test, where cross-sectional independence is assumed, while temporal dependence is permitted.  These authors assume that $N$ is smaller than $T$, while still $N \to \infty$. 
In a follow-up work, \cite{Blomquist} consider a different approach to deal with temporal and simultaneous cross-sectional dependence. In particular, if  the null hypothesis holds, $N$ is fixed  and $T \to \infty$, they   prove (under  mixing assumptions) 
the weak convergence of Swamy's test statistic
to a weighted sum of chi-squared distributions.
 The resulting  test procedure relies on a block bootstrap to simulate the distribution of $\hat S_{Swam}$ under the null hypothesis. 
  We emphasize  that the theory in  this paper (as the theory for the $F$-test, discussed in the previous paragraph) is only  applicable  to  panels with large time frames compared to cross-sections ($N$ fixed). However, simulations strongly suggest that the test is valid  even if  $N$ and $ T$ are of the same order. \\
While slope homogeneity tests remain a staple of panel analysis, recent works have explored alternative poolability criteria. This trend is fueled by the practical observation that homogeneity tests often reject $H_0^{exact}$, even for inconsiderable deviations from the hypothesis (see \cite{Baltagi2008, Sarafidis2020}). We want to illustrate this problem, by means of a small example.

 \begin{figure}[t]
\begin{subfigure}{0.5\textwidth}
\centering
\includegraphics[width=0.75\linewidth,height=190pt]{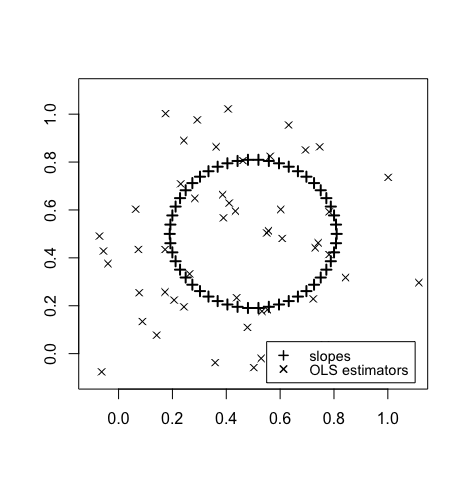}\\[-1ex]
 \end{subfigure}%
 \begin{subfigure}{0.5\textwidth}
 \centering
  \includegraphics[width=0.75\linewidth,height=190pt]{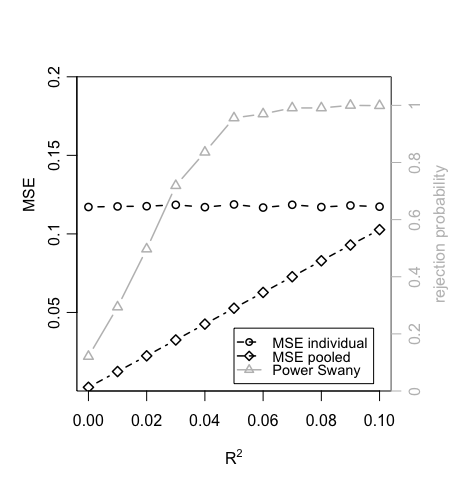}
 \end{subfigure}
 \caption{\doublespacing \label{fig_1} \textit{ Left panel: True slopes (“$+$”) and their estimators (“$\times$”) ($T=20, N=50$). Right panel: Rejection probability of Swamy's test with nominal level $5\%$ (gray), plotted against the individual and pooled MSE (black). The $x$-axis shows the slope heterogeneity, measured by  $R^2$. }}
  \end{figure}
\begin{example} \label{example_1}{\rm
Consider the statistical panel model from \eqref{model_1}, with $T=20$, $N=50$ and $K=2$. The regressors $x_{i,t}$ are generated by  $AR(1)$-processes as described at the very beginning of Section \ref{Section4} and the model errors are i.i.d. standard normally 
distributed random variables, independent of the regressors. The slopes $\beta_1,...,\beta_N \in \mathbb{R}^2$ are equally spaced on the circle with radius $R$ and center $(1/2, 1/2)$. In the left part of Figure \ref{fig_1}, we have plotted the ring of true slopes (“$+$”) and the individual OLS-estimates (“$\times$”) for a radius of $R=0.3$. It is evident that the individual OLS-estimates are subject to extreme fluctuations (a  number of them even have negative entries, which in practice might translate into misguided interpretations). To compare the  performance of pooled and individual analysis, we consider the mean squared error of the estimator $\hat \beta_{pool}$ defined in \eqref{beta_pool} with that of the individual OLS-estimators $\hat \beta_1,...,\hat \beta_N$
$$
MSE_{ind} := \frac{1}{N} \sum_{i=1}^N \E \|\hat \beta_i - \beta_i\|^2 \qquad MSE_{pool} := \frac{1}{N} \sum_{i=1}^N \E \| \hat \beta_{pool} - \beta_i\|^2.
$$
The right part of 
Figure \ref{fig_1}  shows  theses MSEs (approximated by $1000$ simulation runs) as a function of the squared radius
 $R^2$.
We observe 
that $MSE_{ind}$ is constant and not influenced by the degree of slope homogeneity, while $ MSE_{pool} \approx R^2$ (the empirical variance of the slopes). 
 For $R ^2\approx 0.09$ the criterion is indifferent between the two options (even though users would probably  prefer a pooled estimate, for ease of statistical inference).
 We now compare these outcomes with the performance of a classical test for slope homogeneity, say Swamy's test, that we have discussed earlier in this section.   Recall that a user might take rejection by this test as evidence against pooling. The gray line in the right panel of Figure \ref{fig_1} indicates the empirical power of Swamy's test (with nominal level $5\%$), depending on $R^2$. 
As we can see, Swamy's test rejects the hypothesis of slope homogeneity very early, with about $50\%$ probability for $R^2=0.02$, where  $MSE_{pool}/MSE_{ind} \approx 0.2$  and with almost certainty for $R^2 = 0.06$, where still $MSE_{pool}/MSE_{ind} \approx 0.56$. These outcomes indicate, that a decision based on a  test for the 
 hypothesis \eqref{det1} of ``exact'' slope homogeneity 
 can be a rather narrow criterion for poolability.}
  \end{example}

In view of the high power of slope homogeneity tests, it does not come as a surprise that they often reject in practice, where the ideal of perfect homogeneity  is rarely if ever true. This however can be a distraction - as in the previous example - when $H_0^{exact}$ is at least approximately true and pooling is still favorable (see also \cite{Baltagi2008} and \cite{Blomquist}).
Recent works have taken up this insight to formulate more pragmatic criteria for data pooling. One approach is to cluster heterogeneous slopes into (more) homogeneous groups and restrict pooling to these subgroups (see e.g. \cite{Lin, Blomquist, Sarafidis2015, Su2016, Wang}).  Alternatively, it has been suggested to test for “no slope heterogeneity bias”, in order to infer directly whether pooled estimators are appropriate or not \citep{Campello}. 

In the following, we present an alternative approach that combines features of both strands,  slope homogeneity tests, and the novel trend of considering alternative hypotheses than
$H_0^{exact}$. Similar to most traditional works, we believe that a small degree of heterogeneity is a good indicator of poolability, but  we argue that testing for  exact homogeneity can be an analytical straitjacket and often misleading.
  For this reason, we will introduce  the more relaxed notion of  approximate slope homogeneity in the next section.

\subsection{Approximate Slope Homogeneity}\label{sec_22}

We now formulate the hypothesis of \textit{approximate slope homogeneity} for some user determined precision level $\Delta>0$ as follows
\begin{align} \label{hypothesis}
H_0: S_N:=\frac{1}{N} \sum_{i=1}^N \|\beta_i - \bar\beta\|^2 \le \Delta \quad vs. \quad H_1:S_N> \Delta.
\end{align}
Here $\|\cdot\|$ denotes the euclidean norm, $\bar \beta$ is the average over the individual slopes $\beta_1,...,\beta_N$ and   
$S_N$ is the empirical variance of the slopes. Accordingly, if $S_N$ is small,  the individual slopes $\beta_1 , \ldots , \beta_N$ concentrate and $\bar \beta$ is “representative”, while a large value of $S_N$ suggests high diversity such that no real “representative agent” exists. In line with this interpretation, the threshold $\Delta$  expresses how much heterogeneity between  the slopes is still deemed acceptable for the purpose of intersectional pooling. In particular for $\Delta=0$, the hypothesis $H_0$ boils down to the classical version $H_0^{exact}: \beta_1=...=\beta_N$, discussed in the previous section. This corresponds to no tolerance of heterogeneity, regardless how minute - a hypothesis usually justified only for very small $N$. In contrast, a choice $\Delta>0$ means that some degree of heterogeneity is still acceptable for the subsequent analysis, as long as it does not exceed a reasonable level (where the precise choice of $\Delta$ will depend on the problem in hand). 
In this paper we concentrate on the case $\Delta >0$, 
because we argue that in high-dimensional panels it  is rare, and perhaps impossible, to have exact equality of all slopes.



\section{Testing Approximate Slope Homogeneity} \label{Section3}

 The aim of this section is to develop an
 (asymptotically) pivotal  test for the  hypothesis of approximate slope homogeneity defined in  
 \eqref{hypothesis}, when both the sample size $T$ and the panel dimension $N$
 converge to $\infty$. 
 Because this is a demanding task, we proceed in several steps.
 First, in Section \ref{sec31}, we construct an 
 estimator $\hat S_N (\kappa) $ of the heterogeneity measure $S_N$ in \eqref{hypothesis}, which is \textit{ time-sequential }
 in the sense that it is calculated from the data
 $\{ (x_{i,t},y_{i,t}) |  t=1, \ldots  ,\lfloor T \kappa \rfloor , \,\, i=1,\ldots ,N \}$ for $\kappa \in (0,1]$.
 In particular, $\hat S_N(1) $ estimates $S_N$ using all available data, while a consideration of the whole process $\{ \hat S_N (\kappa) \}_\kappa$ allows us to subsequently construct a pivotal test. Second, in Section \ref{sec32}, we collect and discuss the required  assumptions for our theoretical results (this section could be skipped in a first reading). 
 In Section  \ref{sec33} we develop a 
 weak invariance principle for an adequately standardized version of  $\{ \hat S_N (\kappa) \}_\kappa$. 
Fourth, these results are used to define a  self-normalized test-statistic in Section \ref{sec34} and 
to prove its  weak convergence to a pivotal random variable.
As a  result, we obtain a simple (asymptotic) level $\alpha$ test  for the hypothesis \eqref{hypothesis} that is uniformly consistent against classes of local alternatives (see Theorem \ref{theorem_test}). Finally, in Section \ref{sec35}, we discuss an extension of our approach to a different dispersion measure than $S_N$, that assesses slope heterogeneity w.r.t. to prediction. 

\subsection{Estimators} \label{sec31}

Recall that $\hat \beta_i := [X_i' X_i]^{-1} X_i y_i$ is the 
OLS estimator of the individual  parameter $\beta_i$ and define the average
$\hat  \beta := \frac{1}{N} \sum_{i=1}^{N} \hat \beta_i$. A natural 
approximation of $S_N$ is then given by the empirical variance of the estimates $\hat \beta_1,...,\hat \beta_N$, defined as 
\begin{equation*}
    \label{det2} 
    \hat{S}_N := \frac{1}{ N  } \sum_{i=1}^{N} \|\hat \beta_i  - \hat  \beta \|^2.
\end{equation*}
In principle, statistical inference can be based directly on $\hat S_N$. The results presented in Section \ref{sec33} below imply for fixed  $N$ and $T\to \infty $, that the statistic $\sqrt{T} ( \hat S_N  - S_N)$ is asymptotically normal. If we permit $N\to \infty $, an additional bias term, say $B_N$, appears and 
the bias-corrected statistic $\sqrt{NT} ( \hat S_N  - S_N - B_N)$ is asymptotically normal. Theoretically, this suffices to construct a test for the hypothesis \eqref{hypothesis}: Let
$\lambda^2 $ denote 
 the variance of the limiting distribution and
$\hat B_N$ be an estimator of the bias
$B_N$ such that  $\sqrt{NT}  ( \hat B_N   -B_N) =o_P(1)$. 
 Then rejecting the null hypothesis \eqref{hypothesis}, whenever
$  \hat S_N  >  \Delta  +\hat B_N +  \lambda u_{1-\alpha }/  \sqrt{NT} $ yields a 
consistent and asymptotic level $\alpha $ test, where  $u_{1-\alpha}$ is the $(1-\alpha)$
 quantile of the normal distribution. However, a practical implementation requires consistent estimation of the variance  $\lambda^2$.
At this point we do not give  details  on the form of $\lambda^2$, which will be defined rigorously in \eqref{long_run_variance}. We just remark that this 
 long-run variance depends in a  very complicated way 
 on the intersectional and temporal dependence 
 of the data, which makes estimation nearly impossible.

As an alternative, we propose a self-normalization approach, which - roughly speaking - consists in dividing $\hat S_N$
by a further statistic, say $\hat V_N$, such that 
$\sqrt{NT }\hat V_N $ converges weakly to a random variable
proportional to
 $\lambda$. As a consequence, the factor $\lambda$ cancels out in the ratio $T_N= (\hat S_N-\hat B_N -S_N )/ \hat V_N $, making it asymptotically
 distribution free. 
 Self-normalization is an important tool in the analysis of dependent time series, due to its user-friendliness and robustness (see \cite{Mueller2007}).
 Early references include  
 \cite{lobato2001}, \cite{shao2010}, and \cite{shazha2010} (for an overview see \cite{Shao2015}). More recently, self-normalization  has also been used for the analysis of high-dimensional time series (see \cite{wangshao2020,wang2021inference} among others).
 A common feature of all these references 
 is that the proposed methodology is only applicable for testing hypotheses of 
 exact equality (such as formulated 
in  \eqref{det1} in the case of comparing the slopes).  Self-normalization in the context of 
testing hypotheses of approximate equality (as considered in \eqref{hypothesis}) has been introduced in \cite{DetteKokotVolgushev}, but not in a high-dimensional scenario.

 A key feature of self-normalizations is the use of \textit{time-sequential estimators}, which  do not draw on all temporal observations but only on a subset,  specified by a parameter $\kappa$. To be precise, 
 let $p \in (0,1)$ denote a fixed constant and $I =[p,1]$ an indexing set. Then for $\kappa \in I$ 
we consider the observations
 \begin{equation}\label{temp_model}
\{  (x_{i,t},y_{i,t}) |~ t=1, \ldots ,\lfloor T \kappa \rfloor , \,\, i=1,\ldots ,N \} ~.
\end{equation}
 In order to define the estimator $\hat S_N (\kappa )$
 from the data \eqref{temp_model}, we begin with 
 the corresponding slope estimators and consider for $\kappa \in I$ the restricted design matrix 
 \begin{equation} \label{X-matrix}
 X_i(\kappa):=[x_{i,1},...,x_{i, \lfloor \kappa T \rfloor, },\mathbf{0},...,\mathbf{0}]' \in \R^{T \times K}  
 \end{equation}
 of the $i$th individual. Note that $ X_i(\kappa)$
 is obtained from the  design  matrix $X_i$ 
 in model \eqref{model_2} by keeping the first $\lfloor \kappa T \rfloor$ rows of $X_i$ and then replacing 
 the remaining $T- \lfloor \kappa T \rfloor$
subsequent rows by  $K$-dimensional vectors with $0$-entries. Similarly, we define the vector
\begin{equation}
    \label{det3}
 y_i(\kappa):=(y_{i,1},...,y_{i,\lfloor \kappa T \rfloor}, 0,...0)'\in \R^{T}~.
\end{equation}
Notice that the additional $ T - \lfloor T\kappa \rfloor   $
zero  rows in \eqref{X-matrix} and  zero entries \eqref{det3} do not serve any practical purpose. They will, however, facilitate a tidy theory, where all matrices and vectors are of the same dimensions,  regardless of $\kappa \in I $.

The sequential OLS estimator of the parameter $\beta_i$ 
in model \eqref{model_1} based on the  data  \eqref{temp_model} can 
then be expressed as
$$
\hat \beta_i(\kappa):= [X_i(\kappa)'X_i(\kappa)]^{-1} X_i(\kappa)' y_i(\kappa) \in \R^K ~.
$$
Note that this definition requires $\lfloor \kappa T \rfloor  > K $ for all $\kappa \in I$, where again $K$ is the dimension of the slope parameter. Therefore, 
we assume throughout this paper that $pT>K$, which of course is asymptotically true for any  $p>0$ as $T \to \infty$.

The corresponding \textit{mean group estimator} for $\bar \beta:= \frac{1}{N} \sum_{i=1}^{N} \beta_i$ is obtained by 
$$
\hat \beta(\kappa):= \frac{1}{N}\sum_{i=1}^{N} \hat \beta_i(\kappa) ~ ,
$$
and  the time-sequential estimator of the slope heterogeneity $S_N$ is defined by 
$$
\hat{S}_N( \kappa):= \frac{1}{ N  } \sum_{i=1}^{N} \|\hat \beta_i( \kappa) - \hat \beta(\kappa)\|^2.
$$
Notice that  this estimator changes with $\kappa$ and thereby tracks information about the temporal structure of the panel. %
Finally,  we point out that  for  $\kappa =1$
we obtain the estimators  $\hat \beta_i  = \hat \beta_i (1) $
($i=1, \ldots , N)$ and  $\hat S_N = \hat S_N (1)$
from the full sample considered in Section \ref{sec_21}
and in Section \ref{sec31}, respectively.

\subsection{Assumptions}\label{sec32}

We now the present the necessary assumptions for our approach. This section could be skipped during a first reading.

In order to state the dependency of our data, we define the notion of $\alpha$-mixing for multidimensional arrays of random variables (see for a more general definition \cite{Doukhan}).

\begin{defi} \label{def_mixing}
Let $\mathcal{M}\subset \Z^d$ for some $d \in \N$ be endowed with the maximum norm $\|\cdot \|_\infty$ and
 $(\Xi_{z})_{z\in \mathcal{M}  }$ be an array of random variables indexed in $\mathcal{M}$.
 For index sets $\mathcal{I}$ and $\mathcal{J}$ we define the distance w.r.t. the maximum norm as
 $$
 \mbox{dist}(\mathcal{I}, \mathcal{J}) = \min(\|i-j\|_\infty: i \in \mathcal{I}, j \in \mathcal{J})
 $$
 and use the notations $\mathcal{F}_\mathcal{I} =\sigma\big(\Xi_{z}: z \in \mathcal{I}\big)$ for the sigma field generated by the random variables $\{ \Xi_{z}: z \in \mathcal{I}\} $.
For $r \in \N_0$ the $r$th $\alpha$-mixing coefficient  is
defined by 
\begin{align*}
\alpha(r)=\sup &\Big\{ |\mathbb{P}(A \cap B)- \mathbb{P}(A) \mathbb{P}(B)|:  A \in \mathcal{F}_\mathcal{I}, ~
B \in \mathcal{F}_\mathcal{J},~
dist(\mathcal{I}, \mathcal{J})\ge r,~ \mathcal{I}, \mathcal{J} \subset \mathcal{M}\Big\}.
\end{align*}

The  array is called $\alpha$-mixing if $\alpha(r) \to 0$ as  $r \to \infty$. 
\end{defi}

In the following, $C$ denotes a generic constant, independent of $N, T$ or any particular individual or time point. Furthermore,  $ x\lor y $ denotes the maximum of two real numbers $x$ and $y$.

\begin{assumption} \label{assumption1} ~~
{\rm
\begin{itemize}
\item[$(N)$] \quad\quad\,\,\, (\textit{Intersect-time relation}) There exists a constant $\eta$, such that   $N/T^\eta \to 0$ and
\begin{equation} \label{Eq_def_eta}
0 \le \eta < 2 - \frac{2}{1+a(M-1)/M}.
\end{equation}
 Here $M$ and $a$ are defined in the assumptions below.
\item[($\beta$)]
 \quad\quad \,\,\,(\textit{Bounded slopes}) \quad$
	\exists C>0: 
	\max_{i} \| \beta_i \| \le C.
$
\item[($\varepsilon$)]
(1) 
\quad (\textit{Exogeneity}) \quad$ \mathbb{E}[\varepsilon_i | \mathbf{X}]=0$ for all $i$.
\\ (2) 
\quad
(\textit{Separable covariance}) \quad For all $i,j$ it holds that 
$
\mathbb{E}[\varepsilon_i \varepsilon_j' | \mathbf{X}]=\sigma_{i,j}\Sigma_T, 
$
where $\Sigma_T \in \mathbb{R}^{T \times T}$ is a temporal covariance matrix and $\sigma_{i,j}$ a spatial factor, with $\sigma_{i,i}>0\, \forall i$. 
\\ (3) 
\quad (\textit{Conditional moments})  \quad
For some $M >2 \,\exists C>0: \max_{i,t }\mathbb{E}[|\varepsilon_{i,t}|^{2M}|\mathbf{X}]\le C.$
\\
(4) \quad (\textit{Stationarity}) 
\quad
Conditionally on the data $\mathbf{X}$, the errors $(\varepsilon_{i,t})_{t=1,...,T} $ are weakly time stationary, with autocovariance function $
\tau( k):=\mathbb{E}[\varepsilon_{i,t}\varepsilon_{i,t+k}|\mathbf{X}]/\sigma_{i,i}$ for $k \in \mathbb{Z}.
$
(5) 
\quad 
(\textit{Mixing})  The array $(x_{i,t}, \varepsilon_{i,t})_{i,t}$ is $\alpha$-mixing in the sense of Definition \ref{def_mixing}, with
$\alpha(r)\le C r^{-a} $
for some 
\begin{equation} \label{hd14}
a>\frac{4(M+2 \lor 2 \eta)}{M-2 \lor 2 \eta}
\end{equation}
with $M$ and $\eta$ as defined above.  
\item[($X$)]
(1)
\quad (\textit{Regressor convergence})  \quad 
There exist positive-definite, non-stochastic matrices $Q_i$, $i=1,...,N$ and a positive constant $C>0$, such that
$$
\max_i  \|\E [X_i'X_i/T]-Q_i \| \le C/\sqrt{T}.
$$
There exist   non-stochastic matrices $U_{i,j}$, $i,j=1,...,N$, such that 
$$
\max_{i,j}  \|\E [X_i' \Sigma_T X_j/T]-U_{i,j} \| =o(1).
$$
(2)
\quad (\textit{Moments}) \quad
There exists a constant  $ C>0$ such that  $\max_{i,t}\mathbb{E}|x_{i,t}|^{2M} \le C$.  
\\ (3)
\quad (\textit{Bounded limits}) 
\quad There exists a constant $ C>0$ such that $  \max_i \|Q_i^{-1}\| \le C $.
\end{itemize}
\item[$(\LR)$] $\quad \,\,\,\,  $ (\textit{Existence of the
long-run variance (LRV)}) The limit
\begin{equation} \label{long_run_variance}
    \LR^2 := \lim_{N \to \infty} \frac{4 }{N} \sum_{i,j=1}^N \sigma_{i,j}[\beta_i-\bar \beta]' Q_i^{-1} U_{i,j} Q_j^{-1} [\beta_j-\bar \beta]
\end{equation}
$\quad \quad \quad \,$  
exists and is strictly positive.
}
\end{assumption}

\noindent
\begin{rem} 
{\rm  We briefly comment on these assumptions:
\begin{itemize} 
\item[$(N)$] With this assumption, we quantify the relation between $N$ and $T$ and ensure that the number of individuals is not too large compared to the temporal observations. $T$ can be smaller compared to $N$ (larger $\eta$) if more moments exist (larger $M$) and if dependence is weak (larger $a$).  Substituting $a$ in \eqref{Eq_def_eta} by its lower bound in \eqref{hd14} shows that a choice $\eta>1$ is always possible. In the case of 
independent data or
exponentially decaying mixing coefficients, we can choose the constant $a$ arbitrarily large and condition \eqref{hd14} boils down to $\eta \in [0,2)$. In particular, our condition $N/T^\eta \to 0$ is then arbitrarily close to the common assumption 
$N/T^2 \to 0$ in Swamy's test for i.i.d. data \citep{Pesaran}.
\item[$(\beta)$] The assumption of bounded slopes is standard in the literature (see, for example, \cite{Pesaran}). It implies that a violation of the hypothesis is asymptotically not caused by just a few outliers in the slopes.
\item[$(\varepsilon)$] Assumption $(1)$ states strong exogeneity of the residuals and is common in the analysis of panel data (see \cite{Pesaran} or \cite{Campello}). A close examination of our proofs in the supplement demonstrates that this assumption can be relaxed to uncorrelatedness of low degree rational functions of the regressors and residuals  (see e.g. Proposition \ref{prop_all_moments}). However, to derive weak invariance principles it seems necessary to consider a stronger exogeneity condition than merely \textit{weak exogeneity} as for instance used in \cite{Blomquist} (e.g. to guarantee that the term $E_1$ defined in Proposition \ref{theorem1} is centered). Assumption $(2)$
permits the existence of complex error structures. 
 The covariance matrix of the  errors $(\varepsilon_1^\prime, \ldots ,\varepsilon_N^\prime)^\prime$  is the Kronecker product $\Sigma_N \otimes \Sigma_T$, where the first factor captures intersectional and the second factor temporal dependence.  This condition is much weaker than most assumptions made 
in the literature, which  are special cases of this setting.  For example, the traditional assumption of homoscedasticity and independence, as needed for the $F$-test is captured by $\sigma^2 \cdot I_N \otimes I_T$, where $\sigma^2>0$ is the variance and  $I_N, I_T$ are the identity matrices of dimension $N$ and $T$, respectively. Independent heteroscedastic errors  (see \cite{Swamy}, \cite{Pesaran}) are similarly captured by $D \otimes I_T$, where $D=diag(\sigma_1^2,...,\sigma_N^2)$ represents the  heteroscedasticity-matrix. The case of intersectional dependence only, as represented in Zellner's SURE approach, is incorporated by $\Sigma_N \otimes I_T$ \citep{Zellner} (this is also closely related to \cite{ando2015}). Allowing both factors to differ from the identity matrix (simultaneous temporal and intersectional dependence) obviously encompasses much richer models than have been treated before, particularly in high dimensional panels. 
In condition $(3)$ we assume existence of some moments, but we do not need Gaussian errors, as common in traditional models. In the time component, we assume conditional (weak) stationarity, which in a well specified model is sensible. In contrast, no specific covariance structure is assumed intersectionally, where it is not easy to justify. Obviously, 
 conditions on the strength of the 
dependence are required, which is done by $(5)$.  The exponent  $a$ 
 in the polynomial decay 
 of the mixing coefficients  can be  smaller (stronger dependence) if $M$ is larger (stronger moment assumption) and to some degree, if $T$ is larger compared to $N$ (smaller $\eta$). 
Notice that, in principle, the mixing condition in $(5)$  implies a one dimensional ordering of the intersectional data. However, this 
is only made for the sake of a transparent presentation.
Our theory is not confined to one dimensional orderings  and can be adapted to multidimensional intersects by considering $\alpha$-mixing for higher dimensions (recall Definition \ref{def_mixing}). 
Importantly, the test statistics defined below  do not have to be changed in this case and all subsequent results remain correct.
\item[$(X)$] Assumption $(1)$ (in a weaker form) is widely prevalent in the literature (see among others \cite{Pesaran, Blomquist, breitung2016, Campello}), where it is assumed that each individual product $X_i'X_i/T$ has a positive definite limit. We need a convergence rate specifically in the proofs of our weak invariance principles to control certain bias terms. It should be noted that such conditions (even of linear bias) are common in statistical theory. The assumption on the cross term $X_j' \Sigma_T X_i$ is new and can be dropped if - as in most traditional works - the errors are intersectionally uncorrelated ($\sigma_{i,j}=0$ if $i \neq j$). Otherwise, it is important for the convergence of the long-run variance of the statistic (see also Assumption $(\LR)$). Condition $(3)$ ensures that asymptotically (as $N \to \infty$) it is not arbitrarily difficult for an OLS estimator $\hat \beta_i$ to recover the true slope $\beta_i$. 
This assumption is not necessary for traditional test statistics, such as $\hat S_{Swam}$ \eqref{Swamy}, because these heterogeneity measures consider the  differences $\|X_i(\hat \beta_i - \hat \beta_{pool})\|^2$ instead of the raw comparison $\|\hat \beta_i - \hat \beta\|^2$.
\item[$(\LR)$] 
The assumption of a positive  long-run variance  $\LR^2$  is standard in time series analysis and has also been used for slope homogeneity tests in the presence of time dependence (see e.g. Assumption  ERR. iii) in \cite{Blomquist}). In classical scenarios of  temporal and cross-sectional independence,  it  follows from regularity assumptions on the matrices $Q_i$ (see, for example, Assumptions 1, (i) and 2 in \cite{Pesaran}).
\end{itemize}
}
\end{rem}

\subsection{Weak Convergence of the sequential process}\label{sec33} 

In this section, we derive a weak invariance principle for an adequately standardized version of the difference $\hat{S}_N(\kappa)-S_N$ (see \eqref{Eq_weak_convergence}). We prepare this result by three technical propositions dealing with stochastic linearization, asymptotic de-biasing and finally weak convergence. 
We begin by decomposing 
   (for a proof see Section \ref{seca1} in the supplement)
\begin{align} \label{linearization} 
     & \kappa \sqrt{ N   T  }(\hat S_N(\kappa)- S_N) = E_1(\kappa)+E_2(\kappa)+E_3(\kappa), \end{align}
   where
     \begin{align}
     \label{e1}
E_1(\kappa):=&\kappa\sqrt{\frac{T}{N}}\sum_{i=1}^{N}  2\varepsilon_{i}(\kappa)' X_i(\kappa) \big[ X_i(\kappa)'X_i(\kappa)\big]^{-1} [\beta_i-\bar \beta]~, 
\\
     \label{e2}
E_2(\kappa):=&\kappa  \sqrt{\frac{T}{N}} \sum_{i=1}^{ N  }  \varepsilon_i(\kappa)' X_i(\kappa) \big[ X_i(\kappa)'X_i(\kappa)\big]^{-2} X_i(\kappa)' \varepsilon_i(\kappa)~,\\
     \label{e3}
E_3(\kappa):=& -\kappa  \sqrt{\frac{T}{N}} \frac{1}{N} \sum_{i,j=1}^{N} \varepsilon_{i}'(\kappa)X_i(\kappa) \big[ X_i(\kappa)'X_i(\kappa)\big]^{-1} \big[ X_j(\kappa)'X_j(\kappa)\big]^{-1} X_j(\kappa)' \varepsilon_{j}(\kappa). 
\end{align}
In the above expressions $X_i(\kappa)$ is the regressor matrix given in  \eqref{X-matrix} and the  sequential error vector $\varepsilon_i(\kappa)$ is defined as
$
\varepsilon_i(\kappa):=( \varepsilon_{i,1},..., \varepsilon_{i,\lfloor \kappa T \rfloor},0,...,0)' \in \R^T.
$
In the following we show that, after subtracting a bias, $E_1$ is the leading term of
the decomposition \eqref{linearization}.

\begin{prop} \label{theorem1}
Under the Assumptions $(N), (\beta), (\varepsilon), (X)$ it follows  that 
\begin{align} \label{h1} 
& \sup_{\kappa \in I}| E_2(\kappa)-\E[E_2(\kappa)|\mathbf{X}] | =o_P(1)~, \\
& \sup_{\kappa \in I}| E_3(\kappa) | =o_P(1) ~.
\label{h1a} 
\end{align}
\end{prop}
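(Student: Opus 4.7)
The plan is to treat the two terms by quite different means. For $E_2$ I would rely on the fact that each summand is a quadratic form in $\varepsilon_i$ with a very small spectral norm kernel, while for $E_3$ the double sum factors exactly as the squared norm of the mean OLS deviation, which is controlled by a partial-sum maximal inequality.

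For the first statement, set $A_i(\kappa):=X_i(\kappa)[X_i(\kappa)'X_i(\kappa)]^{-2}X_i(\kappa)'$ and
$$Z_i(\kappa):=\varepsilon_i(\kappa)'A_i(\kappa)\varepsilon_i(\kappa)-\sigma_{ii}\,\Tr(A_i(\kappa)\Sigma_T),$$
so that $E_2(\kappa)-\E[E_2(\kappa)|\mathbf{X}]=\kappa\sqrt{T/N}\sum_{i=1}^N Z_i(\kappa)$, with each $Z_i(\kappa)$ centered given $\mathbf{X}$. An SVD argument combined with $(X)(1)$ and $(X)(3)$ gives $\|A_i(\kappa)\|_F\le C/T$ uniformly in $i$ and $\kappa\in I$. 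Then a direct quadratic-form variance calculation under the separable covariance structure $(\varepsilon)(2)$ yields $|\Cov(Z_i(\kappa),Z_j(\kappa)\mid\mathbf{X})|\le C\sigma_{ij}^2/T^2$, and the mixing hypothesis $(\varepsilon)(5)$ implies $\sum_{i,j}\sigma_{ij}^2=O(N)$, whence $\Var(E_2(\kappa)-\E[E_2(\kappa)|\mathbf{X}]\mid\mathbf{X})=O(1/T)$ pointwise. To upgrade to a uniform bound I would exploit $(\varepsilon)(3)$: a Rosenthal-type moment inequality for sums of weakly dependent quadratic forms yields a conditional $2M$-th moment of order $T^{-M}$. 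Since $\kappa\mapsto X_i(\kappa)$ is piecewise constant with at most $T$ jumps on $I$, a union bound combined with Chebyshev in $2M$-th moment gives
$$\PP\Big(\sup_{\kappa\in I}|E_2(\kappa)-\E[E_2(\kappa)|\mathbf{X}]|>\delta\Big)\le CT\cdot T^{-M}\delta^{-2M}\to 0,$$
because $M>2$. This establishes \eqref{h1}.

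For the second statement, the crucial observation is algebraic: writing $w_i(\kappa):=[X_i(\kappa)'X_i(\kappa)]^{-1}X_i(\kappa)'\varepsilon_i(\kappa)=\hat\beta_i(\kappa)-\beta_i$, the inner expression in $E_3(\kappa)$ equals $w_i(\kappa)'w_j(\kappa)$, so
$$E_3(\kappa)=-\kappa\sqrt{T/N}\cdot\frac{1}{N}\Big\|\sum_{i=1}^N w_i(\kappa)\Big\|^2=-\kappa\sqrt{NT}\,\|\hat\beta(\kappa)-\bar\beta\|^2.$$
Hence it suffices to prove $\sup_{\kappa\in I}\|\hat\beta(\kappa)-\bar\beta\|=O_P\big((NT)^{-1/2}\big)$ up to at most logarithmic factors. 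To obtain this, linearize $[X_i(\kappa)'X_i(\kappa)]^{-1}=(\kappa T)^{-1}Q_i^{-1}+R_i(\kappa)$ using $(X)(1)$, giving the main term $\frac{1}{N\kappa T}\sum_{i=1}^N Q_i^{-1}\sum_{t\le\lfloor\kappa T\rfloor}x_{i,t}\varepsilon_{i,t}$ whose pointwise variance is $O((NT)^{-1})$ by exogeneity $(\varepsilon)(1)$, mixing $(\varepsilon)(5)$ and $(X)(3)$. A standard maximal inequality for partial sums of mixing arrays, combined with the $2M$-th moment control from $(\varepsilon)(3)$, yields the required uniform bound; the remainder coming from $R_i(\kappa)$ is of smaller order thanks to the $1/\sqrt{T}$ convergence rate in $(X)(1)$. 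Consequently $\sup_\kappa|E_3(\kappa)|=O_P((NT)^{-1/2})=o_P(1)$.

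I expect the main obstacle to be the uniform-in-$\kappa$ control rather than the pointwise variance calculations. Bounds on conditional variances are fairly standard Gaussian-like computations for quadratic forms with separable covariance, but passing to the supremum requires simultaneously balancing the moment exponent $M$ from $(\varepsilon)(3)$, the panel ratio from $(N)$, and the mixing rate $a$ from $(\varepsilon)(5)$ so that the union bound in $T$ grid points is summable. The clean factorization $E_3=-\kappa\sqrt{NT}\|\hat\beta(\kappa)-\bar\beta\|^2$ sidesteps the same bookkeeping for $E_3$ and reduces it to a genuine partial-sum maximal inequality, which is arguably the technically heaviest single ingredient in the proof.
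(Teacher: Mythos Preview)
Your treatment of $E_3$ is essentially the paper's: both exploit the squared-norm factorization, linearize by replacing $[X_i(\kappa)'X_i(\kappa)]^{-1}$ with $Q_i^{-1}$, and control the resulting partial-sum process via a maximal inequality for the mixing array.

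For $E_2$ the route is different and, as written, contains gaps. First, $Z_i(\kappa)$ is \emph{quadratic} in $\varepsilon_i$, so its $2M$-th moment requires $4M$-th moments of the errors, whereas Assumption $(\varepsilon)(3)$ only supplies $2M$. This is repairable---using the $M$-th moment instead still gives $T\cdot T^{-M/2}\to 0$ from $M>2$---but the bound you state is not available. Second, the inequality $|\Cov(Z_i,Z_j\mid\mathbf{X})|\le C\sigma_{ij}^2/T^2$ implicitly relies on an Isserlis-type fourth-moment identity; the separable covariance hypothesis $(\varepsilon)(2)$ fixes only second moments, so for non-Gaussian errors the mixed fourth moments $\E[\varepsilon_{i,s}\varepsilon_{i,t}\varepsilon_{j,u}\varepsilon_{j,v}\mid\mathbf{X}]$ are not determined and must be controlled through the mixing assumption directly, after which the neat $\sigma_{ij}^2$ factorization no longer emerges. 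Third, $\|A_i(\kappa)\|_F\le C/T$ is not true almost surely uniformly in $i,\kappa$; it holds only on the high-probability event where all $[X_i(\kappa)'X_i(\kappa)/T]^{-1}$ are bounded (the set $\mathscr{B}^c$ in the paper), and this restriction has to be made explicit before any conditional-moment calculation. The paper sidesteps all three issues simultaneously by first replacing the random inverses with the deterministic $Q_i^{-1}$ (Proposition~\ref{prop_replacement_E}); the resulting process $\breve E_2(\kappa)$ is then a genuine partial sum in the mixing array $(x_{i,t},\varepsilon_{i,t})$, so a second-moment based M\'oricz-type maximal inequality (Proposition~\ref{prop_all_moments}, part vii)) suffices and no higher moments of quadratic forms, Gaussian fourth-moment structure, or union bound are needed.
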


The three terms in the decomposition \eqref{linearization} can be understood as follows:  $E_1$ is a sum of centered random variables and if dependence is not too strong, can be shown to be asymptotically normal. $E_2$ is not centered and constitutes the main source of bias in our statistic. It corresponds to the leading term in classical slope homogeneity tests (see e.g. \cite{Pesaran}). However, if centered $E_2$ is asymptotically negligible, which motivates the bias correction in the next step. Finally, the term $E_3$ converges to the squared norm of a centered process divided by $\sqrt{NT}$ and is thus negligible. 

We begin with  the problem of centering $E_2$. In Proposition \ref{theorem1}, we have seen that $E_2-\E[E_2|\mathbf{X}]$ is of order $o_P(1)$. Furthermore, it can be shown that 
$
\E[E_2|\mathbf{X}]=\mathcal{O}(\sqrt{N/T}),
$
which if $\eta<1$ implies $\E[E_2|\mathbf{X}]=o_P(1)$. In the case $\eta \ge 1$ however $\E[E_2|\mathbf{X}]$ is non-negligible and hence a bias correction for
the statistic  $\kappa \sqrt{ N   T  }(\hat S_N(\kappa)- S_N)$ is necessary. Notice that even for $\eta<1$ such a correction may be advisable if $T$ is not large compared to $N$ (which is rarely the case in economic panels). 
Since the true bias $\E[E_2|\mathbf{X}]$ is unknown, we have to estimate it.  
For this purpose, we denote for  $1\leq A \leq  B \leq T$  
by  
$v_{A:B}=(v_A,v_{A+1},...,v_B)'$ the sub-vector 
of $v=(v_1, v_2,...,v_T)'$. Therewith we 
define the $T \times T$ matrix
\begin{equation} \label{def_hat_Sigma_i}
\hat \Sigma_i(\kappa, b) := (\hat \xi_i(|s-t|, \kappa) \mathbbm{1}\{|s-t| < b\})_{1\le s,t\le T },
\end{equation}
where  the entry
\begin{eqnarray} \label{def_hat_xi}
\hat \xi_i(h, \kappa) &:=& \frac{(y_i(\kappa)-X_i(\kappa)'\hat \beta_i )_{1:\lfloor T\kappa\rfloor-h}' (y_i(\kappa)-X_i(\kappa)'\hat \beta_i )_{h+1:\lfloor T\kappa\rfloor}}{\lfloor T\kappa\rfloor -h-K} \\
\nonumber &=& {1 \over \lfloor T\kappa\rfloor -h-K} \sum_{t=1}^{\lfloor T\kappa\rfloor -h} (y_i(\kappa)-X_i(\kappa)'\hat \beta_i )_t (y_i(\kappa)-X_i(\kappa)'\hat \beta_i )_{t+h}.
\end{eqnarray}
is  the sequential  estimator  of the autocovariance of lag $h$.
Here  $b$ is a regularization  parameter (all $k$-diagonals with $k>b$ are set equal to $0$).
Note that banded estimates (with an increasing size of the band)
of high-dimensional autocovariance matrices are quite common in this context
and motivated by the fact that the autocovariances decay, when moving away from the diagonal (see, for example, \cite{wu2009banding,Mcmurpol2010} among others). We also point out that $b$ has to be chosen such that the denominator is non-degenerate for all $h<b$ (which is asymptotically always true given our subsequent assumptions on $b$).   
 We can now define the estimate for $\E[E_2|\mathbf{X}]$ by  
\begin{equation} \label{bias_estimate}
\hat B_N(\kappa):=\kappa \sqrt{\frac{T}{N}}\sum_{i=1}^N\Tr[\hat \Sigma_i(\kappa, b) X_i(\kappa) \big[ X_i(\kappa)'X_i(\kappa)\big]^{-2} X_i(\kappa)' ],
\end{equation}
and the following result shows its  (uniform) consistency.

\begin{prop} \label{theorem4} 
Assume that the conditions $(N), (\beta), (\varepsilon), (X)$ hold, and that the parameter $b$  in \eqref{def_hat_Sigma_i}  satisfies
 $b=\mathcal{O}(T^{\gamma})$ with 
$$
\gamma \in \Big(\frac{\eta-1}{2(a\frac{M-1}{M}-1)} ,  \frac{1-\eta/2}{2}\Big).\\[2ex]
$$
Then $\hat B_N(\kappa)-\E[E_2(\kappa)|\mathbf{X}]=o_P(1)$ uniformly in $\kappa \in I $.
\end{prop}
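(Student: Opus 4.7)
The plan is to express both $\hat B_N(\kappa)$ and $\E[E_2(\kappa)|\mathbf{X}]$ as traces against the common matrix $A_i(\kappa):=X_i(\kappa)[X_i(\kappa)'X_i(\kappa)]^{-2}X_i(\kappa)'$, so that their difference becomes the trace of the error in the banded covariance estimate. By Assumption $(\varepsilon)$ the conditional covariance $\E[\varepsilon_i(\kappa)\varepsilon_i(\kappa)'|\mathbf{X}]$ equals $\sigma_{i,i}\Sigma_T$ on the active block $\{1,\ldots,\lfloor T\kappa\rfloor\}^2$ and vanishes elsewhere; since the padded rows of $X_i(\kappa)$ vanish as well, I may harmlessly extend this covariance to the full matrix $\sigma_{i,i}\Sigma_T$ inside the trace. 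This gives
\begin{equation*}
\hat B_N(\kappa)-\E[E_2(\kappa)|\mathbf{X}] = \kappa\sqrt{T/N}\sum_{i=1}^N \Tr\Big(\big[\hat\Sigma_i(\kappa,b)-\sigma_{i,i}\Sigma_T\big]A_i(\kappa)\Big) =: T_1(\kappa)+T_2(\kappa),
\end{equation*}
where $T_1(\kappa)$ collects the contribution of lags $|s-t|\geq b$ (truncation from banding) and $T_2(\kappa)$ that of lags $|s-t|<b$ (estimation error of the autocovariances).

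To bound $T_1$, I would first apply Davydov's covariance inequality together with Assumption $(\varepsilon)$ to deduce the polynomial decay $|\tau(h)|\leq C h^{-a(M-1)/M}$, whence $\sum_{h\geq b}|\tau(h)|\leq C b^{-(a(M-1)/M-1)}$. Second, positive semidefiniteness of $A_i(\kappa)$ gives the PSD Cauchy--Schwarz bound $|(A_i(\kappa))_{s,t}|\leq \sqrt{(A_i(\kappa))_{s,s}(A_i(\kappa))_{t,t}}$, from which $\sum_{s,t:|s-t|=h}|(A_i(\kappa))_{s,t}|\leq 2\Tr(A_i(\kappa))=2\Tr([X_i(\kappa)'X_i(\kappa)]^{-1})=O_P(1/T)$ uniformly in $\kappa\in I$ and $i$, using Assumption $(X)$. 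Combining the two, $\sup_\kappa|T_1(\kappa)|=O_P\big(\sqrt{N/T}\cdot b^{-(a(M-1)/M-1)}\big)$, which is $o_P(1)$ exactly under the lower bound on $\gamma$ stated in the proposition.

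For $T_2$, I would decompose $\hat\xi_i(h,\kappa)-\sigma_{i,i}\tau(h)=[\tilde\xi_i(h,\kappa)-\sigma_{i,i}\tau(h)]+R_i(h,\kappa)$, where
\begin{equation*}
\tilde\xi_i(h,\kappa):=\frac{1}{\lfloor T\kappa\rfloor-h-K}\sum_{t=1}^{\lfloor T\kappa\rfloor-h}\varepsilon_{i,t}\varepsilon_{i,t+h}
\end{equation*}
is the infeasible analogue based on the true errors and $R_i(h,\kappa)$ collects the three cross-terms produced by the expansion $y_{i,t}-x_{i,t}'\hat\beta_i=\varepsilon_{i,t}-x_{i,t}'(\hat\beta_i-\beta_i)$. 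The OLS consistency rate $\|\hat\beta_i-\beta_i\|=O_P(T^{-1/2})$, combined with the bounded regressor and error moments of Assumptions $(X)$ and $(\varepsilon)$, shows that each cross-term contributes $O_P(1/T)$ to $R_i$ uniformly in $(i,h,\kappa)$. For the infeasible part, $\tilde\xi_i(h,\kappa)-\sigma_{i,i}\tau(h)$ is a centered partial-sum average of an $\alpha$-mixing sequence; a moment inequality for such sums together with a chaining argument over $\kappa\in I$ yields $\sup_\kappa|\tilde\xi_i(h,\kappa)-\sigma_{i,i}\tau(h)|=O_P(T^{-1/2})$ up to logarithmic factors, uniformly in $i$ and $h<b$. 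Plugging these bounds into the trace representation of $T_2$ and re-using $\sum_{s,t:|s-t|=h}|(A_i(\kappa))_{s,t}|=O_P(1/T)$ from the previous step gives $\sup_\kappa|T_2(\kappa)|$ of order $b\sqrt{N}/T$ up to logarithmic factors, which the upper bound on $\gamma$ in the proposition forces to be $o_P(1)$.

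The main obstacle is the uniform control of the residual-based autocovariances in $(i,h,\kappa)$: the moment and maximal inequalities for $\alpha$-mixing arrays have to be carefully balanced against the OLS plug-in error, so that the moment exponent $M$, the mixing exponent $a$, and the intersection-time ratio $\eta$ together leave enough room for both $T_1$ and $T_2$ to vanish at a common choice of $\gamma$. The window on $\gamma$ in the proposition is engineered precisely to make this two-sided squeeze feasible.
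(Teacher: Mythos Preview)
Your proposal is correct and follows essentially the same decomposition as the paper: truncation bias $T_1$ (the paper's $R^{(2)}$) is handled via Davydov's bound on $\tau(h)$ together with the $O(1/T)$ trace of $A_i(\kappa)$, and the estimation error $T_2$ (the paper's $R^{(1)}+R^{(3)}$) is split into an infeasible autocovariance piece and an OLS plug-in remainder. The only substantive difference is that the paper picks up an extra factor $b$ in the infeasible-autocovariance bound---because the mixing constant of the lagged product $\varepsilon_{i,t}\varepsilon_{i,t+h}$ grows linearly in the lag $h$---yielding the rate $b^2\sqrt{N}/T^{1-\zeta}$ rather than your $b\sqrt{N}/T$, and it is precisely this extra $b$ that produces the stated upper bound $\gamma<(1-\eta/2)/2$ rather than the looser $\gamma<1-\eta/2$ your sketch would suggest.
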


Notice that the interval for $\gamma$ is well-defined, if  Assumption $(N)$ holds.
Finally, we consider the leading term $E_1$ in Proposition \ref{theorem1} and show its convergence to a Gaussian process. 

\begin{prop} \label{theorem5}
Suppose that the Assumptions $(N), (\beta), (\varepsilon), (X)$ and $(\lambda)$ hold.
Then the weak convergence
$
\{E_1(\kappa)\}_{\kappa \in I} \to\LR \{\mathbb{B}(\kappa)\}_{\kappa \in I} 
$
holds, where $\mathbb{B}$ is a standard Brownian motion and $\lambda^2$ is defined in \eqref{long_run_variance}.
\end{prop}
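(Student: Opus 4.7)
The strategy is to prove (i) convergence of finite-dimensional distributions and (ii) tightness of the process $\{E_1(\kappa)\}_{\kappa\in I}$ in the Skorokhod space $D(I)$, which together imply the claimed weak convergence to $\lambda\mathbb{B}$.

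\textbf{Step 1 (Stochastic linearization).} Using the Neumann-type expansion $[A+B]^{-1}=A^{-1}-A^{-1}BA^{-1}+\ldots$ around $A=\kappa TQ_i$ with perturbation $B=X_i(\kappa)'X_i(\kappa)-\kappa TQ_i$, and invoking the moment condition $(X)(2)$, the mixing assumption $(\varepsilon)(5)$, and the uniform bound $\max_i\|Q_i^{-1}\|\le C$ from $(X)(3)$, I would show uniformly in $i$ and $\kappa\in I$ that
$$\kappa T[X_i(\kappa)'X_i(\kappa)]^{-1}=Q_i^{-1}+r_i(\kappa),\qquad \max_i\sup_{\kappa\in I}\|r_i(\kappa)\|=o_P(N^{-1/2}).$$
Substituting into $E_1(\kappa)$ and using $\max_i\|\beta_i\|\le C$ from $(\beta)$, the remainder contributes $o_P(1)$ uniformly. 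The problem reduces to the linear partial-sum process
$$\tilde E_1(\kappa) := \frac{1}{\sqrt{NT}}\sum_{i=1}^{N}\sum_{t=1}^{\lfloor \kappa T\rfloor} Y_{i,t},\qquad Y_{i,t}:=2\,\varepsilon_{i,t}\,x_{i,t}'Q_i^{-1}[\beta_i-\bar\beta].$$

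\textbf{Step 2 (Finite-dimensional convergence).} By strong exogeneity $(\varepsilon)(1)$ the variables $Y_{i,t}$ are centered, and the assumptions $(\varepsilon)(3)$, $(X)(2)$, $(\beta)$, $(X)(3)$ give uniform higher-moment bounds on $Y_{i,t}$. For any finite grid $p\le\kappa_1<\ldots<\kappa_n\le 1$ and coefficients $(a_l)$, the Cram\'er--Wold device reduces matters to the scalar statistic $(NT)^{-1/2}\sum_{i,t}w_t Y_{i,t}$ with bounded deterministic weights $w_t=\sum_l a_l\mathbbm{1}\{t\le\lfloor\kappa_l T\rfloor\}$. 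I would then apply a CLT for $\alpha$-mixing doubly indexed arrays---Assumption $(\varepsilon)(5)$ with rate \eqref{hd14} together with the intersect-time relation $(N)$ supplies the summability and Bernstein-block approximations needed. The asymptotic covariance is
$$\Cov(\tilde E_1(\kappa_l),\tilde E_1(\kappa_m))=\frac{4}{NT}\sum_{i,j}\sigma_{i,j}[\beta_i-\bar\beta]'Q_i^{-1}\E[X_i(\kappa_l)'\Sigma_T X_j(\kappa_m)]Q_j^{-1}[\beta_j-\bar\beta],$$
and stationarity $(\varepsilon)(4)$ together with a sequential version of $(X)(1)$ yields $\E[X_i(\kappa_l)'\Sigma_T X_j(\kappa_m)]/T\to\min(\kappa_l,\kappa_m)U_{i,j}$ (boundary terms being of order $1/T$). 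Combined with Assumption $(\lambda)$ this gives the limit $\min(\kappa_l,\kappa_m)\lambda^2$, matching the covariance of $\lambda\mathbb{B}$.

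\textbf{Step 3 (Tightness).} For $p\le\kappa<\kappa'\le 1$, the increment $\tilde E_1(\kappa')-\tilde E_1(\kappa)$ is a partial sum of $Y_{i,t}$ over the time block $\{\lfloor\kappa T\rfloor+1,\ldots,\lfloor\kappa' T\rfloor\}$. A Rosenthal-type moment inequality for $\alpha$-mixing arrays, available under the polynomial mixing rate \eqref{hd14} and the $2M$-th moment conditions $(\varepsilon)(3)$, $(X)(2)$, should deliver
$$\E\bigl|\tilde E_1(\kappa')-\tilde E_1(\kappa)\bigr|^{2q}\le C\,|\kappa'-\kappa|^{1+\delta}$$
for suitable $q$ and $\delta>0$. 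Billingsley's moment criterion then gives tightness on $D(I)$, and combined with Steps 1--2 the theorem follows.

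\textbf{Main obstacle.} The most delicate point is Step 3: the Rosenthal-type estimate for the doubly indexed $\alpha$-mixing array $(Y_{i,t})$ must be sharp enough to yield an exponent strictly above one on $|\kappa'-\kappa|$, and this is precisely where the sharp interplay between $M$, $a$ and $\eta$ encoded in $(N)$ and \eqref{hd14} is used. A secondary technical difficulty is the uniform-in-$\kappa$ control of the linearization error in Step 1, which requires a maximal inequality for the $\kappa$-indexed matrix perturbations $X_i(\kappa)'X_i(\kappa)-\kappa TQ_i$.
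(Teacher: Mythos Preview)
Your Step 1 has a genuine gap. The claim $\max_i\sup_{\kappa\in I}\|r_i(\kappa)\| = o_P(N^{-1/2})$ cannot hold under the stated assumptions once $\eta \ge 1$: Proposition~\ref{proposition_inverse} gives only $\E[\mathbbm{1}_{\mathscr{B}^c}\sup_\kappa\|r_i(\kappa)\|^2]\le C/T$, and even a termwise triangle inequality on the remainder $\frac{2}{\sqrt{NT}}\sum_i\varepsilon_i(\kappa)'X_i(\kappa)r_i(\kappa)[\beta_i-\bar\beta]$ yields at best $O_P(\sqrt{N/T})$, which is $o_P(1)$ only for $\eta<1$. The paper (Proposition~\ref{prop_replacement_E}) therefore does \emph{not} absorb the linearization error by a pointwise bound. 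Instead it carries the first-order Neumann term
\[
D_1(\kappa)=\frac{2}{\sqrt{NT}}\sum_{i}\varepsilon_i(\kappa)'X_i(\kappa)\,Q_i^{-1}\bigl\{Q_i-X_i(\kappa)'X_i(\kappa)/(T\kappa)\bigr\}Q_i^{-1}[\beta_i-\bar\beta]\,\mathbbm{1}_{\mathscr{B}^c}
\]
explicitly and proves $\sup_\kappa|D_1(\kappa)|=o_P(1)$ by a separate maximal-inequality argument (Section~\ref{secC3}), exploiting the mixing structure of the \emph{product} $\varepsilon_i'X_iQ_i^{-1}H_iQ_i^{-1}[\beta_i-\bar\beta]$ rather than merely bounding $\|H_i\|$. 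What you call a ``secondary technical difficulty'' is in fact the crux of the linearization in the regime $1\le\eta<2$.

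Your Steps 2--3 form a legitimate alternative to the paper's route, which is more direct: after linearization the paper writes $\tilde E_1(\kappa)=(NT)^{-1/2}\sum_{t\le\lfloor\kappa T\rfloor}Z_t$ with $Z_t:=2N^{-1/2}\sum_i\varepsilon_{i,t}x_{i,t}'Q_i^{-1}[\beta_i-\bar\beta]$, checks that the one-dimensional array $(Z_t)_t$ has uniformly bounded fourth moments (Proposition~\ref{moments_for_mixing} with $d=1$ applied to the cross-sectional sum), and then invokes a single functional CLT for $\alpha$-mixing triangular arrays (Theorem~2.7 in \cite{hafouta2021}) to obtain the Prokhorov-metric bound $d_P(\tilde E_1/\tilde\lambda_N,\mathbb{B})\le C\delta_N$. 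This bypasses the need to verify the sequential identity $\E[X_i(\kappa_l)'\Sigma_TX_j(\kappa_m)]/T\to\min(\kappa_l,\kappa_m)U_{i,j}$ (which is not among the assumptions) and, more importantly, delivers the quantitative rate \eqref{Eq_unif_invariance} that is subsequently required for the uniform-in-$\bsec$ arguments in the proof of Theorem~\ref{theorem_test}.
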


We can now combine Propositions \ref{theorem1} - \ref{theorem5}, to 
obtain a weak invariance principle for the 
the bias corrected version 
\begin{equation} \label{bias_correction}
\tilde{S}_N(\kappa):= \hat S_N(\kappa)-\hat B_N (\kappa),
\end{equation} 
of $\hat S_N(\kappa)$ 
(here $\hat B_N$ is  defined in \eqref{bias_estimate}). To be precise, it follows under the assumptions of Proposition \ref{theorem5}, that
\begin{equation} \label{Eq_weak_convergence}
\{ \kappa \sqrt{ N   T  }(\tilde S_N(\kappa)- S_N) \}_{\kappa \in I} \to  \{\LR \mathbb{B}(\kappa)\}_{\kappa \in I},
\end{equation}
where $\lambda^2$ is the long-run variance defined in \eqref{long_run_variance}.
 In the next 
  section, we use  this result  to 
  construct a pivotal test
  statistic for the hypothesis \eqref{hypothesis}.

\subsection{A  pivotal Test for Approximate Slope Homogeneity} \label{sec34} 

We now present the statistical consequences of the results derived in the previous section. In particular,  we construct a test  statistic, that asymptotically cancels  the long-run variance
$\LR^2$ out. For this purpose, we employ the principle of self-normalization and adapt techniques introduced in  \cite{DetteKokotVolgushev} to the high dimensional scenario. We begin by defining the self-normalizing factor 
\begin{equation}
\label{def_denominator}
    \hat V_N := \Big\{\int_I\kappa^4 (\tilde S_N(\kappa)- \tilde S_N)^2d \nu (\kappa) \Big\}^{1/2},
\end{equation}
 where $I=[p,1]$ is the indexing interval and $\nu $ some probability measure on $I$, which does not concentrate at the point $1$ (i.e $\nu (\{1\}) < 1 $).
  $\nu$ is user determined, and typical choices include the uniform measure on a finite number of points, such as $\{1/K, 2/K,...,(K-1)/K, 1\} \cap I$. The size of $K$ has little influence on the performance of the subsequent statistic, but small choices evidently yield computational advantages (as $K$ determines the number of evaluations of $\tilde S_N(\kappa)$). In particular, a small $K$ makes self-normalization computationally much more parsimonious than bootstrap alternatives.\\
With $\hat V_N$ as denominator, we now define the self-normalized statistic 
\begin{align} \label{hat_W}
\hat W_N:= & \frac{\tilde S_N- \Delta}{\hat V_N}.
\end{align}
By inspection of the numerator, we see that large (positive) values of $\hat W_N $ suggest that $S_N >\Delta$, i.e. that the hypothesis \eqref{hypothesis} does not hold,  whereas small values suggest that $S_N \le \Delta$. To understand the asymptotic
properties of $\hat W_N$ more properly, we note that
the weak convergence in \eqref{Eq_weak_convergence}, together with the continuous mapping theorem, implies 
 \begin{equation} \label{W}
 \frac{\tilde S_N- S_N}{\hat V_N}
  \to W := \frac{\mathbb{B}(1)}{\big\{ \int_I \kappa^2 (\mathbb{B}(\kappa)-\kappa\mathbb{B}(1))^2 d\nu(\kappa) \big\}^{1/2}},
\end{equation}
where we have used that the left-hand side of \eqref{W} is a continuous transformation of  the process $\{\kappa \sqrt{ N   T  }(\tilde S_N(\kappa)- S_N)\}_{\kappa \in I}$. Note that the distribution of $W$ is non-normal, but does not contain any nuisance parameters. In particular, its quantiles are truly pivotal, easy to simulate and only depend on the user-determined measure $\nu$. We now consider the decomposition
\begin{equation} \label{decomposition}
\hat W_N = \frac{\tilde S_N- S_N}{\hat V_N} +\frac{S_N- \Delta}{\hat V_N}.
\end{equation}
The first term on the right converges weakly to $W$ (according to \eqref{W}), while the second one is non-positive under the hypothesis ($S_N \le \Delta$) and positive under the alternative ($S_N >\Delta$). This motivates the test decision, to reject the hypothesis \eqref{hypothesis}, if 
\begin{equation} \label{test_decision}
\hat W_N >q_{1-\alpha},
\end{equation}
where $q_{1-\alpha}$ is the upper $\alpha$ quantile of 
the distribution of $W$. Combining \eqref{W} and  \eqref{decomposition} implies that 
$
\limsup_{N \to \infty}  \mathbb{P}(\hat W_N >q_{1-\alpha}) \le \alpha
$
if $S_N \le \Delta$ for all $N$ 
(level $\alpha$) and also
$
\lim_{N \to \infty}  \mathbb{P}(\hat W_N >q_{1-\alpha}) =1
$ for fixed $\epsilon>0$ 
if $S_N >\Delta+\epsilon$ for all $N$ (consistency). 
It is however possible to sharpen these results substantially by considering uniform level $\alpha$ and consistency against classes of local alternatives. We pursue this generalization in the remainder of this section.

Let us define for two fixed but arbitrary constants $C, c>0$   the class
\begin{equation} \label{Eq_def_class_T}
    \mathcal{T}:= \big\{\bsec =(\beta_n)_{n \in \N}: \sup_n \|\beta_n \| \le C, \inf_N\lambda_N^2(\bsec )>c \big\},
\end{equation}
which consists of all sequences of bounded slopes, for which the variance
\begin{equation} \label{ Eq_def_lambdauni}
\LR^2_N(\bsec ) :=  \frac{4 }{N} \sum_{i,j=1}^N \sigma_{i,j}[\beta_i-\bar \beta]' Q_i^{-1} U_{i,j} Q_j^{-1} [\beta_j-\bar \beta]
\end{equation}
is bounded away from $0$ (these conditions are uniform versions of Assumptions $(\beta)$ and $(\lambda)$). Denoting $S_N$ by $S_N(\bsec )$ (see \eqref{hypothesis}) to make dependence on the slopes explicit, we define the sets of hypotheses and local alternatives as
$$
\mathcal{H}:= \mathcal{T} \cap \{\bsec : S_N(\bsec )\le \Delta \} \quad \textnormal{and} \quad \mathcal{A}_N(x):= \mathcal{T} \cap \{\bsec : S_N(\bsec )-x/\sqrt{NT}\ge\Delta \}
$$
respectively. Notice that in the definition of $\mathcal{A}_N(x)$ the number $x>0$ quantifies the distance to the hypothesis. We can then state the following result.

 \begin{theo} \label{theorem_test}
Suppose that the Assumptions $(N), (\varepsilon)$ and $(X)$  hold and that the bandwidth $b$ is chosen according to Proposition \ref{theorem4}. Then for any $\Delta>0$ and $\alpha \in (0,1)$ it holds that
 $$
 \limsup_{N \to \infty} \sup_{\bsec  \in \mathcal{H}}\mathbb{P}\big(\hat W_N >q_{1-\alpha}\big)=\alpha
 \quad \quad (asymptotic \,\, level \,\, \alpha).
 $$
 Furthermore there exists a non-decreasing function $f: \mathbb{R}_{>0} \to \mathbb{R}_{>0}$, with $f(x)> \alpha$ for all $x >0$ and $\lim_{x \to \infty}f(x)=1$, such that
 $$
 \liminf_{N \to \infty} \inf_{\bsec  \in \mathcal{A}_N(x)}\mathbb{P}\big(\hat W_N >q_{1-\alpha}\big) =f(x)\quad \quad (asymptotic \,\, consistency).
 $$
 \end{theo}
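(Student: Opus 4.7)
My plan is to build on the weak convergence in \eqref{Eq_weak_convergence} and exploit the pivotal character of its limit, using a subsequence/compactness argument to upgrade pointwise statements to the uniform bounds required by the theorem. I would start from the decomposition
\[
\hat W_N \;=\; \frac{\tilde S_N - S_N(\bsec)}{\hat V_N} \;+\; \frac{S_N(\bsec)-\Delta}{\hat V_N},
\]
noting that under $\bsec\in\mathcal{H}$ the second summand is non-positive, while under $\bsec\in\mathcal{A}_N(x)$ it is at least $x/(\sqrt{NT}\,\hat V_N)$. Applying the continuous mapping theorem to \eqref{Eq_weak_convergence}, along any sequence with $\LR_N^2(\bsec)\to\LR_*^2\in(0,\infty)$ one obtains the joint convergence
\[
\Big(\tfrac{\tilde S_N-S_N}{\hat V_N},\,\sqrt{NT}\,\hat V_N\Big) \;\Longrightarrow\; (W,\,\LR_*G),
\]
with $G:=\{\int_I\kappa^2(\mathbb{B}(\kappa)-\kappa\mathbb{B}(1))^2 d\nu(\kappa)\}^{1/2}$ and $W=\mathbb{B}(1)/G$ pivotal, as already recorded in \eqref{W}.

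For uniformity, I would first verify that $\sup_n\|\beta_n\|\le C$ combined with $(X)(1)$--$(X)(3)$ and $(\varepsilon)(2)$ implies a uniform upper bound $\LR_N^2(\bsec)\le C'$ on $\mathcal{T}$; together with the lower bound $\LR_N^2(\bsec)>c$ this confines $\LR_N$ to the compact interval $[\sqrt c,\sqrt{C'}]$. For the level statement, under $\mathcal{H}$ the inequality $\hat W_N\le (\tilde S_N-S_N)/\hat V_N$ yields $P(\hat W_N>q_{1-\alpha})\le P((\tilde S_N-S_N)/\hat V_N>q_{1-\alpha})$; if the supremum of the left-hand side over $\mathcal{H}$ exceeded $\alpha$ in the limit, passing to a subsequence $(\bsec_{N_k})$ realising the excess and then, by compactness, to a further subsequence with $\LR_{N_k}\to\LR_*$, would contradict the pivotal limit $P(W>q_{1-\alpha})=\alpha$. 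The reverse inequality follows by exhibiting an explicit $\bsec_N^*\in\mathcal{H}$ with $S_N(\bsec_N^*)=\Delta$ and $\LR_N(\bsec_N^*)$ bounded away from zero (a suitably scaled two-point configuration suffices), for which the second summand is identically zero and the rejection probability converges to $\alpha$.

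For consistency, under $\bsec\in\mathcal{A}_N(x)$ I employ the lower bound
\[
\hat W_N \;\ge\; \frac{\tilde S_N-S_N}{\hat V_N} + \frac{x}{\sqrt{NT}\,\hat V_N},
\]
and, along any subsequence with $\LR_{N_k}\to\LR_*\in[\sqrt c,\sqrt{C'}]$, the right-hand side converges in distribution to $W+x/(\LR_*G)$. Since $P(W+x/(\LR_*G)>q_{1-\alpha})$ is strictly decreasing in $\LR_*$, the worst case is $\LR_*=\sqrt{C'}$, which motivates
\[
f(x):= P\!\left(W + \frac{x}{\sqrt{C'}\,G} > q_{1-\alpha}\right).
\]
The required properties of $f$ are then immediate: $f(0)=\alpha$; $f$ is non-decreasing; $f(x)>\alpha$ for $x>0$ because adding a strictly positive drift to the continuous distribution of $W$ strictly inflates its upper tail; and $f(x)\to 1$ as $x\to\infty$ because the drift eventually dominates $q_{1-\alpha}G$. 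A matching sequence with $S_N(\bsec_N)=\Delta+x/\sqrt{NT}$ and $\LR_N(\bsec_N)\to\sqrt{C'}$ shows the $\liminf$ is exactly $f(x)$.

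The main obstacle is the uniformity step. Propositions~\ref{theorem1}--\ref{theorem5} are stated for a fixed sequence, so the constants appearing in their proofs must be shown to depend only on the defining constants $C,c$ of $\mathcal{T}$ and on the structural quantities controlled by $(N),(\varepsilon),(X)$; the subsequence argument then converts pointwise weak convergence into the required uniform $\limsup$ and $\liminf$ statements. A particular subtlety is establishing the uniform upper bound $\LR_N^2(\bsec)\le C'$, which is not part of Assumption~$(\LR)$ and has to be derived from the boundedness of the slopes, of $Q_i^{-1}$, and of the matrices $U_{i,j}$ appearing in \eqref{long_run_variance}.
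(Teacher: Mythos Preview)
Your approach is correct in spirit but differs from the paper's in how uniformity over $\mathcal{T}$ is achieved. The paper does not rely on a subsequence/compactness argument; instead, it invokes a \emph{quantitative} functional CLT for triangular arrays (Theorem~2.7 in \cite{hafouta2021}) which yields a Prokhorov bound $d_P(\tilde E_1/\tilde\LR_N,\mathbb B)\le C\delta_N$ with $C$ independent of $\tilde\LR_N(\bsec)$, hence of $\bsec\in\mathcal T$. This rate, combined with Lipschitz continuity of the self-normalizing map $\Psi$ on a high-probability set $D^\rho[0,1]$, transfers directly to the distribution function of $\hat W_N$ and gives the uniform level and power statements without ever extracting subsequences. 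The paper also \emph{defines} $f(x)$ as the $\liminf$ itself and verifies its properties via a lower bound of the form $\mathbb P(\Psi(\mathbb B+x/\lambda_{\max})>q_{1-\alpha})$; it does not claim an explicit closed formula or construct a matching sequence, so your equality claim is stronger than needed (and the attainability of $\sqrt{C'}$ within $\mathcal A_N(x)$ would require separate justification).

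Your compactness route is viable, but note that the ``main obstacle'' you flag is not just about constants in Propositions~\ref{theorem1}--\ref{theorem4}: those remainders indeed depend on $\bsec$ only through $\max_i\|\beta_i\|$, which is controlled on $\mathcal T$. The real issue is that your subsequence argument needs weak convergence of $\tilde E_1/\tilde\LR_N$ along \emph{diagonal} choices $\bsec^{(N)}$, i.e.\ a triangular-array invariance principle. This is precisely what Hafouta supplies, so you cannot avoid that ingredient; you are merely using its qualitative conclusion where the paper exploits its rate. The trade-off is that the paper's argument is more direct and yields explicit control of the approximation error, while yours is conceptually simpler once the diagonal CLT is granted.
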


Note that Theorem \ref{theorem_test} is a stronger result than those usually derived in the related literature, as it proves consistency of the test decision \eqref{test_decision} against a whole class of local alternatives. In contrast, the works cited in Section \ref{sec_22} consider either fixed alternatives or fixed local alternatives (such as \cite{Pesaran} or \cite{breitung2016}). We conclude this section with a small remark regarding applications.

\begin{rem} \label{rem_intercepts}
{\rm
 For parsimony of presentation, we have throughout this work assumed that the model \eqref{model_2} does not include individual specific intercepts. However, in most applications  a model with intercepts  is more realistic, that is 
 \begin{equation} \label{model_3}
y_{i} =\alpha_i+X_{i} \beta_i+\varepsilon_{i} \quad \quad  i=1,...,N
\end{equation}
where $\alpha_1,...,\alpha_N \in \R$. In this situation, we can define the matrix $M(\kappa) \in \R^{T \times T}$, which is given for $s, t \in \{1,...,T\}$ as $(M(\kappa))_{s,t}=1-1/\lfloor T \kappa \rfloor$ if $t=s$ and $(M(\kappa))_{s,t}=-1/\lfloor T \kappa \rfloor$ else. 
We can then 
rewrite the least squares  estimator of $\beta_i$ in model \eqref{model_3} as
$$
\hat \beta_i(\kappa):= [X_i(\kappa)'M(\kappa) X_i(\kappa)]^{-1} X_i(\kappa)'M(\kappa) y_i(\kappa).
$$
If we now re-define the bias-correction by
$$
\hat B_N(\kappa):=\kappa \sqrt{\frac{T}{N}}\sum_{i=1}^N\Tr[\hat \Sigma_i(\kappa, b) M(\kappa)X_i(\kappa) \big[ X_i(\kappa)'M(\kappa) X_i(\kappa)\big]^{-2} X_i(\kappa)'M(\kappa) ],
$$
and  the entries of $\hat \Sigma_i(\kappa, b)$  by
$$
\hat \xi_i(h, \kappa) = \frac{\big(M(\kappa)(y_i(\kappa)-X_i(\kappa)'\hat \beta_i )\big)_{1:\lfloor T\kappa\rfloor-h}' \big(M(\kappa)(y_i(\kappa)-X_i(\kappa)'\hat \beta_i )\big)_{h+1:\lfloor T\kappa\rfloor}}{\lfloor T\kappa\rfloor -h-K},
$$
then all results presented in this section remain correct for the model \eqref{model_3}. }

\end{rem}
 
\subsection{A Prediction Measure for Slope Homogeneity} \label{sec35}

So far we have considered $S_N$, the empirical variance of the slopes $\beta_1,...,\beta_N$, to assess slope homogeneity. However, other dispersion measures may be of equal interest. For example, instead of comparing $\beta_i$ directly to some $\beta$ in the norm, it can be meaningful to see how their influence differs in terms of prediction, that is comparing $X_i \beta_i$ to $X_i \beta$. In particular, it may happen that $\|\beta_i-\beta\|^2$ is not very small but that $\|X_i(\beta_i-\beta)\|^2$ is. This means that $\beta_i$ and $\beta$ differ only in ways that exert little influence on the result $y_i$.

To make this more precise, recall the definition of the matrix $Q_i$ in Assumption (X), which is the limit of $X_i'X_i/T \in \mathbb{R}^{K \times K}$. We  define the hypothesis of approximate homogeneity in prediction as
\begin{align} \label{hypothesis_pred}
H_0^{pred}: S_N^{pred}:=\frac{1}{N} \sum_{i=1}^N \|Q_i^{1/2}(\beta_i - \beta^{pred})\|^2 \le \Delta \quad vs. \quad H_1^{pred}:S_N^{pred}> \Delta,
\end{align}
where
$
\beta^{pred}:= \big(\sum_{i=1}^N Q_i \big)^{-1} \sum_{i=1}^N Q_i \beta_i,
$
is the minimizer of $S_N^{pred}$. Notice that if the regressors are stationary, we have $\mathbb{E} \|X_i(\beta_i - \beta^{pred})\|^2= \|Q_i^{1/2}(\beta_i - \beta^{pred})\|^2$. We point out that $S_N^{pred}$ is proportional to the non-centrality parameter of the distribution of 
the $F$-test for slope homogeneity  under the classical alternative.
With only a few adaptions to our methods from the previous section, we can construct a self-normalized test for the hypothesis \eqref{hypothesis_pred} as well. 
To be precise we define a sequential version of the fixed effect estimator 
$$
\hat \beta^{pred}(\kappa):= \Big( \sum_{i=1}^N X_i(\kappa)' X_i(\kappa)\Big)^{-1} \sum_{i=1}^N X_i(\kappa)' y_i(\kappa)
$$
and therewith
$$
\hat S_N^{pred}(\kappa):= \frac{1}{NT}\sum_{i=1}^N \|X_i(\kappa)(\hat \beta_i(\kappa)-\hat \beta^{pred}(\kappa) )\|^2.
$$
Notice that for $\kappa=1$ this measure is closely related to Swamy's test statistic defined in \eqref{Swamy}. Finally, we introduce an adapted bias correction
$$
\hat B_N^{pred}(\kappa):=\kappa \sqrt{\frac{T}{N}}\sum_{i=1}^N\Tr[\hat \Sigma_i(\kappa, b)X_i(\kappa) \big[ X_i(\kappa)'X_i(\kappa)\big]^{-1} X_i(\kappa)' ]. 
$$
Under similar conditions as before (only $(X), (1)$ has to be strengthened slightly), we can prove a weak invariance principle in the spirit of 
\eqref{Eq_weak_convergence}, that is 
$$
\{ \kappa \sqrt{ N   T  }(\tilde S_N^{pred}(\kappa)- S_N^{pred}) \}_{\kappa \in I} \to  \{\LR^{pred} \mathbb{B}(\kappa)\}_{\kappa \in I},
$$
where $\tilde{S}_N^{pred}(\kappa):= \hat S_N^{pred}(\kappa)-\hat B_N^{pred} (\kappa)$ and $(\LR^{pred})^2$ is the long-run variance. Then we can adapt the self-normalized statistic $\hat W_N$ to  $\hat W_N^{pred}$ by replacing all instances of $\tilde S_N$ by  $\tilde S_N^{pred}$ and recover Theorem \ref{theorem_test} in this case, with the test decision
\begin{align}\label{test_decision_pred}
 \hat W_N^{pred}>q_{1-\alpha}   .
\end{align} We will use this prediction measure in the analysis of a data example in Section \ref{sec42}, where we find a case of relevant heterogeneity (both w.r.t. the raw slope comparison $S_N$ and to $S_N^{pred}$).

\section{Finite Sample Properties} \label{Section4}

In this section, we investigate the finite sample properties of the self-normalized test-statistic $\hat W_{N}$ defined in \eqref{hat_W} by means of a small simulation study. We  consider normally distributed regressors, generated by random $AR(1)$ processes and two different residual distributions, namely normal and centered chi-squared errors. To assess the performance of $\hat W_{N}$ we simulate samples with different sizes of $N$ and $T$, corresponding to different $\eta$ in our theory and we also investigate the effect of the dimension $K$ of the slopes. 

 We begin by specifying the model. We generate the random variables
\begin{align*}
e_{i,t} \sim \mathcal{N}(\mathbf{0}, R), \,\, i=1,..,N, \, t=1,..,T, \quad \rho_i \sim \mathcal{U}[0.05, 0.95], \,\, i=1,..,N,
\end{align*}
where $R$ is a $K \times K$ dimensional matrix, with entries $R_{i,j}=(1/2)^{|i-j|}$. $\rho_i$ is the AR-parameter of the $i$th individual and thus determines the temporal dependence of the regressors.
More precisely, the regressors are generated by the following procedure
$$x_{i,t}=\rho_i \cdot x_{i,t-1}+  e_{i,t}, $$
where a burn in period of $100$ iterations is used. Next we turn to the model errors defining the temporal and spatial covariance matrices $\Sigma_T$ and $\Sigma_N$. We define $\Sigma_T$ entry-wise as 
 $$(\Sigma_{T})_{t,s}:=\tau(t-s):=(1+|t-s|)^{-2}.$$
 Notice that even though $\tau$  is summable, it is slowly decaying, which translates into relatively strong dependence.  The cross-sectional covariance matrix is defined similarly as
$$ (\Sigma_{N})_{i,j}:=(1+|i-j|)^{-2} z_i z_j,$$
where $z_1,...,z_N$ are i.i.d.  centered normal  random variables  with variance $1/2$, generated in each simulation run. By construction $\Sigma_{N}$ is positive definite but does not imply stationarity. We then define the spatio-temporal covariance matrix $\Sigma := \Sigma_N \otimes \Sigma_T$. Now we generate i.i.d. standard normal  random  variables $g_{i,t}$ for $i=1,...,N, t=1,...,T$ and therewith construct two types of model errors $\varepsilon_{i,t}$: We consider normal model errors by applying the matrix root of $\Sigma$ to the vector $(g_{i,t})_{i,t}$, which yields
$\varepsilon^{norm} \sim \mathcal{N}(\mathbf{0}, \Sigma) $ (conditional on $z_{1}, \ldots , z_{N}$). 
Secondly, we consider dependent chi-squared errors, by taking the normal model errors $\varepsilon_{i,t}^{norm}$ from the previous step and transform them to $\varepsilon_{i,t}^{chi}$ as follows,
$$
\varepsilon_{i,t}^{chi} := {(\varepsilon^{norm}_{i,t})^2-\E (\varepsilon^{norm}_{i,t})^2 \over \sqrt{2}}, \qquad \forall i=1,...,N, \quad t=1,...,T.
$$
It follows by a simple calculation that the residual vector $\varepsilon^{chi}$ has covariance matrix $(\Sigma_{i,t}^2)_{i,t}$. In particular, the Assumption $(\varepsilon), (2)$ of a separable covariance is satisfied. Since the distribution of $\varepsilon^{chi}_{i,t}$ is strongly skewed, more temporal observations are required for the central limit theorem to set in, and we therefore expect a slower convergence of our test statistic. \\
Next we have to select the slopes $\beta_1,...,\beta_N$. These are generated randomly as follows
$$
\beta_i = \mathbf{1}_K+U_i,
$$
where $\mathbf{1}_K$ is the $K$-dimensional vector with entries $1$ and $U_1,...,U_N$ are i.i.d., uniformly distributed on the $K$-dimensional unit-ball. By construction, the heterogeneity measure $S_N$ is then also randomly distributed. However, as a means of understanding its approximate size, we can calculate its expectation given by
$$
\E S_N = \frac{N-1}{N} \E \|U_1\|^2.
$$
In the case of $K=2$ it follows by basic calculations that $\E \|U_1\|^2 =1/2$ and for larger $K$ the expectation grows slowly (converging to $1$ as $K \to \infty$), reflecting the impact of the dimension. \\
Finally, we fix the user-determined parameters in the construction of the statistic: We choose the measure $\nu$, which occurs in the denominator $\hat V_N$ (see \eqref{def_denominator}) of the statistic $\hat W_N$ as the uniform measure on $\{6/10,7/10,8/10,9/10, 1\}$. 
The bandwidth-parameter $b$ \begin{figure}
$$
\begin{matrix}
		 & & & \\
         & \includegraphics[scale=0.3]{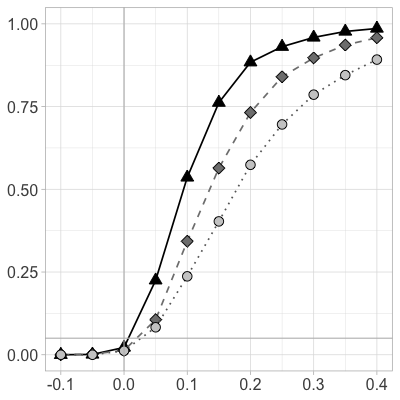}
         & \includegraphics[scale=0.3]{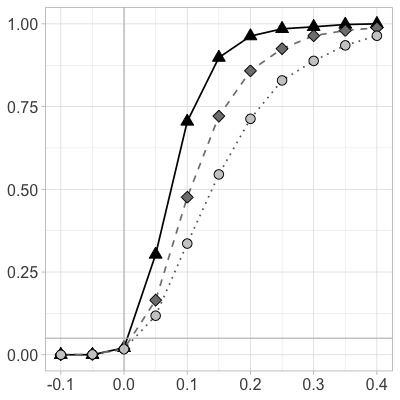}
         & \includegraphics[scale=0.3]{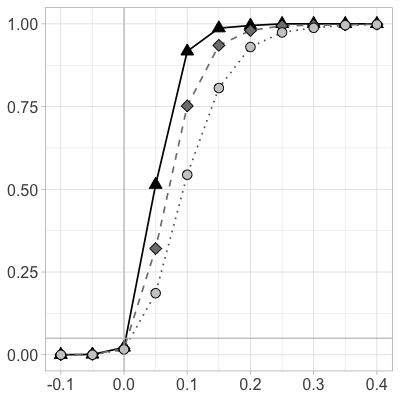}\\		
         & N=10, T=20 & N = 20, T=20 & N=50, T=20  \\[2ex]
         & \includegraphics[scale=0.3]{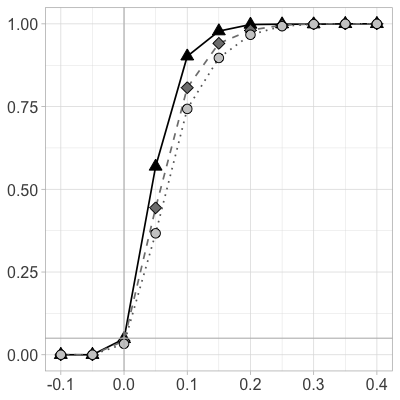}
         & \includegraphics[scale=0.3]{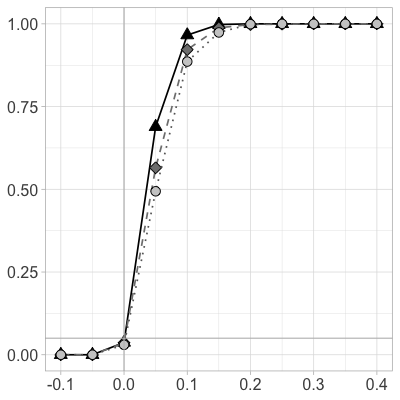}
         & \includegraphics[scale=0.3]{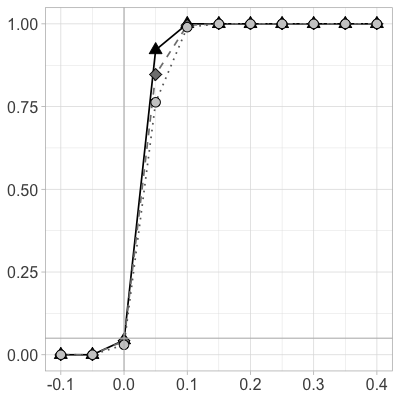}\\
         & N=10, T=50 & N = 20, T=50 & N=50, T=50 \\[2ex]
         & \includegraphics[scale=0.3]{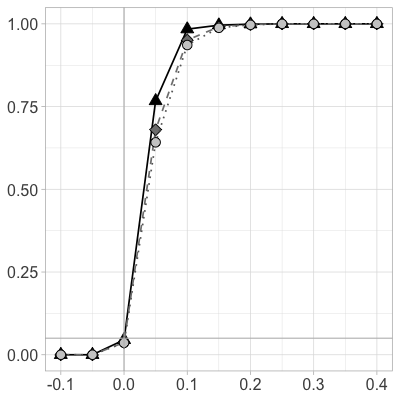}
         & \includegraphics[scale=0.3]{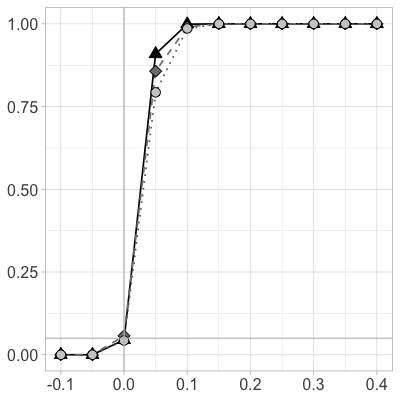}
         & \includegraphics[scale=0.3]{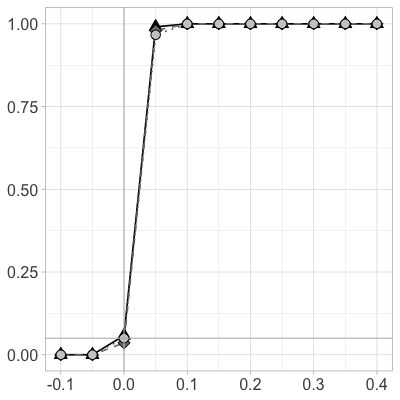}\\
         & N=10, T=100 & N = 20, T=100 & N=50, T=100 \\[2ex]
\end{matrix}
$$
\caption{\doublespacing \label{fig2}\textit{Rejection probabilities for different values of $N$ and $T$ in the case of normal errors. The $y$-axis corresponds to the empirical rejection probability of the test decision \eqref{test_decision} and the $x$-axis depicts different values of  $\epsilon:= S_N-\Delta$, where $\epsilon=0$ corresponds to the maximum rejection probability under $H_0$ and $\epsilon>0$ to the alternative. The different curves stand for $K=2$ (solid, triangles), $K=3$ (dashed, diamonds) and $K=4$ (dotted, circles).}}
     \end{figure}
  
\begin{figure}
$$
\begin{matrix}
		 & & & \\
         & \includegraphics[scale=0.3]{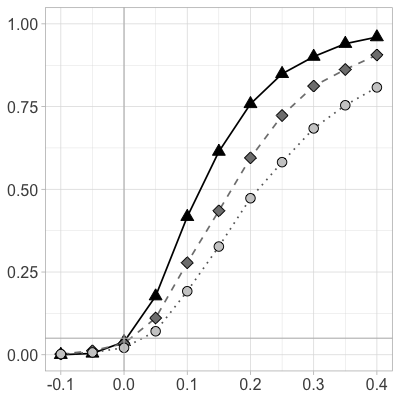}
         & \includegraphics[scale=0.3]{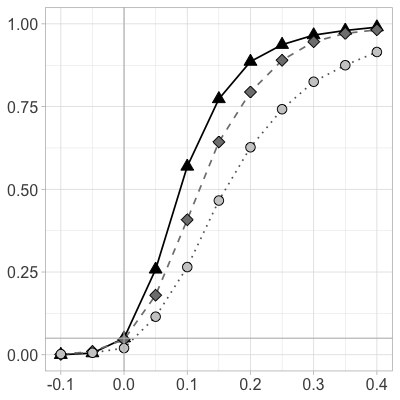}
         & \includegraphics[scale=0.3]{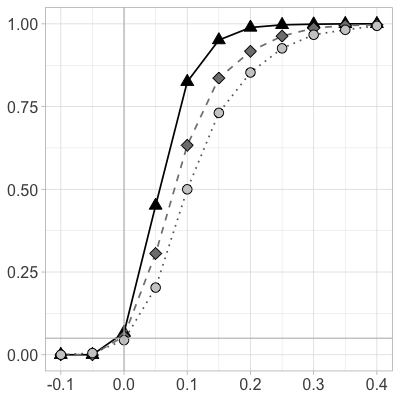}\\		
         & N=10, T=20 & N = 20, T=20 & N=50, T=20  \\[2ex]
         & \includegraphics[scale=0.3]{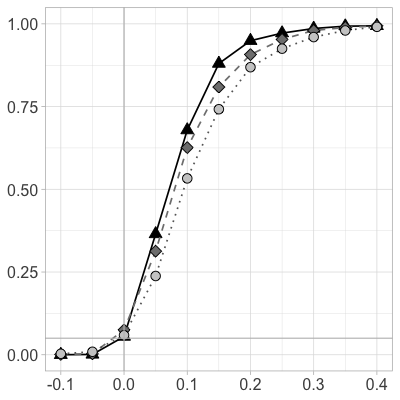}
         & \includegraphics[scale=0.3]{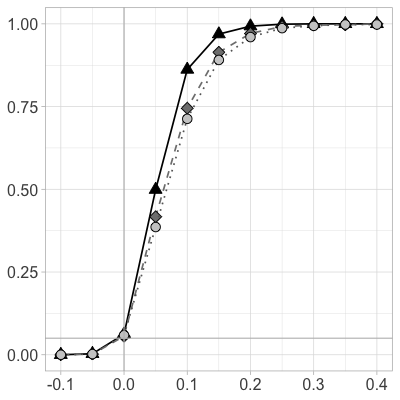}
         & \includegraphics[scale=0.3]{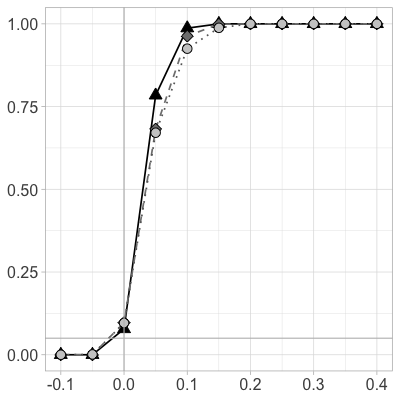}\\
         & N=10, T=50 & N = 20, T=50 & N=50, T=50 \\[2ex]
         & \includegraphics[scale=0.3]{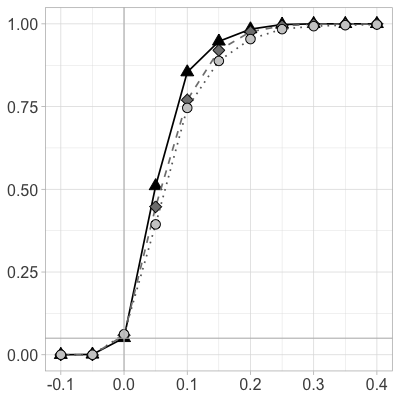}
         & \includegraphics[scale=0.3]{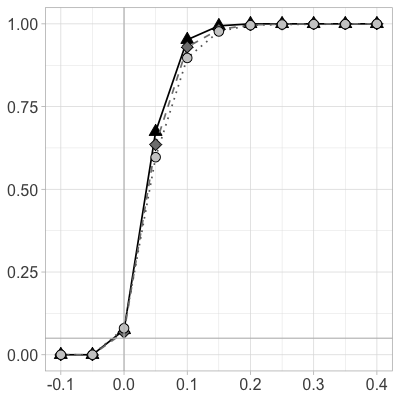}
         & \includegraphics[scale=0.3]{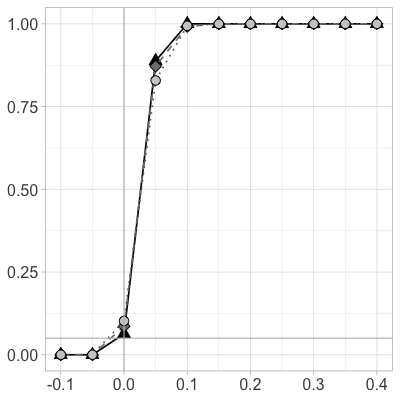}\\
         & N=10, T=100 & N = 20, T=100 & N=50, T=100 \\[2ex]
\end{matrix}
$$
\caption{\doublespacing\label{fig3}\textit{Rejection probabilities for different values of $N$ and $T$ in the case of chi-squared errors. The $y$-axis corresponds to the empirical rejection probability of the test decision \eqref{test_decision} and the $x$-axis depicts different values of  $\epsilon:= S_N-\Delta$, where $\epsilon=0$ corresponds to the maximum rejection probability under $H_0$ and $\epsilon>0$ to the alternative. The different curves stand for $K=2$ (solid, triangles), $K=3$ (dashed, diamonds) and $K=4$ (dotted, circles).}}
     \end{figure}
(see \eqref{def_hat_Sigma_i}) is set to $2$ everywhere and the nominal level $\alpha$ is fixed at $5\%$. 
 In Figures \ref{fig2} and \ref{fig3} we display  the rejection probabilities of the test  \eqref{test_decision},
  first in the case of normal and then of chi-squared errors, where
  $N \in \{10, 20, 50\}$ and $T \in \{20, 50, 100\}$. 
  Each of the subsequent diagrams corresponds to some combination of  values 
  $N$ and $T$ and shows three power-curves for $K=2$ (solid line with triangles),  $K=3$   (dashed line with diamonds)
  and $K=4$  (dotted line with circles).
  On the $y-axis$ we show the rejection probability and on the $x$-axis  the deviation $\epsilon = S_N-\Delta$, where $\epsilon \le 0$ corresponds to the hypothesis and $\epsilon>0$ to the alternative.
  The boundary of the hypothesis $\epsilon=0$ is marked by a vertical gray line, and the nominal level $\alpha =  5\%$ by a gray horizontal line.  
   All simulations are based on $1000$ simulation runs. 

 The results from our simulation study are in line with the asymptotic theory  presented in Sections \ref{sec33} and \ref{sec34}. We observe a good approximation of the nominal level, particularly when $T$ is large compared to $N$. The approximation is slightly more precise for normally distributed data than for the chi-squared case, as may be expected. Larger dimensions of the slope parameter are associated with decreased power for small $T$, whereas the effect is less pronounced for larger $T$. In general, we observe that additional temporal observations increase power faster than additional individual observations (compare $T=20, N=50$ with $T=50, N=20$), which is a well-known effect for data panels. In both cases - for normal and chi-squared errors - we see rapid growth of power for increasing samples, although normal errors lead to higher power than a skewed distributions. In the reported simulations, we have only considered  
  the choice $b=2$ of the bandwidth parameter $b$ (see definition of the bias correction in \eqref{bias_estimate}).  However, non-reported simulations show that the results do not change substantially when changing $b$ to $1,3$ or $4$. No bias reduction ($b=0$) leads to mildly inflated type-1-errors (particularly for non-normal data).

\section{A Data Example: CO${}_2$ Emissions in the G20}
\label{sec42}


Anthropomorphic $\textnormal{CO}_2$ emissions are a key contributor to climate change. According to a recent report of the   \cite{IPCC2021} $\textnormal{CO}_2$ emissions alone have contributed about  $0.7^{\circ}C$ temperature increase over the course of the last century (with some estimates even higher). The increase in global temperatures entails among other costs elevated risks of droughts, heatwaves and heavy precipitation. Against this background, it is important to identify factors that contribute (positively or negatively) to emissions and quantify their influence. 
Recently numerous studies have explored the nexus of carbon emissions and economic factors, such as GDP,  energy consumption, agricultural output, financial development, technology and many others (see e.g. \cite{Chang2015, SHUAI2017310, DONG2018180,BEKUN2019,  SHEN2021}). 
The data panels under investigation comprise regions of individual countries, as well as groups of countries (such as OECD members or the G20) over a moderate time frame. Some of these studies include tests to detect slope heterogeneity and cross-sectional dependence, because ignoring either can result in a distorted analysis. While standard homogeneity tests, like those in \cite{Pesaran} commonly reject the hypothesis of slope homogeneity in these works, practitioners tend to employ them even in the presence of (detected) cross-sectional dependence (see e.g. \cite{Chang2015, DONG2018180}). However, if cross-sectional dependence is not adjusted for, test statistics for homogeneity converge to non-standard limits, implying again a distorted analysis (see also \cite{Blomquist}). In contrast, the self-normalized statistics, presented in Section \ref{sec34} and \ref{sec35} are resilient to cross-sectional dependence (even for large intersections) and therefore offer a convenient alternative for users. We illustrate this point by investigating the $19$ G20 countries (excluding the EU) over a time frame of $18$ years (from $1998-2015$). Our data consists of annual measurements of  $\textnormal{CO}_2$ emissions ($CO$), renewable energy  consumption ($REN$), gross domestic product ($GDP$) and added agricultural output ($AGR$) all per capita, which is closely related to the panel considered in \cite{QIAO2019}. $\textnormal{CO}_2$ emissions are measured in metric tons, GDP and agricultural output added in 2015 dollars and renewable energy in terajoules. The measurements of renewable energy are based on the \textit{Sustainable Energy for All} database from the World Bank and the remaining variables are drawn from the \textit{World Development Indicators} also from the World Bank (\url{https://databank.worldbank.org/home}).  We now consider the panel regression model
$$
\log(CO_{i,t})=\alpha_i + \beta_{i,1} \log(GDP_{i,t})+\beta_{i,2} \log(REN_{i,t})+\beta_{i,3} \log(AGR_{i,t})+\varepsilon_{i,t}
$$
for $i=1,...,19$, $t=1,...,18$ and $K=3$. Notice that we include an individual specific intercept $\alpha_i$ for each country, which is eliminated by subtracting the temporal average for each individual (see Remark \ref{rem_intercepts}). The intercepts model idiosyncratic factors, such as the availability of natural resources, which are different for each country. In Figure \ref{Figure_Boxplots} we display box plots of the logarithmized variables.

\begin{figure} 
$$
\begin{matrix}
		 & & & & \\
         & \includegraphics[scale=0.4]{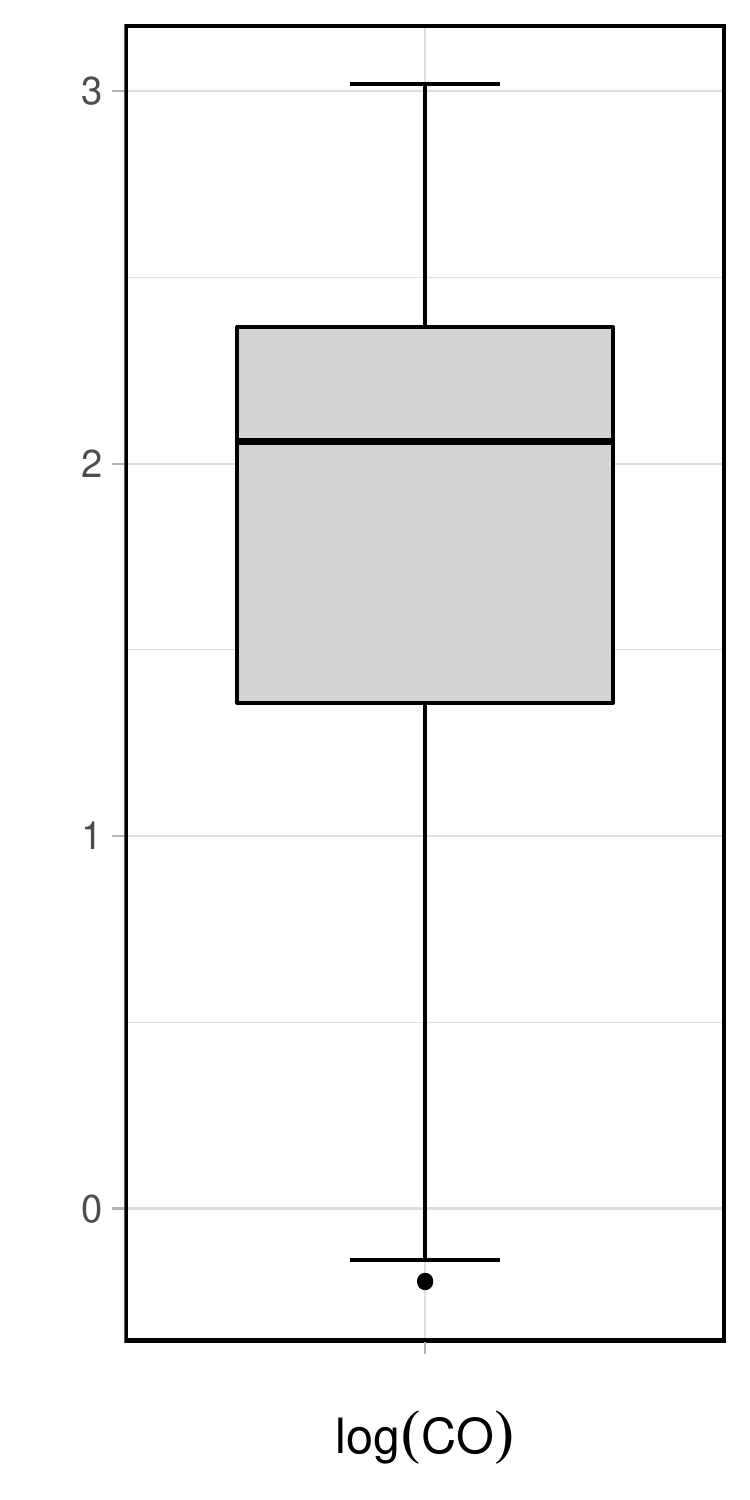}
         & \includegraphics[scale=0.4]{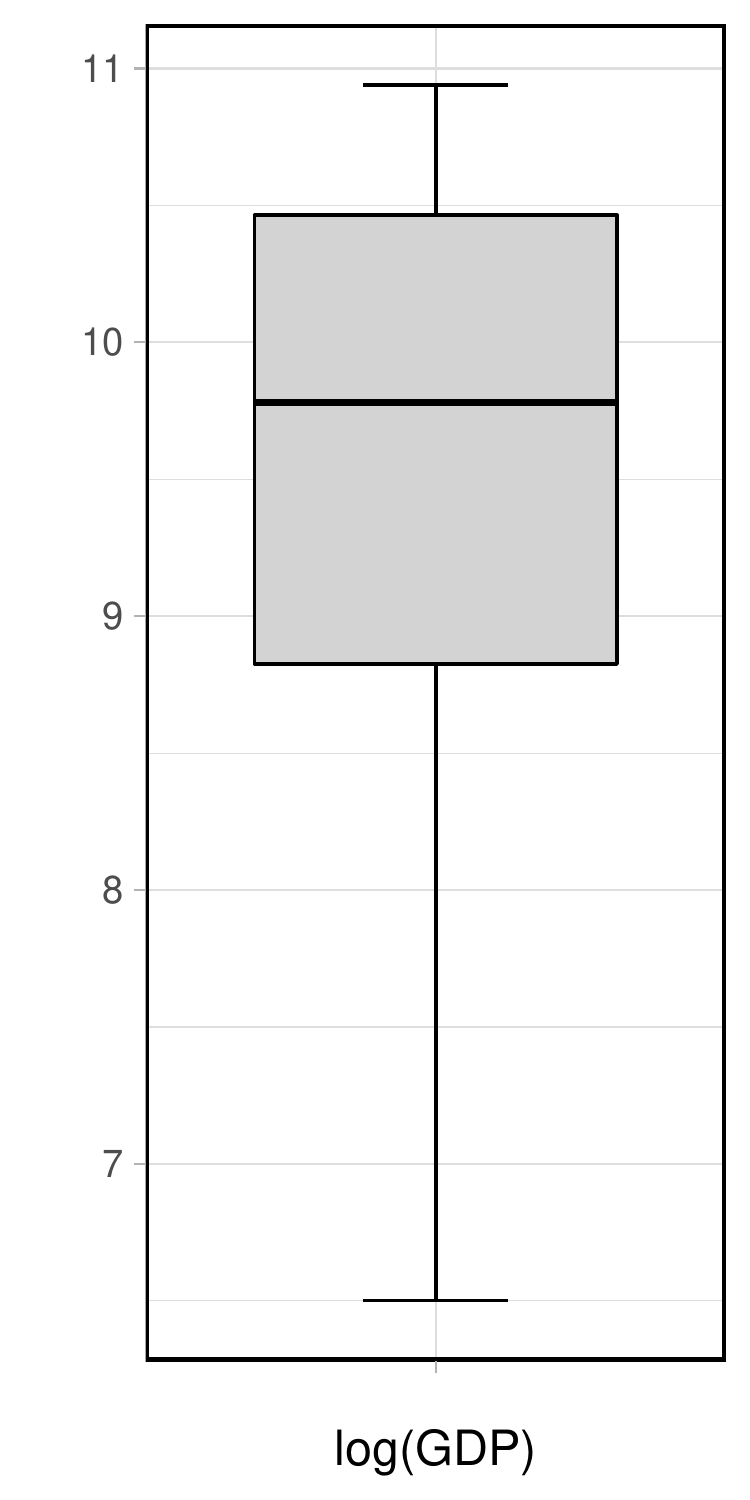}
         & \includegraphics[scale=0.4]{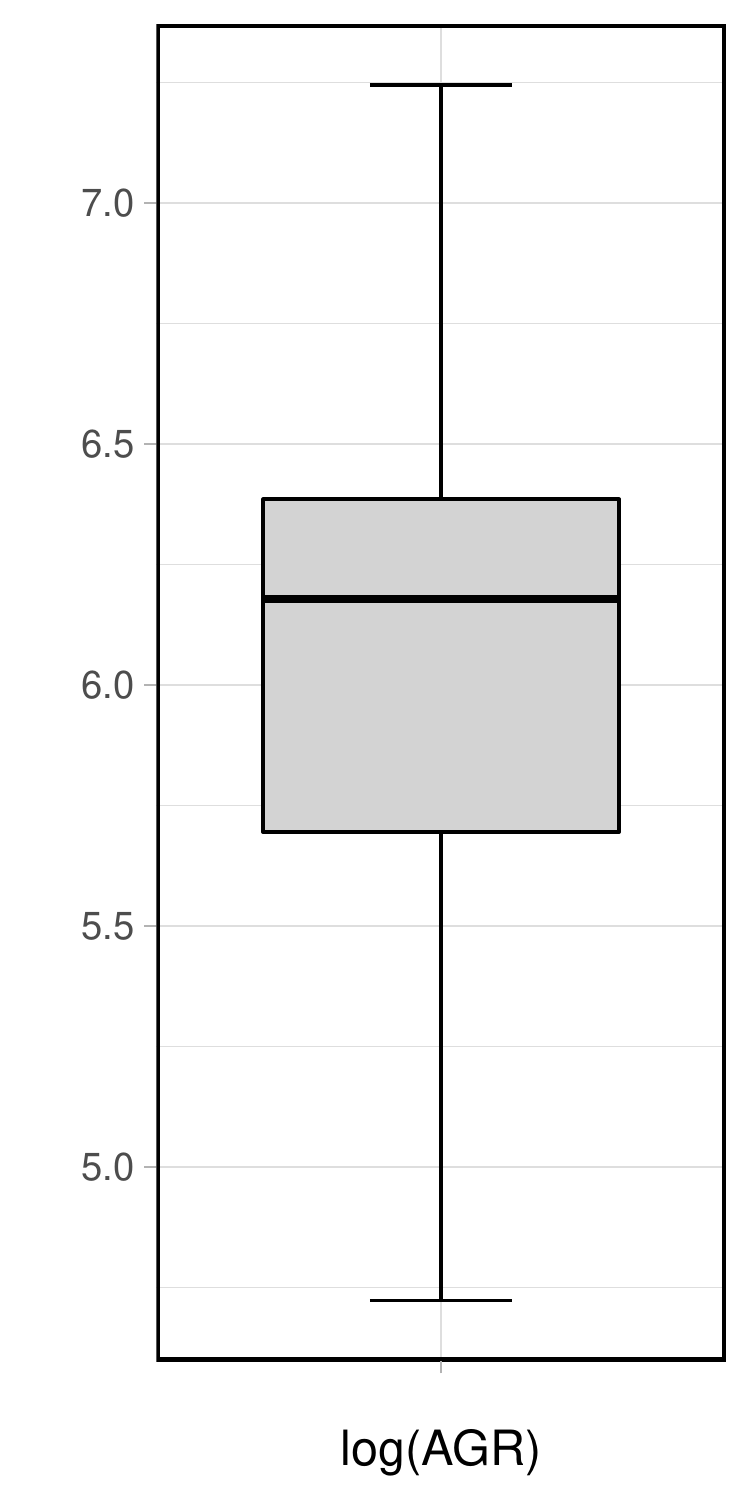}		
         & \includegraphics[scale=0.4]{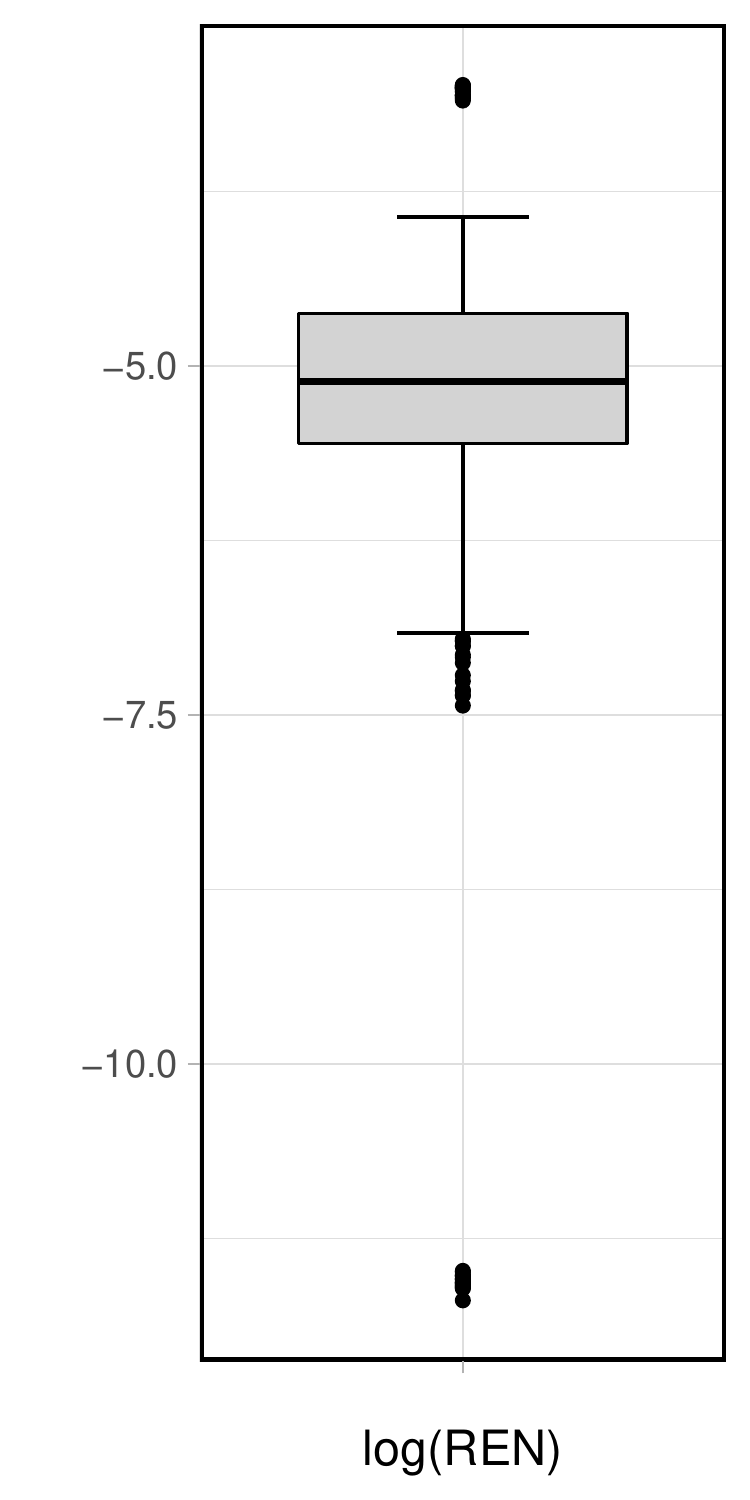}
\end{matrix}
$$
\caption{\doublespacing\textit{Box plots of logarithmized CO${}_2$ emissions, GDP, agricultural value added and renewable energy consumption (all per capita).}
\label{Figure_Boxplots}}
\end{figure}

We can now calculate the mean group estimator $\hat \beta$ and the fixed effect estimator $\hat \beta^{pred}$. Both estimators indicate the positive impact of GDP on emissions ($\approx 0.5$ and $0.72$, respectively) and a negative impact of renewables ($\approx -0.05$ and $-0.18$ respectively). In the case of agricultural output we see entries of equal magnitude but with different signs ($\approx 0.15$ and $-0.11$ respectively). Our results are in line with existing findings, which stress the positive effect of GDP and the negative effect of renewable energies on carbon emissions (see \cite{DONG2018180} and references therein). The case of agricultural output is known to be more ambiguous, as agriculture can be a source as well as a sink for $\textnormal{CO}_2$ (see \cite{JOHNSON2007}). Similarly, \cite{QIAO2019} find different signs for developed and developing economies, with a positive net effect. \\
The fact that mean group and fixed effect estimator differ noticeably is indicative of underlying slope heterogeneity (since $\hat \beta^{pred}$ is basically a weighted average of $\hat \beta_1,...,\hat \beta_N$). This impression is bolstered if we, ignoring intersectional dependence, apply the homogeneity tests of Swamy and Pesaran discussed in Section \ref{sec_21} (for a rigorous definition see Theorem $2$ in \cite{Pesaran}). Both tests reject exact homogeneity at a level of $1\%$. We retrieve strong heterogeneity when applying the self-normalized tests from Sections \ref{sec34} and \ref{sec35}. The test decision \eqref{test_decision} rejects the hypothesis of approximate slope homogeneity \eqref{hypothesis} at a level of $10\%$ for $\Delta = 0.042$, which is a rather large value, relative to the size of the slopes. For instance, the mean group estimator has norm $\|\hat \beta\|^2 \approx 0.267$ s.t. $\Delta/\|\hat \beta\|^2 \approx 0.16$, indicating strong heterogeneity compared to average effects. Similarly, the test decision \eqref{test_decision_pred} rejects the hypothesis \eqref{hypothesis_pred}  for  $\Delta \le 0.017$. In both cases we have employed 
a choice of $b=1$ for the bandwidth parameter (which is used in the bias estimate \eqref{bias_estimate}) 
 but moving to $b=2$ or to $b=0$ has little impact on the results. In light of the diverse nature of the G20 members as well as findings of earlier studies, these rejections are highly plausible. The fact that both tests reject indicates that the findings are stable.  

{\bf Acknowledgements}
This research has been supported by the German Research Foundation (DFG), project number 45723897.
 The authors report that there are no competing interests to declare.

\bibliographystyle{apalike}
\setlength{\bibsep}{1pt}
\begin{small}

\end{small}

\newpage

\appendix

\setcounter{page}{1}
\section*{\LARGE $\quad \,$ Supplements}

\section{Proofs of the main Results} \label{App_A}

The supplement consists of four parts. In Section \ref{App_A} we present the proofs of our main results from Section \ref{sec33}. In Section \ref{App_B} we have gathered additional technical results concerning the convergence of the sequential regressor matrices. These are applied in Section \ref{App_C} to bound the remainder terms occurring in Proposition \ref{theorem1}. Finally, in  Section \ref{App_D} we establish various uniform bounds, which are used throughout Sections \ref{App_A} and \ref{App_B}. 

\subsection{Notations}

\begin{itemize}
\item[1)] From now on $C$ always denotes a generic constant (which is  independent of $T, N$ and any individual or time point $i, t$) and can differ from line to line. 
\item[2)] $C_1, C_2,...$ denote fixed, positive constants, that do not change from line to line (but they can differ from  section to section).
\item[3)] Throughout our discussion, we will use different matrix norms, particularly on the finite dimensional space $\mathbb{R}^{K \times K}$. Usually we will not specify the norm $\|\cdot\|$ (as they are all equivalent on finite dimensional spaces). However, if necessary we will make the notation more precise by referring to the trace-norm $\|\cdot\|_1$, the Frobenius norm $\|\cdot\|_2$, the maximum absolute row sum norm  $\|\cdot\|_{row}$ or the spectral norm $\|\cdot\|_\infty$ of a matrix.
\end{itemize}

 
 \subsection{Proof of the Decomposition (\ref{linearization})}
 \label{seca1}
 
Recalling the notation of Section \ref{Section3} we have
\begin{align*}
&  \kappa\sqrt{N T}(\hat S_N( \kappa)- S_N) \\
= &   \kappa\sqrt{\frac{T}{N}}\sum_{i=1}^{N}  \big(\hat \beta_i(\kappa) -\beta_i -\hat \beta(\kappa) +\bar \beta\big)'  \big(\hat \beta_i(\kappa) +\beta_i -\hat \beta(\kappa) -\bar \beta\big)\\
=& \kappa  \sqrt{\frac{T}{N}} \sum_{i=1}^{N}  \Big(
  \big[ X_i(\kappa)'X_i(\kappa)\big]^{-1} X_i(\kappa)' \varepsilon_{i}(\kappa) - \frac{1}{N} \sum_{j=1}^{N}\big[ X_j(\kappa)'X_j(\kappa)\big]^{-1} X_j(\kappa)' \varepsilon_{j}(\kappa) \Big)' \\
 & \Big(2[\beta_i-\bar \beta]+\big[ X_i(\kappa)'X_i(\kappa)\big]^{-1} X_i(\kappa)' \varepsilon_{i}(\kappa) - \frac{1}{N} \sum_{j=1}^{N}\big[ X_j(\kappa)'X_j(\kappa)\big]^{-1} X_j(\kappa)' \varepsilon_{j}(\kappa) \Big)\\
=&2\kappa\sqrt{\frac{T}{N}}\sum_{i=1}^{N}  \varepsilon_{i}(\kappa)' X_i(\kappa) \big[ X_i(\kappa)'X_i(\kappa)\big]^{-1} [\beta_i-\bar \beta]\\
+&\kappa  \sqrt{\frac{T}{N}} \sum_{i=1}^{ N  }  \Big(
  \big[ X_i(\kappa)'X_i(\kappa)\big]^{-1} X_i(\kappa)' \varepsilon_{i}(\kappa) - \frac{1}{N} \sum_{j=1}^{N}\big[ X_j(\kappa)'X_j(\kappa)\big]^{-1} X_j(\kappa)' \varepsilon_{j}(\kappa) \Big)' \\
 & \Big(\big[ X_i(\kappa)'X_i(\kappa)\big]^{-1} X_i(\kappa)' \varepsilon_{i}(\kappa) - \frac{1}{N} \sum_{j=1}^{N}\big[ X_j(\kappa)'X_j(\kappa)\big]^{-1} X_j(\kappa)' \varepsilon_{j}(\kappa) \Big)\\
 =&2\kappa\sqrt{\frac{T}{N}}\sum_{i=1}^{N}  \varepsilon_{i}(\kappa)' X_i(\kappa) \big[ X_i(\kappa)'X_i(\kappa)\big]^{-1} [\beta_i-\bar \beta]\\
 +&\kappa  \sqrt{\frac{T}{N}} \sum_{i=1}^{ N  }  \varepsilon_i(\kappa)' X_i(\kappa) \big[ X_i(\kappa)'X_i(\kappa)\big]^{-2} X_i(\kappa)' \varepsilon_i(\kappa)\\
-& \kappa  \sqrt{\frac{T}{N}} \frac{1}{N} \sum_{i,j=1}^{N} \varepsilon_{i}(\kappa)X_i(\kappa) \big[ X_i(\kappa)'X_i(\kappa)\big]^{-1} \big[ X_j(\kappa)'X_j(\kappa)\big]^{-1} X_j(\kappa)' \varepsilon_{j}(\kappa),
\end{align*}
where we have used the identity $\sum_{i=1}^N ( \beta_i -\bar \beta ) =0$ in the second equality. This completes the proof of the representation \eqref{linearization}.  \\

\subsection{Proof of Proposition \ref{theorem1}}

\textbf{Proof of \eqref{h1a}: } 
We first define  
\begin{equation} \label{def_tilde_E_3}
\tilde E_3(\kappa) :=- \frac{1}{\sqrt{TN}}  \|\breve E_3(\kappa)\|^2,
\end{equation}
where
\begin{equation} \label{def_breve_E_3}
   \breve E_3(\kappa) := \Big(\frac{1}{\sqrt{NT\kappa} }\sum_{i=1}^{N} \sum_{t=1}^{\lfloor T \kappa \rfloor} \varepsilon_{i,t} (x_{i,t}'Q_i^{-1})_{k} \Big)_{k=1,...,K}~. 
\end{equation}

Recall that $(x_{i,t}'Q_i^{-1})_{k} $ here refers to the $k$th entry of the $K$-dimensional vector $x_{i,t}'Q_i^{-1}$.
Now 
$
E_3(\kappa) = \tilde E_3(\kappa) +o_P(1), 
$
holds uniformly in $\kappa$ according to Proposition \ref{prop_replacement_E}.  Thus, it suffices to show that $\,\,\sup_{\kappa \in I}|\tilde E_3(\kappa)|=o_P(1)$ . For this purpose we investigate the process $\breve E_3(\kappa)$ more closely:
Proposition \ref{prop_all_moments}, part $iv)$ implies for a $\zeta>0$, which can be made arbitrarily small, that
$$
    \E \Big[ \sup_{\kappa \in I}| (\breve  E_3(\kappa))_k|^2\Big]=\mathcal{O}(T^{\zeta}),
$$
for any $k=1,...,K$. Since $K$ is finite and non-increasing, this yields 
\begin{equation} \label{rate_r2}
    \E \Big[\sup_{\kappa \in I}\|\breve  E_3(\kappa) \|^2\Big]=\mathcal{O}(T^{\zeta}).
\end{equation}
Together \eqref{rate_r2} (with $\zeta$ sufficiently small) and the definition of $\tilde E_3$ in \eqref{def_tilde_E_3} entail that \\$\sup_{\kappa \in I}|\tilde E_3(\kappa)|=o_P(1)$. \\

\textbf{Proof of \eqref{h1}: } 
Proposition \ref{prop_replacement_E} implies
$$
E_2(\kappa)-\E[E_2(\kappa)|\mathbf{X}] = \tilde E_2(\kappa)-\E[\tilde E_2(\kappa)|\mathbf{X}] + o_P(1),
$$
where the term $\tilde E_2(\kappa)$ is defined by 
\begin{equation} \label{def_tilde_E_2}
\tilde E_2 (\kappa)= 
 \frac{1}{(T\kappa)\sqrt{TN}} \sum_{i=1}^{ N  }  \varepsilon_i(\kappa)' X_i(\kappa) Q_i^{-2} X_i(\kappa)' \varepsilon_i(\kappa) ~.
\end{equation}
Consequently, we only have to show
the desired order $o_P(1)$ for the  term 
\begin{align} \label{def_E_2_breve}
\breve E_2(\kappa) := & \tilde E_2 (\kappa)-\E[\tilde E_2(\kappa) |\mathbf{X}]  \\
=  &   \frac{1}{(T\kappa)\sqrt{TN}} \sum_{i=1}^{ N  } \Tr[Q_i^{-1}X_i(\kappa)' \{\varepsilon_i(\kappa)\varepsilon_i(\kappa)'-\E[\varepsilon_i(\kappa)\varepsilon_i(\kappa)'|\mathbf{X}] \} X_i(\kappa) Q_i^{-1}] \nonumber
\\
=& \frac{1}{(T\kappa)\sqrt{TN}}\sum_{i=1}^N \sum_{k=1}^K \sum_{s,t=1}^{\lfloor T \kappa \rfloor} ( x_{i,t}' Q_i^{-1})_k (\varepsilon_{i,t} \varepsilon_{i,s}-\sigma_{i,i} \tau(|t-s|)) ( x_{i,s}' Q_i^{-1})_k. \nonumber
\end{align}
Here we have used the  fact that $\E [ \varepsilon_{i,t}\varepsilon_{j,s}] = \sigma_{i,j} \tau(|t-s|)$  in the last step (see Assumption $(\varepsilon)$, $(4)$).
Proposition \ref{prop_all_moments}, part $vii)$ implies that for some $\zeta>0$ which can be made arbitrarily small the inequality
$$
    \E \Big [\sup_{\kappa \in I}  \breve E_2(\kappa)^2 \Big ] \le C T^{\zeta-1}
$$
    holds, where $C>0$ only depends on $\zeta$. 
Now, choosing $\zeta<1$ completes the proof of Proposition \ref{theorem1}.

\subsection{Proof of Proposition \ref{theorem4}}

The proof consists of four steps: First, we derive a decomposition of the difference $\hat B_N(\kappa)-\E[E_2(\kappa)|\mathbf{X}]$ (where $\hat B_N(\kappa)$ is defined in \eqref{bias_estimate}) into three remainder terms. Subsequently, we prove that each of the remainders is uniformly of order $o_P(1)$. \\

\textbf{Step 1: Decomposition of } $\hat B_N(\kappa)-\E[E_2(\kappa)|\mathbf{X}]$.\\
We begin by defining the “ideal” estimator for the residual covariance of the $i$th individual with time lag $h$ as 
\begin{equation} \label{def_tilde_xi} 
    \tilde \xi_i(h, \kappa) := \frac{(\varepsilon_i(\kappa))_{1:\lfloor T \kappa \rfloor -h}' (\varepsilon_i(\kappa))_{h+1:\lfloor T \kappa \rfloor}}{\lfloor T \kappa \rfloor-h-K} = {1 \over \lfloor T \kappa \rfloor-h-K} \sum_{i=1}^{\lfloor T \kappa \rfloor-h} \varepsilon_{i,t}\varepsilon_{i,t+h}.
\end{equation}
Therewith we define the “ideal estimated temporal matrix” as
$$
\tilde \Sigma_i(\kappa, b) := (\tilde \xi_i(|s-t|, \kappa) \mathbbm{1}\{|s-t| <b\}))_{1\le s,t\le T },
$$
which corresponds to the matrix  $\hat \Sigma_i(\kappa)$ defined in \eqref{def_hat_Sigma_i}, where $\hat \xi_i(h, \kappa)$ is replaced by $\tilde \xi_i(h, \kappa)$ everywhere. 
We  now use these matrices to derive the following decomposition
\begin{align*}
\hat B_N(\kappa)-\E[E_2(\kappa)|\mathbf{X}] = & \kappa\sqrt{\frac{T}{N}}\sum_{i=1}^N\Tr[\{\sigma_{i,i} \Sigma_{T }-\hat \Sigma_i(\kappa)\} X_i(\kappa) \big[ X_i(\kappa)'X_i(\kappa)\big]^{-2} X_i(\kappa)' ]\\
= & R ^{(1)}(\kappa)+R ^{(2)}(\kappa)+R ^{(3)}(\kappa), 
\end{align*}
where 
\begin{align*}
 R ^{(1)}(\kappa) := & \kappa\sqrt{\frac{T}{N}}\sum_{i=1}^N\Tr[\{\sigma_{i,i} \Sigma_{T }^{(b)}-\tilde \Sigma_i(\kappa))\} X_i(\kappa) \big[ X_i(\kappa)'X_i(\kappa)\big]^{-2} X_i(\kappa)' ] \\
 R ^{(2)}(\kappa):= &\kappa\sqrt{\frac{T}{N}}\sum_{i=1}^N\Tr[\{\sigma_{i,i} \Sigma_{T }^{(b^c)}\} X_i(\kappa) \big[ X_i(\kappa)'X_i(\kappa)\big]^{-2} X_i(\kappa)' ]\\
 R ^{(3)}(\kappa) := & \kappa\sqrt{\frac{T}{N}}\sum_{i=1}^N\Tr[\{\tilde \Sigma_i(\kappa))- \hat \Sigma_i(\kappa))\} X_i(\kappa) \big[ X_i(\kappa)'X_i(\kappa)\big]^{-2} X_i(\kappa)' ].
\end{align*}
Here $$
\Sigma_{T }^{(b)}:=(\tau(|s-t|) \mathbbm{1}\{|s-t| < b\})_{1 \le s,t \le T},
$$
 i.e. it equals the matrix $ \Sigma_{T }$, except that it is truncated to a bandwidth of $b$. The matrix  $ \Sigma_{T }^{(b^c)}$ is defined as the remainder $ \Sigma_{T }^{(b^c)}:= \Sigma_{T }- \Sigma_{T }^{(b)}$. We now investigate the three terms separately and show that they converge to $0$ uniformly.\\

\textbf{Step 2: Proof of} $ \sup_{\kappa \in I}|R ^{(1)}(\kappa)| = o_P(1)$.\\
 Proposition \ref{proposition_B} 
implies that uniformly in $\kappa$
\begin{equation}
\label{hd6} 
R ^{(1)}(\kappa) = \mathbbm{1}_{\mathscr{B}^c} \cdot R ^{(1)}(\kappa)  + o_P(1)
\end{equation}
where the event $\mathscr{B}$ 
is defined by 
\begin{equation} \label{bad_event}
\mathscr{B}:=\bigcup_{i=1}^N \{\sup_{\kappa \in I} \|X_i(\kappa)'X_i(\kappa)/\lfloor T \kappa \rfloor-Q_i\|_\infty \ge   C_2/2  \Big \}
\end{equation}
and  $ C_2:=(\max_{i=1,...,N} \|Q_i^{-1}\|_\infty)^{-1} $.
It thus suffices to show the desired rate for $\mathbbm{1}_{\mathscr{B}^c} \cdot R ^{(1)}(\kappa)$. We therefore consider the  expectation $E 
\Big [ \sup_{\kappa \in I} |\mathbbm{1}_{\mathscr{B}^c} \cdot R ^{(1)}(\kappa) | \Big ]$, which can be upper bounded as follows:
\begin{align} \label{R_1_bound}
& \E 
\Big [ \sup_{\kappa \in I} |\mathbbm{1}_{\mathscr{B}^c} \cdot R ^{(1)}(\kappa)| \Big ]  \\
~~~~~~~& \le \kappa  \sqrt{\frac{T}{N}}\sum_{i=1}^N
 \E  \Big [  \sup_{\kappa \in I}\Big \|\{\sigma_{i,i} \Sigma_{T }^{(b)}-\tilde \Sigma_i(\kappa))\Big\|_\infty \Big\|X_i(\kappa) \big[ X_i(\kappa)'X_i(\kappa)\big]^{-2} X_i(\kappa)' ] \mathbbm{1}_{\mathscr{B}^c} \Big\|_1 \Big] \nonumber\\
 ~~~~~~~&  \le  C  \frac{1}{\sqrt{NT}}\sum_{i=1}^N\E 
\Big [ \sup_{\kappa \in I}\Big \|\sigma_{i,i} \Sigma_{T }^{(b)}-\tilde \Sigma_i(\kappa)\Big\|_\infty  \Big ]. \nonumber
\end{align}
Here $\|\cdot\|_\infty$ denotes the spectral norm. In the first step, we have used the triangle inequality, as well as the identity $\Tr(AB)  \le \|A \|_\infty \| B \|_1$. In the second inequality, we have used $\kappa \le 1$ and Corollary \ref{regressor_stuff}$(i)$,
which is proved at the end of this section.
Turning to the right side of \eqref{R_1_bound}, we upper bound each expectation in the sum, using the inequality
$$
 \|\sigma_{i,i} \Sigma_{T }^{(b)}-\tilde \Sigma_i(\kappa)\|_\infty \le \|\sigma_{i,i} \Sigma_{T }^{(b)}-\tilde \Sigma_i(\kappa)\|_{row}~, 
$$
 where $\|\cdot\|_{row}$ is the maximum absolute row sum norm. Here we have used that the spectral norm of a symmetric matrix is upper bounded by the absolute row sum norm, which follows by a simple calculation.
Plugging this in on the right side of \eqref{R_1_bound} (and calculating the maximum absolute row sum norm) yields the estimate
\begin{equation}
\label{hd7}
\E 
\Big [ \sup_{\kappa \in I} | \mathbbm{1}_{\mathscr{B}^c} \cdot R^{(1)}(\kappa)| \Big ]  
\le 
C  \frac{1}{\sqrt{NT}}\sum_{i=1}^N
\sum_{h=0}^{b-1} \E  \Big [ \sup_{\kappa \in I} |\tilde \xi_i (h, \kappa) -\sigma_{i,i}\tau(h)|\Big ]~.
\end{equation}
Now it holds for any fixed $h$ that
\begin{align*}
\tilde \xi_i (h, \kappa) -\sigma_{i,i} \tau(h) & =\frac{(\varepsilon_i(\kappa))_{1:\lfloor T \kappa \rfloor-h}' (\varepsilon_i(\kappa))_{h+1:{\lfloor T \kappa \rfloor}}}{\lfloor T \kappa \rfloor-h-K}
-\frac{\E(\varepsilon_i(\kappa))_{1:{\lfloor T \kappa \rfloor}-h}' (\varepsilon_i(\kappa))_{h+1:{\lfloor T \kappa \rfloor}}}{{\lfloor T \kappa \rfloor}-h}, \\
&= (1+\mathcal{O}(1/T)) R^{(1,1)}(h, \kappa)+\mathcal{O}(1/T)
~,
\end{align*}
where
$$ 
 R^{(1,1)}(h, \kappa):= \frac{1}{{\lfloor T \kappa \rfloor}} \sum_{t=1}^{\lfloor T \kappa \rfloor -h}
\big \{ 
\varepsilon_{i,t}\varepsilon_{i,t+h}-\E[ \varepsilon_{i,t}\varepsilon_{i,t+h}] \big \}.
$$
Now by Proposition \ref{prop_all_moments}, part $vi)$ it follows that 
 \begin{equation}\label{hd25}
\E  \Big[ \sup_{\kappa \in I} | R^{(1,1)}(h, \kappa)|^2 
\Big ]
\le C b^2 T^{-1+\zeta}
      \end{equation}
      for some $C>0$ independent of $h$ and some  $\zeta>0$, which can be made arbitrarily small.  
Combining \eqref{hd25} and \eqref{hd7}
      yields
\begin{equation} \label{reference_for_b}
\E \Big [ \sup_{\kappa \in I} | \mathbbm{1}_{\mathscr{B}^c} \cdot R ^{(1)}(\kappa) | \Big ]  = \mathcal{O}(b^2 \sqrt{N} T^{1-\zeta})
=o( \sqrt{N/T^{\eta}})= o(1) ,
\end{equation}
where the last estimate follows from our choice  $b=\mathcal{O}(T^\gamma)$ with  $\gamma < (1-\eta/2)/2$) 
(note that   $N/T^\eta =o(1)$ by Assumption $(N)$ and 
that $\zeta$ can be chosen sufficiently small). 
Recalling that \eqref{hd6} holds uniformly with respect to $ \kappa \in I $
yields  
$$
\sup_{\kappa \in I} |R^{(1)}(\kappa)| = o_P(1)
$$
as claimed.\\

\textbf{Step 3: Proof of} $ \sup_{\kappa \in I}|R ^{(2)}(\kappa)| = o_P(1)$.\\
As in the second step, we use Proposition \ref{proposition_B}  from Supplement \ref{App_B}
to see that uniformly in $\kappa$
\begin{equation}
\label{hd6b} 
R ^{(2)}(\kappa) = \mathbbm{1}_{\mathscr{B}^c} \cdot R ^{(2)}(\kappa)  + o_P(1)
\end{equation}
where the event $\mathscr{B}$ is defined in \eqref{bad_event}.
By  simple application of the triangle inequality we can upper bound $\E \big[\sup_{\kappa \in I} |  \mathbbm{1}_{\mathscr{B}^c} \cdot R ^{(2)}(\kappa) | \big]$ by
\begin{align*}
& \frac{C \sqrt{T}}{\sqrt{ N}}\sum_{i=1}^N\|\{\sigma_{i,i} \Sigma_{T }^{(b^c)}\|_\infty \E \Big[ \sup_{\kappa \in I} \mathbbm{1}_{\mathscr{B}^c}\cdot \|X_i(\kappa) \big[ X_i(\kappa)'X_i(\kappa) \big]^{-2} X_i(\kappa)' \|_1 \Big]  \\
= &  \mathcal{O}(\sqrt{N}\| \Sigma_{T }^{(b^c)}\|_{\infty}/\sqrt{T}),
\end{align*}
where  we have used the boundedness of the regressor matrices, proved in Corollary \ref{regressor_stuff}, part $i)$.
Moreover,  as the maximum absolute row sum norm  provides an upper bound for the spectral norm of a  symmetric matrix (see step 2), we get the following estimate
\begin{equation}\label{cov_sum}
  \| \Sigma_{T }^{(b^c)}\|_\infty \le   \| \Sigma_{T }^{(b^c)}\|_{row}\le \sum_{|h|\ge b} |\tau(|h|)| \le \sum_{|h|\ge b} C  |h|^{-a\frac{M-1}{M}} \le C   b^{-a\frac{M-1}{M}+1 }.
\end{equation}
 In \eqref{cov_sum} we have used the well-known covariance inequality  (Lemma 3.11 in \cite{dehling})
$$
|\tau(|h|)|= |\E \varepsilon_{1,1}\varepsilon_{1,|h|+1}|/\sigma_{1,1} \le  C (\E( \varepsilon_{1,1})^{2M}\E( \varepsilon_{1,1})^{2M})^{1/(2M)} \alpha(|h|)^{1-1/M}
$$
for $\alpha$-mixing random variables
(the existence of moments of order $2M$ 
is guaranteed by Assumption $(\varepsilon), (3)$ and mixing by $(\varepsilon), (5)$)).
 Consequently, for the choice $b=T^\gamma$ with 
$$
\gamma > \frac{\eta-1}{2(a(M-1)/M-1)}
$$
 we get  
$$
\E \Big[ \sup_{\kappa \in I} |\mathbbm{1}_{\mathscr{B}^c} \cdot R ^{(2)}(\kappa)|\Big ] =\mathcal{O}(\sqrt{N/T}b^{-a\frac{M-1}{M}+1})=o(1).
$$
In the last step, we have used that $\sqrt{N/T^\eta} = o(1)$ by Assumption $(N)$. Now \eqref{hd6b} implies
$$
    \sup_{\kappa \in I} | R ^{(2)}(\kappa)| = o_P(1).\\
$$

\textbf{Step 4: Proof of} $ \sup_{\kappa \in I}|R ^{(3)}(\kappa)| = o_P(1)$.\\
We use again Proposition \ref{proposition_B} 
to see that uniformly in $\kappa$
\begin{equation}
\label{hd6c} 
R ^{(3)}(\kappa) = \mathbbm{1}_{\mathscr{B}^c} \cdot R ^{(3)}(\kappa)  + o_P(1).
\end{equation}
We now investigate $\mathbbm{1}_{\mathscr{B}^c} \cdot R ^{(3)}(\kappa)$. 
Using the same techniques as in the second step, we see that 
\begin{equation}
    \label{hd31}
\E \big [  \sup_{\kappa \in I} | \mathbbm{1}_{\mathscr{B}^c} \cdot R ^{(3)}(\kappa)| \Big ]
= \mathcal{O}_P \Big (\frac{\sqrt{N}}{\sqrt{T}}\max_{i=1,...,N} \E  \Big [ 
\sup_{\kappa \in I} \|\tilde \Sigma_i(\kappa))- \hat \Sigma_i(\kappa))\|_{row}\Big] \Big ).
\end{equation}
The expectation in the maximum  can be bounded independently of $i$. To see this, we further analyze it as follows:
\begin{align}
\label{hd32}
     \E
    \Big [ 
      \sup_{\kappa \in I} \|\tilde \Sigma_i(\kappa))- \hat \Sigma_i(\kappa))\|_{row} \Big ]  \le &  \sum_{h=0}^{b-1} \E  \Big [  \sup_{\kappa \in I} |\tilde \xi_i (h, \kappa) -\hat \xi_i (h, \kappa)|\Big] \\
    \le & b \max_{0 \le h <b } \E  \Big [  \sup_{\kappa \in I} |\tilde \xi_i (h, \kappa) -\hat \xi_i (h, \kappa)| \Big ].
    \nonumber
\end{align}
We now show for some fixed but arbitrary $0 \le h <b$, that the expectation on the right is bounded by
$C T^{-1/2+\zeta}$ (with $C$ also independent of  $h \in \{ 1,  \ldots, b\}$ and $\zeta>0$ arbitrarily small). Recalling that
$$
y_i(\kappa)-X_i(\kappa)'\hat \beta_i =\varepsilon_i(\kappa)-X_i(\kappa)'[X_i(\kappa)'X_i(\kappa)]^{-1}X_i(\kappa) \varepsilon_i(\kappa)~,
$$
as well as the definitions of $\tilde \xi_i$ (in \eqref{def_tilde_xi}) and $\hat \xi_i $ (in \eqref{def_hat_xi}),
we obtain the decomposition
\begin{equation} \label{hd30}
     |\tilde \xi_i (h, \kappa) -\hat \xi_i (h, \kappa)| \le R^{(3,1)}(\kappa)+R^{(3,2)}(\kappa)+R^{(3,3)}(\kappa),
\end{equation}
where 
\begin{align*}
    R^{(3,1)}(\kappa):= &  \Big |\frac{(\varepsilon_i(\kappa) )_{1:\lfloor T\kappa\rfloor-h}' (X_i(\kappa)'[X_i(\kappa)'X_i(\kappa)]^{-1}X_i(\kappa) \varepsilon_i(\kappa))_{h+1:\lfloor T\kappa\rfloor}}{\lfloor T\kappa\rfloor -h-K} \Big | ~, \\
    R^{(3,2)}(\kappa) := &  \Big|\frac{(X_i(\kappa)'[X_i(\kappa)'X_i(\kappa)]^{-1}X_i(\kappa) \varepsilon_i(\kappa) )_{1:\lfloor T\kappa\rfloor-h}'}{\lfloor T\kappa\rfloor -h-K}\\
    &\times \frac{(X_i(\kappa)'[X_i(\kappa)'X_i(\kappa)]^{-1}X_i(\kappa) \varepsilon_i(\kappa))_{h+1:\lfloor T\kappa\rfloor}}{\lfloor T\kappa\rfloor -h-K}  \Big|~,  \\
    R^{(3,3)}(\kappa) := &  \Big |\frac{ (X_i(\kappa)'[X_i(\kappa)'X_i(\kappa)]^{-1}X_i(\kappa) \varepsilon_i(\kappa))_{1:\lfloor T\kappa\rfloor-h}'( \varepsilon_i(\kappa) )_{h+1:\lfloor T\kappa\rfloor}}{\lfloor T\kappa\rfloor -h-K} \Big  | ~.
\end{align*}
We can show the desired bound for each of these remainders individually, i.e.
$$
\E \Big [ \sup_{\kappa \in I} |R^{(3,j)}(\kappa) | 
\Big ]
\le C T^{-1/2+\zeta}~,~~ j=1,2,3~, 
$$
where we focus on $R^{(3,1)}(\kappa)$ (the other terms can be treated similarly). 
The Cauchy-Schwarz inequality (applied twice, first for vectors, then for expectations) yields  
\begin{equation} \label{hd29}
    \E  \Big [ \sup_{\kappa \in I} | R^{(3,1)}(\kappa)|
    \Big ] 
    \le C R^{(3,1,1)} \cdot R^{(3,1,2)}T^{-1/2},
    \end{equation}
    where
\begin{align*}
    R^{(3,1,1)} := & \Big \{\E  \Big  [  \sup_{\kappa \in I}\|(\varepsilon_i(\kappa)/\sqrt{T} )_{1:\lfloor T\kappa\rfloor-h}\|^2 \Big  ] \Big  \}^{1/2}~,  \\
    R^{(3,1,2)} := & \Big  \{ \E \Big  [  \sup_{\kappa \in I} \|(X_i(\kappa)'[X_i(\kappa)'X_i(\kappa)]^{-1}X_i(\kappa) \varepsilon_i(\kappa))_{h+1:\lfloor T\kappa\rfloor}\|^2 \cdot  \mathbbm{1}_{\mathscr{B}^c}\Big  ]  \Big  \}^{1/2}~.
\end{align*}
Proposition \ref{prop_all_moments}, part $iii)$ implies that  $( R^{(3,1,1)} ) ^2 \le C T^{\zeta}$. 
Next we turn to $ R^{(3,1,2)}$. Notice that according to Corollary \ref{regressor_stuff}, part $ii)$
\begin{align*}
& \mathbb{E}[\|X_i(\kappa)'[X_i(\kappa)'X_i(\kappa)]^{-1}X_i(\kappa) \varepsilon_i(\kappa)\|^2\mathbbm{1}_{\mathscr{B}^c}|\mathbf{X}]  \le C,
\end{align*}
which gives  
\begin{align*}
(R^{(3,1,2)})^2 \le 
    & (\breve R^{(3,1,2)} )^2+ C ~,
    \end{align*}
{where} 
\begin{align*}
     & (\breve R^{(3,1,2)} )^2: =    \E \Big[  \sup_{\kappa \in I} \|X_i(\kappa)'[X_i(\kappa)'X_i(\kappa)]^{-1}X_i(\kappa) \varepsilon_i(\kappa)\|^2\cdot \mathbbm{1}_{\mathscr{B}^c}\\
     & \quad \quad\quad \quad\quad -\E[\|X_i(\kappa)'[X_i(\kappa)'X_i(\kappa)]^{-1}X_i(\kappa) \varepsilon_i(\kappa)\|^2\cdot \mathbbm{1}_{\mathscr{B}^c}|\mathbf{X}] \Big] \\
  & \quad \quad
   =  \E \Big[ \sup_{\kappa \in I}\Tr\big\{X_i(\kappa) [\varepsilon_i(\kappa) \varepsilon_i(\kappa)'-\E[\varepsilon_i(\kappa) \varepsilon_i(\kappa)'|\mathbf{X}] \big] X_i(\kappa)'[X_i(\kappa)'X_i(\kappa)]^{-1} \big\} \cdot \mathbbm{1}_{\mathscr{B}^c} \Big] \\
& \quad \quad   \le  C \E \Big[ \sup_{\kappa \in I} \|X_i(\kappa) \big[\varepsilon_i(\kappa) \varepsilon_i(\kappa)'-\E[\varepsilon_i(\kappa) \varepsilon_i(\kappa)'|\mathbf{X}] \big] X_i(\kappa)' \|/T \cdot \mathbbm{1}_{\mathscr{B}^c}\Big]~.
\end{align*}
Here we have used the fact hat the matrix  $ X_i(\kappa)'[ X_i(\kappa)'X_i(\kappa)]^{-1}X_i(\kappa)$ is a projection of rank $K$
and that on the event  $\mathscr{B}^c$ the matrix $[X_i(\kappa)'X_i(\kappa)/( T \kappa )]^{-1}$ is bounded
(recall the definition of the event $\mathscr{B}$ in \eqref{bad_event}). We can now define the matrix $M(\kappa) :=X_i(\kappa) \big[\varepsilon_i(\kappa) \varepsilon_i(\kappa)'-\E[\varepsilon_i(\kappa) \varepsilon_i(\kappa)'|\mathbf{X}] \big] X_i(\kappa)' $, which is entry-wise given by the partial sum process 
$$
(M(\kappa))_{l,k} =\sum_{1 \le s,t \le \lfloor T \kappa \rfloor} (x_{i,t})_k (x_{i,s})_l (\varepsilon_{i,t}\varepsilon_{i,s} - \E[ \varepsilon_{i,t}\varepsilon_{i,s} ]).
$$
Proposition \ref{prop_all_moments}, part $v)$ applied to each of the $K \times K$ entries implies for some arbitrarily small $\zeta>0$, that
$$
 \E \Big [  \sup_{\kappa \in I} \|M(\kappa)\|^2/T
 \Big ] 
 \le C T^{\zeta}.
 $$
These considerations show that $R^{(3,1,1)}$ and  $R^{(3,1,2)}$ are each of order $\mathcal{O}(T^\zeta)$  and \eqref{hd29} 
implies that $R^{(3,1)}(\kappa)$ is of order $\mathcal{O}(T^{-1/2+2\zeta})$ uniformly with respect to
$ \kappa \in I$.
The same rate can be shown for  the terms $R^{(3,2)}(\kappa)$ and $R^{(3,3)}(\kappa)$ in \eqref{hd30}. 
Observing  \eqref{hd31} and \eqref{hd32}
this implies that 
$$
\E  \Big [ \sup_{\kappa \in I} | \mathbbm{1}_{\mathscr{B}^c} \cdot R ^{(3)}(\kappa) | \Big ]\le C b \sqrt{N}/T^{1-\zeta} = o(1)
,
$$
 by the choice of $b$ (we have seen in \eqref{reference_for_b} already that even $b^2 \sqrt{N}/T^{1-\zeta} = o(1)$).
Finally, the assertion of step 4 follows by \eqref{hd6c}. This concludes the proof.

\hfill $\square$

\begin{cor} \label{regressor_stuff}
Under the assumptions of Proposition \ref{theorem4} it holds for some $C>0$  independent of $i, N$ and $T$, that
\begin{itemize}
     \item[i)] $\qquad \sup_{\kappa \in I}
\mathbbm{1}_{\mathscr{B}^c} \cdot \|X_i(\kappa) \big[ X_i(\kappa)'X_i(\kappa)\big]^{-2} X_i(\kappa)' \rfloor\| \le C/T,
$
\item[ii)] $ \qquad
 \sup_{\kappa \in I} \mathbb{E}[ \|X_i(\kappa)'[X_i(\kappa)'X_i(\kappa)]^{-1}X_i(\kappa) \varepsilon_i(\kappa)\|^2|\mathbf{X}] \le C
$
\end{itemize}

\end{cor}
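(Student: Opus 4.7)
My plan is to reduce both parts to elementary linear-algebraic identities for the rank-$K$ PSD matrix $M_i(\kappa):=X_i(\kappa)[X_i(\kappa)'X_i(\kappa)]^{-2}X_i(\kappa)'$ and the rank-$K$ orthogonal projection $P_i(\kappa):=X_i(\kappa)[X_i(\kappa)'X_i(\kappa)]^{-1}X_i(\kappa)'$; the substantive inputs are the spectral lower bound on $X_i(\kappa)'X_i(\kappa)$ provided by $\mathscr{B}^c$ in part (i), and the summability of the autocovariances $\tau$ (already invoked in \eqref{cov_sum}) in part (ii).

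\emph{Part (i).} Cyclicity of the trace gives $\Tr(M_i(\kappa))=\Tr([X_i(\kappa)'X_i(\kappa)]^{-1})$, and since $M_i(\kappa)$ is PSD of rank at most $K$, this trace dominates its spectral, Frobenius and trace norms up to the fixed factor $K$. By the definition $C_2=(\max_i\|Q_i^{-1}\|_\infty)^{-1}$ one has $\lambda_{\min}(Q_i)\ge C_2$ for every $i$, so on $\mathscr{B}^c$ Weyl's inequality combined with \eqref{bad_event} yields uniformly in $i$ and $\kappa\in I$
\begin{equation*}
\lambda_{\min}\bigl(X_i(\kappa)'X_i(\kappa)/\lfloor T\kappa\rfloor\bigr)\ge\lambda_{\min}(Q_i)-C_2/2\ge C_2/2.
\end{equation*}
Since $\lfloor T\kappa\rfloor\ge pT-1$, each eigenvalue of $[X_i(\kappa)'X_i(\kappa)]^{-1}$ is at most $2/(C_2(pT-1))$, and summing the $K$ of them produces $\Tr([X_i(\kappa)'X_i(\kappa)]^{-1})\le C/T$ with $C$ independent of $i,N,T,\kappa$.

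\emph{Part (ii).} The squared norm equals the quadratic form $\varepsilon_i(\kappa)'P_i(\kappa)\varepsilon_i(\kappa)$, so by Assumption $(\varepsilon)(2),(4)$ together with the fact that $\varepsilon_i(\kappa)$ vanishes in positions beyond $\lfloor T\kappa\rfloor$,
\begin{equation*}
\E\bigl[\varepsilon_i(\kappa)'P_i(\kappa)\varepsilon_i(\kappa)\,\big|\,\mathbf{X}\bigr]=\sigma_{i,i}\Tr\bigl(P_i(\kappa)\widetilde\Sigma_T(\kappa)\bigr)\le\sigma_{i,i}\|\widetilde\Sigma_T(\kappa)\|_\infty\cdot K,
\end{equation*}
where $\widetilde\Sigma_T(\kappa)$ is the $T\times T$ matrix whose top-left $\lfloor T\kappa\rfloor$-block equals the corresponding principal submatrix of the Toeplitz matrix $\Sigma_T=(\tau(|s-t|))_{s,t}$ and which vanishes elsewhere, and where the last bound uses $P_i(\kappa)^2=P_i(\kappa)$ with $\Tr(P_i(\kappa))\le K$. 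Since a principal submatrix of a PSD matrix has operator norm bounded by that of the full matrix, $\|\widetilde\Sigma_T(\kappa)\|_\infty\le\|\Sigma_T\|_\infty\le\|\Sigma_T\|_{row}\le\sum_{h\in\Z}|\tau(h)|$. The $\alpha$-mixing covariance inequality used in \eqref{cov_sum}, combined with $(\varepsilon)(3),(5)$, gives $|\tau(h)|\le C|h|^{-a(M-1)/M}$; the exponent exceeds $2$ under \eqref{hd14}, so the series converges. Finally $\tau(0)=1$ and $(\varepsilon)(3)$ imply $\sigma_{i,i}\le C$ uniformly in $i$, whence the claim.

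Both arguments are essentially matrix-analytic, and no substantive obstacle is anticipated: the rank-$K$ structure of $M_i(\kappa)$ and $P_i(\kappa)$ converts spectral bounds into trace bounds losing only a constant factor, while the summability of $|\tau(h)|$, already established in the treatment of $R^{(2)}$ in Proposition \ref{theorem4}, controls $\|\Sigma_T\|_\infty$ deterministically in $T$ for part (ii).
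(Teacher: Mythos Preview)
Your proposal is correct and follows essentially the same approach as the paper: both parts reduce to trace identities for the rank-$K$ matrices $M_i(\kappa)$ and $P_i(\kappa)$, with part (i) using the eigenvalue lower bound guaranteed on $\mathscr{B}^c$ and part (ii) using $\|P_i(\kappa)\|_1\le K$ together with boundedness of $\sigma_{i,i}\|\Sigma_T\|_\infty$. Your write-up is in fact more explicit than the paper's on the Weyl step in (i) and on the summability argument for $\|\Sigma_T\|_\infty$ and the bound on $\sigma_{i,i}$ in (ii).
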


\begin{proof}
We begin the proof of $i)$ with the following calculation:
\begin{align}
& \|X_i(\kappa) \big[ X_i(\kappa)'X_i(\kappa)\big]^{-2} X_i(\kappa)' \|_1 \\
 = &  \Tr \big\{X_i(\kappa) \big[ X_i(\kappa)'X_i(\kappa)\big]^{-2} X_i(\kappa)' \big\} = \Tr \big\{ \big[ X_i(\kappa)'X_i(\kappa)/(T\kappa)\big]^{-1} \big\}/(T\kappa). \nonumber
\end{align}
Here we have used that for  a symmetric matrix with non-negative eigenvalues 
the  trace norm  $\| \cdot \|_1 $ is identical to the trace.  \\
Now Proposition \ref{proposition_B} from Supplement \ref{App_B}
implies uniform boundedness of the product $\mathbbm{1}_{\mathscr{B}^c} \cdot \Tr \big\{ \big[ X_i(\kappa)'X_i(\kappa)/(T\kappa)\big]^{-1} \big\}$, which completes the proof.  \\
For part $ii)$, we can similarly rewrite the trace norm via the trace, which yields:
\begin{align*}
& \mathbb{E}[ \|X_i(\kappa)'[X_i(\kappa)'X_i(\kappa)]^{-1}X_i(\kappa) \varepsilon_i(\kappa)\|^2|\mathbf{X}] \\
& ~~~~~~~ =\mathbb{E} \big[ \Tr\{\varepsilon_i(\kappa)'X_i(\kappa)'[ X_i(\kappa)'X_i(\kappa)]^{-1}X_i(\kappa) \varepsilon_i(\kappa)\}|\mathbf{X}\big]\\
& ~~~~~~~  = \Tr \big \{ \sigma_{i,i} \Sigma_T \mathbb{E}\big[   X_i(\kappa)'[ X_i(\kappa)'X_i(\kappa)]^{-1}X_i(\kappa) |\mathbf{X}\big] \big \} \\
& ~~~~~~~   \le \sigma_{i,i}\|\Sigma_T\|_\infty
   \big\|  X_i(\kappa)'[ X_i(\kappa)'X_i(\kappa)]^{-1}X_i(\kappa) \big \|_1   \le C.
\end{align*}
In the last step we have used that $X_i(\kappa)'[ X_i(\kappa)'X_i(\kappa)]^{-1}X_i(\kappa)$ is a projection matrix of rank $\le K$, and thus the trace norm is bounded by $K$. 

\end{proof}

\subsection{Proof of Proposition \ref{theorem5}}
Proposition \ref{prop_replacement_E} implies that uniformly in $\kappa$
\begin{equation} \label{rep_e_1}
    E_1(\kappa) = \tilde E_1 (\kappa) +o_P(1),
\end{equation}
where 
\begin{equation} \label{def_tilde_E_1} 
\tilde E_1(\kappa) := \frac{2 }{\sqrt{NT}}\sum_{i=1}^{N}  \varepsilon_{i}(\kappa)' X_i(\kappa)  Q_i^{-1}  [\beta_i-\bar \beta].
\end{equation}
Therefore the weak convergence of
$\{ E_1(\kappa)  \}_{\kappa \in I} $ follows 
from the weak convergence of the process 
$\{ \tilde  E_1(\kappa)  \}_{\kappa \in I} $. 
Notice that we can write $\tilde E_1$ as a partial sum process as follows:
$$
\tilde E_1(\kappa) =\frac{2}{\sqrt{NT}} \sum_{t=1}^{\lfloor T \kappa \rfloor } \sum_{i=1}^N \varepsilon_{i,t} x_{i,t}'  Q_i^{-1} [\bar \beta_i-\beta]
.$$
We now conclude a weak invariance principle for the stochastic process $\tilde E_1(\kappa)/\tilde \LR_N$, where $\tilde \LR_N^2:=\mathbb{E}[(\tilde E_1(1))^2]$ is the variance.
We therefore employ Theorem 2.7 in \cite{hafouta2021}. 
The assumptions of this theorem can be checked as follows: Assumption 2.1 is met with $q=4$, $p>2$ but sufficiently close to $2$, $B_n=1$ and $\gamma(j) := Cj^{-a(1/2-1/q)}$ (where $C, a$ are chosen in our Assumption \ref{assumption1} $(\epsilon), 5.$). Notice that the $q$th moment of our random variables $\frac{1}{\sqrt{N}}\sum_{i=1}^N \varepsilon_{i,t} x_{i,t}$ are uniformly bounded according to Proposition \ref{moments_for_mixing} in Supplement \ref{App_D}, where we choose $\phi=c=4$, $\chi = 2M-4$, $d=1$ and summability is satisfied by our Assumption $(\varepsilon), (5)$. Accordingly, $A_n$ (defined in their eq. (2.5)) is uniformly bounded and Assumption 2.3 follows directly ($\beta_n \le C$ for some fixed $C$ and all $n$ - where $\beta_n$ here corresponds to the notation in \cite{hafouta2021}). Then Theorem 2.7 in \cite{hafouta2021} entails 
\begin{equation}\label{Eq_unif_invariance}
d_P(\tilde E_1/\tilde \LR_N, \mathbb{B}) \le C \delta_N,
\end{equation}
for some sequence $\delta_N \to 0$ and $C$ independent of $\tilde \LR_N$. Here $d_P(\cdot, \cdot)$, denotes the Prokhorov metric on the Skorohod space $D[0,1]$, which is equipped with the uniform norm and thus a subspace of the bounded functions on the unit interval. We recall the definition of the Prokhorov metric for two probability distributions $P_1, P_2$ as
\begin{align}\label{Eq_def_prokhorov}
d_P(P_1, P_2) := & \inf \big\{r>0: P_1(B) \le P_2(B^r) +r,\\
& \qquad  \qquad \,\,\quad P_2(B) \le P_1(B^r) +r,\forall \,\, B \,\, \textnormal{Borel set}  \big\}, \nonumber
\end{align}
where $B^r$ is the open $r$-environment of $B$. Defining for random variables $X \sim P_1, Y \sim P_2$ the distance  $d_P(X, Y):=d_P(P_1, P_2)$ explains the expression $d_P(\tilde E_1/\tilde \LR_N, \mathbb{B})$. We refer the reader for details on the Skorohod space to \cite{pollard} and for the Prokhorov metric to \cite{prokhorov}.

Finally, we notice that $\tilde \lambda_N^2$ converges to the long run variance $\lambda^2$ (defined in \eqref{long_run_variance}). To see this notice that  according to Assumptions $(X)$, $(1)$ and $(\LR)$
\begin{align} \label{Eq_conv_LR}
\mathbb{E} \big[(\tilde E_1(1))^2 \big]&= \frac{4 }{N} \sum_{i,j=1}^N \sigma_{i,j}[\beta_i-\bar \beta]' Q_i^{-1}\mathbb{E}\Big[\frac{X_i ' \Sigma_{T} X_j }{ T } \Big]Q_j^{-1} [\beta_j-\bar \beta]\\
 = & \frac{4 }{N} \sum_{i,j=1}^N \sigma_{i,j}[\beta_i-\bar \beta]' Q_i^{-1} U_{i,j} Q_j^{-1} [\beta_j-\bar \beta] +o(1) \to  \LR^2. \nonumber
\end{align}

\subsection{Proof of Theorem \ref{theorem_test}} \label{proof_of_test}

Recall the definition of the class $\mathcal{T}$ in \eqref{Eq_def_class_T}, of the variance $\lambda^2_N(\bsec )$ in \eqref{ Eq_def_lambdauni}, of the process $\tilde E_1$ in \eqref{def_tilde_E_1} and of $\tilde \LR_N^2(\bsec ):=\mathbb{E}[(\tilde E_1(1))^2]$. Here $\tilde E_1$ depends on the sequence of slopes $\bsec =(\beta_n)_{n \in \N}$, even though this is not explicit in the notation. Notice  that the variance $\LR^2_N(\bsec )$ is asymptotically identical to $\tilde \LR_N^2(\bsec )$ according to \eqref{Eq_conv_LR}. More precisely we have 
$$
\sup_{\bsec  \in \mathcal{T}} \frac{|\LR^2_N(\bsec )-\tilde \LR^2_N(\bsec )|}{\LR^2_N(\bsec )} =o(1),
$$
where we have used the boundedness of $\LR^2_N(\bsec )$ from below. This implies, by definition of the class $\mathcal{T}$, that $\tilde \LR^2_N(\bsec ) \ge c/2>0$ for sufficiently large $N$, simultaneously for all $\bsec  \in \mathcal{T}$. 
Finally, recall the definition of the Prokhorov metric in \eqref{Eq_def_prokhorov}, which we denote by $d_P$, no matter on which metric space the probability measures are defined (this will always be clear from context). In the case of real valued random variables, $d_P$ is an upper bound for the better known L\'evy metric on the real line.\\
Now consider the following map $\Psi$ defined for a bounded function $f$ on the unit interval as
$$
\Psi(f) :=  \frac{f(1)}{\Big\{\int_I \kappa^2 (f(\kappa)-\kappa f(1))^2 d\nu(\kappa)\Big\}^{1/2}}.
$$
To avoid ill-defined cases, we set $\Psi(f)=0$, if the denominator is $0$. We also define a subspace $D^\rho[0,1]$ of $D[0,1]$ as those functions, which satisfy $\sup_{\kappa \in [0,1]} |f(\kappa)| \le 1/\rho$ and $\big\{\int_I \kappa^2 (f(\kappa)-\kappa f(1))^2 d\nu(\kappa)\big\}^{1/2}\ge \rho$, where $\rho >0$ is fixed. Notice that restricted on $D^\rho[0,1]$ the map $\Psi$ is Lipschitz continuous, where the Lipschitz constant $C_\rho$ grows as $\rho$ decreases. It is clear that $\lim_{\rho \downarrow 0} \mathbb{P}(\mathbb{B} \in D^\rho[0,1])=1$. Similarly, it holds that 
\begin{equation} \label{Eq_E_process_rho_space}
\lim_{\rho \downarrow 0}\lim_{N \to \infty }\inf_{\bsec  \in \mathcal{T}} \mathbb{P}(\tilde E_1/\tilde \lambda_N(\bsec ) \in D^\rho[0,1])=1.
\end{equation}
To see this fact, we check that 
\begin{equation} \label{Eq_E_process_rho}
\lim_{\rho \downarrow 0}\lim_{N \to \infty }\inf_{\bsec  \in \mathcal{T}} \mathbb{P}(\sup_{\kappa \in [0,1]} |\tilde E_1(\kappa)|/\tilde \lambda_N(\bsec ) \le 1/\rho)=1.
\end{equation}
We first notice that
\begin{align*}
& \mathbb{P}(\sup_{\kappa \in [0,1]} |\tilde E_1(\kappa)|/\tilde \lambda_N(\bsec ) \le 1/\rho)-\mathbb{P}(\sup_{\kappa \in [0,1]} |\mathbb{B}(\kappa)| \le 1/\rho)\\
\le & C d_P\big(\sup_{\kappa \in [0,1]} |\tilde E_1(\kappa)|/\tilde \lambda_N(\bsec ), \sup_{\kappa \in [0,1]} |\mathbb{B}(\kappa)|\big).
\end{align*}
Here we have used that the (uniform) distance between distribution functions is bounded by some constant times the L\'evy metric, if either of them  is Lipschitz continuous (in our case that of $\sup_{\kappa \in [0,1]} |\mathbb{B}(\kappa)|$) and the L\'evy metric is bounded by the Prokhorov metric. Now, as
the map $f \mapsto \sup_{\kappa \in [0,1]} |f(\kappa)|$ is Lipschitz continuous with constant $1$,  Lemma \ref{lemma_prokhorov} implies that $
d_P(\sup_{\kappa \in [0,1]} |\tilde E_1(\kappa)|/\tilde \lambda_N(\bsec ), \sup_{\kappa \in [0,1]} |\mathbb{B}(\kappa)|) \le d_P(\tilde E_1, \mathbb{B})$, which in turn is bounded by $C \delta_N$ according to \eqref{Eq_unif_invariance}.
As a consequence the left side of \eqref{Eq_E_process_rho} equals
$$
\lim_{\rho \downarrow 0}\lim_{N \to \infty } \mathbb{P}\big(\sup_{\kappa \in [0,1]} |\mathbb{B}(\kappa)| \le 1/\rho\big) + \mathcal{O}(\delta_N) = \lim_{\rho \downarrow 0} \mathbb{P}\big(\sup_{\kappa \in [0,1]} |\mathbb{B}(\kappa)| \le 1/\rho\big)=1.
$$
Using similar arguments for the integral condition shows  \eqref{Eq_E_process_rho_space}. 

We now investigate the self-normalized statistic $\hat W_N = \Psi(\{\kappa \sqrt{ N   T  }(\tilde S_N(\kappa)- \Delta)\}_{\kappa \in I})$ (which as we recall also depends on the slope sequence $\bsec =(\beta_n)_{n \in \mathbb{N}}$).  Introducing the conditional probability measure
$$
\mathbb{P}_\rho(\cdot):=\mathbb{P}(\cdot|\{\tilde E_1/\tilde \lambda_N(\bsec )\in D^\rho[0,1]\} \, \forall \bsec  \in \mathcal{T}, \mathbb{B} \in D^\rho[0,1]\})
$$
gives for any $y \in \R$
\begin{align} \label{Eq_mop_up_terms_1}
    &\mathbb{P}_\rho\big(\hat W_N \le y\big) = \mathbb{P}_\rho\big(\Psi(\{\kappa \sqrt{ N   T  }(\tilde S_N(\kappa)- \Delta)\}_{\kappa \in I}) \le y\big)\nonumber \\
    = & \mathbb{P}_\rho\big(\Psi(\{[\tilde E_1(\kappa) +\sqrt{NT}(S_N-\Delta)]/\tilde \lambda_N(\bsec )\}_{\kappa \in I}) \le y\big)+o(1).
\end{align}
Here we have used Propositions \ref{theorem1} and \ref{theorem4}, together with the fact \eqref{rep_e_1} and the Lipschitz continuity of $\Psi$ restricted on $D^\rho[0,1]$ to get the second equality. Note that the remainder does not depend on $\bsec $, because all vanishing terms in the cited propositions are uniformly bounded in $\bsec $. Going from a vanishing term inside the distribution function to an $o(1)$ term requires the distribution function of $\Psi(\{[\tilde E_1(\kappa) +\sqrt{NT}(S_N-\Delta)]/\tilde \lambda_N(\bsec )\}_{\kappa \in I})$ to be (asymptotically) continuous, which it is, as we will see below. Finally, we have also employed the fact that for any real number $x \neq 0$ and any function $f$ we have $\Psi(f)=\Psi(f/x)$, to divide by the standard deviation $\tilde \lambda_N(\bsec )$, which is a consequence of the self-normalizing structure of the functional $\Psi$.

Now we turn to the convergence of the distribution function of the real valued, random variable $\Psi(\{[\tilde E_1(\kappa) +\sqrt{NT}(S_N-\Delta)]/\tilde \lambda_N(\bsec )\}_{\kappa \in I})$. For the moment, let us set $S_N-\Delta=x/\sqrt{NT}$. The random variable
$\Psi(\{\mathbb{B}(\kappa)+x/\lambda(\bsec )\}_{\kappa \in I})$
has a continuous density and thus a uniformly continuous distribution function. In the following, we denote by $\mathbb{P}^X$ the measure induced by a random variable $X$ and by $\mathbb{P}^X_\rho$ the conditional image measure. It follows that  
\begin{align}  \label{Eq_mop_up_terms_2}
    & \sup_{y \in \R}\big|\mathbb{P}_\rho\big(\Psi(\{[\tilde E_1(\kappa) +x]/\tilde \lambda_N(\bsec )\}_{\kappa \in I}) \le y\big)-\mathbb{P}_\rho\big(\Psi(\{\mathbb{B}(\kappa)+x/ \lambda(\bsec )\}_{\kappa \in I})  \le y\big)\big|\nonumber\\
    =& \sup_{y \in \R}\big|\mathbb{P}\big(\Psi(\{[\tilde E_1(\kappa) +x]/\tilde \lambda_N(\bsec )\}_{\kappa \in I}) \le y\big)-\mathbb{P}\big(\Psi(\{\mathbb{B}(\kappa)+x/ \lambda(\bsec )\}_{\kappa \in I})  \le y\big)\big| \nonumber\\
    & +Rem^{(1)}(\rho)\nonumber\\
    \le & C  d_P(\mathbb{P}^{\Psi(\{[\tilde E_1(\kappa) +x]/\tilde \lambda_N(\bsec )\}_{\kappa \in I})},\mathbb{P}^{ \Psi(\{\mathbb{B}(\kappa)+x/ \lambda(\bsec )\}_{\kappa \in I})}+Rem^{(1)}(\rho)\nonumber\\
    \le & C  d_P(\mathbb{P}_\rho^{\Psi(\{[\tilde E_1(\kappa) +x]/\tilde \lambda_N(\bsec )\}_{\kappa \in I})},\mathbb{P}_\rho^{ \Psi(\{\mathbb{B}(\kappa)+x/ \lambda(\bsec )\}_{\kappa \in I})}+Rem^{(1)}(\rho)\nonumber\\
    \le & C C_\rho d_P(\mathbb{P}_\rho^{\tilde E_1/\tilde \LR_N(\bsec )}, \mathbb{P}_\rho^{\mathbb{B}}) +Rem^{(1)}(\rho)\nonumber\\
    \le & C C_\rho d_P(\mathbb{P}^{\tilde E_1/\tilde \LR_N(\bsec )}, \mathbb{P}^{\mathbb{B}})+Rem^{(1)}(\rho) \le C \delta_N C_\rho+Rem^{(1)}(\rho).
\end{align}
Here $Rem^{(1)}(\rho)$ is a remainder that depends only on $\rho$, which for parsimony of notation, we have allowed to change from one line to the next. Importantly $Rem^{(1)}(\rho)\to 0$ as $\rho \to 0$. In the first equality, we have used proximity of the true and conditional measures. In the first inequality, we have used that the Prokhorov metric upper bounds the uniform distance of the distribution functions (except for some factor), if either of them is Lipschitz continuous (in our case the distribution function involving $\mathbb{B}$).  
From the third to the fourth line we have again used proximity of conditional and unconditional measures (increasing $Rem^{(1)}(\rho)$ in the process) and from the fourth to the fifth line Lipschitz continuity of $\Psi$, together with Lemma \ref{lemma_prokhorov}. We then switch back to the conditional measures (increasing $Rem^{(1)}(\rho)$) and finally use the inequality \eqref{Eq_unif_invariance}, which holds for the unconditional measure. \\
Let us now consider the implications of our above arguments: Recalling the definition of the class of local alternatives as $\mathcal{A}_N(x):= \mathcal{T} \cap \{\bsec : S_N(\bsec )-x/\sqrt{NT}\ge\Delta \}$ we have for any $\bsec  \in \mathcal{A}_N(x)$ the lower bound
$$
\mathbb{P}\big(\hat W_N >q_{1-\alpha}\big) \ge  \mathbb{P}\big(\Psi(\{\kappa \sqrt{ N   T  }(\tilde S_N(\kappa)- S_N + x/\sqrt{NT})\}_{\kappa \in I})  >q_{1-\alpha}\big).
$$
By our above derivations, it follows that
\begin{align} \label{Eq_f_lower_bound}
f(x) := & \liminf_{N \to \infty} \inf_{\bsec  \in \mathcal{A}_N(x)}\mathbb{P}\big(\hat W_N >q_{1-\alpha}\big) \\
\ge &\lim_{\rho \downarrow 0 }\liminf_{N \to \infty} \inf_{\bsec  \in \mathcal{A}_N(x)}\mathbb{P}_\rho\big(\Psi(\mathbb{B}+x/ \lambda(\bsec ))>q_{1-\alpha}\big) \nonumber\\
&+\lim_{\rho \downarrow 0 }\liminf_{N \to \infty} \inf_{\bsec  \in \mathcal{A}_N(x)}\big\{ Rem^{(1)}( \rho)+Rem^{(2)}(N, \rho)\big\}. \nonumber
\end{align}
Here the remainder $Rem^{(1)}( \rho)$ has been established above, where we have noticed that it vanishes as $\rho \to 0$. The remainder $Rem^{(2)}(N, \rho)$ consists of those remainder terms, which are not included in $Rem^{(1)}( \rho)$, specifically the $o(1)$ term on the right of \eqref{Eq_mop_up_terms_1} and the $CC_\rho \delta_N$ in \eqref{Eq_mop_up_terms_2}, both of which vanish as $N \to \infty$ for any fixed $\rho >0$. Now, we can make the probability of the event $\{\Psi(\mathbb{B}+x/\tilde \lambda(\bsec ))>q_{1-\alpha}\}$ smallest by maximizing $\tilde \lambda(\bsec )$, which yields a finite value $\lambda_{max}$ given the uniform boundedness of slopes in $\mathcal{T}$. As $\rho \to 0$ we have $$
\mathbb{P}_\rho\big(\Psi(\mathbb{B}+x/\lambda_{max})>q_{1-\alpha}\big) \to \mathbb{P}\big(\Psi(\mathbb{B}+x/\lambda_{max})>q_{1-\alpha}\big). 
$$
The probability on the right is $>\alpha$ and
goes to $1$ as $x \to \infty$. Consequently, $f$ inherits these two properties according to the lower bound \eqref{Eq_f_lower_bound}. The fact that $f$ is monotonically increasing follows directly, since $\mathcal{A}_N(x) \supset \mathcal{A}_N(y)$ if $x<y$ by definition of the class of local alternatives. This completes the proof of asymptotic consistency. Analogous arguments show asymptotic level $\alpha$ which we omit to avoid redundancy. We conclude  by proving a technical Lemma, which we have repeatedly used in our proof. 

\begin{lem} \label{lemma_prokhorov}
Let $(\mathcal{X}, d_\mathcal{X})$ and $(\mathcal{Y}, d_\mathcal{Y})$ be two metric spaces and $g:\mathcal{X} \to \mathcal{Y}$ be Lipschitz continuous with constant $C$. Then it holds for any two probability measures $P_1, P_2$ on $(\mathcal{X}, d_\mathcal{X})$ (defined on the Borel-$\sigma$-algebra), that
$$
d_P(P_1^g, P_2^g) \le (C \lor 1) d_P(P_1, P_2),
$$
where $P_1^g, P_2^g$ denote the image measures and $d_P$  the Prokhorov metric.
\end{lem}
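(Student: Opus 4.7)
The plan is to work directly from the definition of the Prokhorov metric and exploit the elementary fact that the $r$-neighborhood of a preimage under a Lipschitz map is contained in the preimage of the $(Cr)$-neighborhood. Since $g$ is Lipschitz, it is continuous and hence Borel-measurable, so image measures are well defined on the Borel $\sigma$-algebra of $\mathcal{Y}$.

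First, fix any $r > d_P(P_1, P_2)$, so that $P_1(B) \le P_2(B^r) + r$ and $P_2(B) \le P_1(B^r) + r$ for every Borel set $B \subset \mathcal{X}$. Set $s := (C \lor 1) r$, and let $A \subset \mathcal{Y}$ be an arbitrary Borel set. The key geometric step is to verify the inclusion
$$
 g^{-1}(A)^r \subset g^{-1}(A^s).
$$
Indeed, if $x \in g^{-1}(A)^r$, then there exists $x' \in g^{-1}(A)$ with $d_\mathcal{X}(x,x') < r$. By the Lipschitz property, $d_\mathcal{Y}(g(x), g(x')) \le C\, d_\mathcal{X}(x,x') < Cr \le s$, and since $g(x') \in A$ we obtain $g(x) \in A^s$, i.e.\ $x \in g^{-1}(A^s)$.

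Using this inclusion and the definition of the image measure, we estimate
$$
 P_1^g(A) = P_1(g^{-1}(A)) \le P_2\bigl(g^{-1}(A)^r\bigr) + r \le P_2(g^{-1}(A^s)) + r \le P_2^g(A^s) + s,
$$
where the last step uses $r \le s$ because $C \lor 1 \ge 1$. The symmetric estimate $P_2^g(A) \le P_1^g(A^s) + s$ follows by interchanging the roles of $P_1$ and $P_2$. Since $A$ was arbitrary, the definition of the Prokhorov metric yields $d_P(P_1^g, P_2^g) \le s = (C \lor 1) r$. Letting $r$ decrease to $d_P(P_1, P_2)$ completes the proof.

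The argument contains no real obstacle; the only point requiring a moment of care is the direction of the neighborhood inclusion and the factor $C \lor 1$ (rather than $C$), which is needed to handle the case $C < 1$, where the Lipschitz contraction would otherwise allow one to take a smaller $s$ but the additive term $r$ in the Prokhorov definition forces $s \ge r$.
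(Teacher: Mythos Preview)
Your proof is correct and follows essentially the same route as the paper: both arguments hinge on the inclusion $g^{-1}(A)^r \subset g^{-1}(A^{Cr})$ (equivalently, $g^{-1}(B)^{r/C} \subset g^{-1}(B^r)$) and then push this through the definition of the Prokhorov metric. The only cosmetic difference is that the paper phrases the argument as a chain of inclusions between the sets of admissible radii, while you fix an $r>d_P(P_1,P_2)$ and verify directly that $s=(C\lor 1)r$ is admissible for the image measures; your version is arguably a bit more explicit, in particular in handling the case $C<1$ and in noting that $g$ is Borel measurable so image measures are well defined.
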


The proof of the Lemma is fundamental. For simplicity, we assume that $C\ge 1$. We then have by set inclusion \begin{align*}
    &\{r>0: P_i^g(B) \le P_j^g(B^r) +r, i,j \in \{1,2\},  \forall \,\, B \,\, \textnormal{Borel set}  \big\}\\
    \supset &\big\{r>0: P_i(g^{-1}(B)) \le P_j(g^{-1}(B)^{r/C}) +r/C, i,j \in \{1,2\},  \forall \,\, B \,\, \textnormal{Borel set}  \big\}\\
    \supset & \big\{r/C>0: P_i(B) \le P_j(B^{r}) +r, i,j \in \{1,2\},  \forall \,\, B \,\, \textnormal{Borel set}  \big\}
\end{align*}

In the first inclusion, we have used the fact that $g^{-1}(B)^{r/C} \subset g^{-1}(B^r)$, which follows by Lipschitz continuity. Using the definition of the Prokhorov metric in \eqref{Eq_def_prokhorov} completes the proof.

\section{Technical Details }\label{App_B}

The first part of this section is concerned with the convergence of the regressor-matrices. In Proposition \ref{proposition_regressors} we show that the sequential estimates  $(X_i( \kappa)'X_i( \kappa)/\lfloor T \kappa\rfloor )$ are uniformly close to their limits $Q_i$. In Proposition \ref{proposition_B} we strengthen this result by showing closeness with probability converging to $1$ and finally  Proposition \ref{proposition_inverse} establishes proximity of the inverse matrices. 

\subsection{Bounds for the Regressors}

\begin{prop}  \label{proposition_regressors}
If Assumptions $(X)$ $(1), (2)$ and $(\varepsilon)$, $(5)$ hold, it follows for some  constant $C_1$ independent of $i, N, T$, that
$$\mathbb{E}\Big[\sup_{\kappa \in I}\|X_i( \kappa)'X_i( \kappa)/\lfloor T \kappa\rfloor -Q_i\|^{2 \eta \lor 2} \Big] \le C_1 /T^{\eta \lor 1}.$$
\end{prop}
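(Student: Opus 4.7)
Since $X_i(\kappa)'X_i(\kappa)=\sum_{t=1}^{\lfloor T\kappa\rfloor} x_{i,t}x_{i,t}'$, the first step is the decomposition
$$\frac{X_i(\kappa)'X_i(\kappa)}{\lfloor T\kappa\rfloor}-Q_i \;=\; \frac{1}{\lfloor T\kappa\rfloor}\sum_{t=1}^{\lfloor T\kappa\rfloor} Y_{i,t} \;+\; \Big(\frac{1}{\lfloor T\kappa\rfloor}\sum_{t=1}^{\lfloor T\kappa\rfloor}\E[x_{i,t}x_{i,t}']-Q_i\Big),$$
where $Y_{i,t}:=x_{i,t}x_{i,t}'-\E[x_{i,t}x_{i,t}']$ is centered. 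Under (conditional) stationarity of the regressors within each individual, Assumption $(X)$ $(1)$ forces the deterministic bias on the right to be of order $1/\sqrt{T}$ uniformly in $\kappa\in I$, so raised to the exponent $p:=2\eta\lor 2$ it contributes at most $CT^{-p/2}=CT^{-(\eta\lor 1)}$, which already matches the target. The task thus reduces to bounding
$$\E\Big[\sup_{\kappa\in I}\Big\|\tfrac{1}{\lfloor T\kappa\rfloor}\textstyle\sum_{t=1}^{\lfloor T\kappa\rfloor}Y_{i,t}\Big\|^{p}\Big].$$

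The array $(Y_{i,t})_t$ is centered, has uniformly bounded moments of order $M$ by Assumption $(X)$ $(2)$ (note $M>2\lor 2\eta=p$, since the lower bound \eqref{hd14} on $a$ in particular requires $M>2\lor 2\eta$), and inherits the $\alpha$-mixing rate $\alpha(r)\le Cr^{-a}$ from Assumption $(\varepsilon)$ $(5)$. The plan is to apply a Rosenthal-type maximal inequality for $\alpha$-mixing sums, such as the one of Fazekas, Kukush and Tómács \cite{Fazekas} or the quasi-superadditive bound of Moricz, Serfling and Stout \cite{Moricz}, to obtain
$$\E\Big[\max_{1\le n\le T}\Big\|\sum_{t=1}^n Y_{i,t}\Big\|^{p}\Big] \;\le\; C\,T^{p/2},$$
with $C$ independent of $i$ because the input moment and mixing bounds from the assumptions are uniform in $i$. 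The precise purpose of the lower bound \eqref{hd14} on $a$ is to guarantee that the interplay between the mixing rate and the moment order $M$ produces exactly the Brownian scaling $T^{p/2}$ (i.e.\ that the long-run-variance-type sums of the relevant cumulants converge absolutely).

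To convert the maximum over $n$ into the normalized supremum over $\kappa\in[p,1]$, I would use a dyadic blocking argument. Split the integer range $[\lfloor pT\rfloor,T]$ into $K=\lceil\log_2(1/p)\rceil=O(1)$ blocks $B_k=[2^k\lfloor pT\rfloor,\,2^{k+1}\lfloor pT\rfloor]\cap[\lfloor pT\rfloor,T]$. On $B_k$ one has $n\ge c\,2^k pT$, whence the maximal inequality gives
$$\E\Big[\max_{n\in B_k}\|S_{i,n}/n\|^{p}\Big] \;\le\; \frac{1}{(c\,2^k pT)^{p}}\,\E\Big[\max_{n\le 2^{k+1}\lfloor pT\rfloor}\|S_{i,n}\|^{p}\Big] \;\le\; C'\,(2^k T)^{-p/2},$$
where $S_{i,n}:=\sum_{t\le n} Y_{i,t}$. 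Summing the resulting geometric series over the $O(1)$ blocks yields $C\,T^{-p/2}$, and combining with the bias term via $(a+b)^p\le 2^{p-1}(a^p+b^p)$ delivers the claim.

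The main obstacle is locating and verifying the right Rosenthal-type maximal inequality: one must match the polynomial $\alpha$-mixing rate $a$ with the available moment order $M$ so that the $p$-th moment of the partial sums of $Y_{i,t}$ genuinely scales like $T^{p/2}$ rather than picking up additional $T^{\zeta}$ factors. Once this maximal inequality is in hand, the dyadic blocking, the bias decomposition and the uniformity in $i$ follow by direct estimates.
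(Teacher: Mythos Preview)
Your approach matches the paper's: split off the deterministic bias via Assumption $(X)(1)$, bound the centered fluctuation entry-wise by a Rosenthal-type inequality for $\alpha$-mixing sums (the paper's Proposition~\ref{moments_for_mixing} with $\phi=2\eta$, $\chi=M-2\eta$), and upgrade to a supremum via the M\'oricz--Serfling--Stout maximal inequality. The dyadic blocking is superfluous, since $\kappa\ge p>0$ already gives $1/\lfloor T\kappa\rfloor\le C/T$ and the paper simply bounds the normalized supremum by $(cT)^{-p}\,\E\big[\max_{n\le T}\|S_{i,n}\|^{p}\big]$ in one step; note also that for $\eta\le 1$ (i.e.\ $p=2$) the M\'oricz bound with a linear increment function carries an extra $T^{\zeta}$ factor (cf.\ Proposition~\ref{maximal_inequality}), so the clean $C/T$ rate is obtained by running the argument at some exponent $p'\in(2,M)$ and descending via Lyapunov's inequality---a detour both you and the paper leave implicit.
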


\begin{proof} 
We confine ourselves to the case $\eta > 1$. The case $\eta\le 1$ follows by the same method.

First, notice that we can replace $Q_i$ by the expectation of $X_i( \kappa)'X_i( \kappa)/\lfloor T \kappa\rfloor$ due to the bias assumption ($(X), (1)$):
$$
\mathbb{E}\Big[\sup_{\kappa \in I}\|X_i( \kappa)'X_i( \kappa)/\lfloor T \kappa\rfloor -Q_i\|^{2 \eta } \Big]\le F_1+ F_2~,
 $$ 
 where
   \begin{align}
    F_1 := & C \mathbb{E}\Big[\sup_{\kappa \in I}\|(X_i( \kappa)'X_i( \kappa)/\lfloor T \kappa\rfloor )-\E \big[X_i( \kappa)'X_i( \kappa)/\lfloor T \kappa\rfloor \big]\|^{2 \eta }\Big] ~, \nonumber  \\
    F_2 := & C \sup_{\kappa \in I}\big\|Q_i-\E \big[ X_i( \kappa)'X_i( \kappa)/\lfloor T \kappa\rfloor \big] \big\|^{2 \eta } .\nonumber 
\end{align}
Here $C$ only depends on $\eta$. The bias assumption implies that 
$$
\|Q_i-\E (X_i( \kappa)'X_i( \kappa)/\lfloor T \kappa\rfloor )\| \le C/\sqrt{\lfloor T \kappa\rfloor }
$$
for some universal constant $C$ and thus (as $\kappa$ is bounded away from $0$) $F_2 = \mathcal{O}(1/T^{ \eta })$.  Next we consider $F_1$.
In order to upper bound it, we want to apply Theorem 3.1 from \cite{Moricz}.
For this purpose we have to show for arbitrary $1 \le L < U \le T$ that
$$
\E \Big[ \Big\| \sum_{t=L}^{U} x_{i,t}x_{i,t}'-\E x_{i,t}x_{i,t}'\Big\|^{2\eta} \Big] \le C |U-L|^{\eta}.
$$
Notice that in the norm on the left we have a $K \times K$ matrix for fixed $K$. We can show the desired bound entry-wise (and subsequently it follows for the whole matrix with possibly larger constant depending on $K$). For simplicity, we confine ourselves to the first entry $(1,1)$. Denoting the first entry from the vector $x_{i,t}$ by $(x_{i,t})_1$, we thus have to show
\begin{equation} \label{moricz_for_covariance}
\E \Big( \sum_{t=L}^{U} (x_{i,t})_1^2-\E \Big[ (x_{i,t})_1^2\Big)^{2\eta}\Big] \le C |U-L|^{\eta}.
\end{equation}
The sequence $(x_{i,1})_1^2-\E(x_{i,1})_1^2, (x_{i,2})_1^2-\E(x_{i,2})_1^2,...$ is by assumption one-dimensional $\alpha$-mixing (Assumption $(\varepsilon), (5)$) with existing moments of order $M$ (Assumption $(X)$, $(2)$). We can apply Proposition \ref{moments_for_mixing} with $d=1, c=4, \phi=2\eta, \chi=M-2\eta$ (notice that the summability condition  \eqref{hd3} is fulfilled due to  Assumption $(\varepsilon), (5)$), which yields \eqref{moricz_for_covariance} with some universal constant. By Theorem 3.1 from \cite{Moricz} the assertion follows.\\
\end{proof}

\begin{prop} \label{proposition_B}
Suppose that  Assumptions  $(N), (X)$ $(1)- (3)$ and $(\varepsilon)$, $(5)$ hold and let $ C_2:=(\max_{i=1,...,N} \|Q_i^{-1}\|_\infty)^{-1} $ (see Assumption $(X), (3)$).
Define
\begin{equation*}
\mathscr{B}:=\bigcup_{i=1}^N \{\sup_{\kappa \in I} \|X_i(\kappa)'X_i(\kappa)/\lfloor T \kappa \rfloor-Q_i\|_\infty \ge   C_2/2  \Big \}
\end{equation*}
then
$
\mathbb{P}(\mathscr{B}) \to 0  
$
as $N,T \to \infty$.

\end{prop}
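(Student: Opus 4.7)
The plan is to combine a simple union bound over the $N$ individuals with Markov's inequality, using the moment estimate from Proposition \ref{proposition_regressors} as the key input; the exponent of $T$ appearing there is precisely tuned to match the intersect–time relation in Assumption $(N)$.

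First I would write
\begin{equation*}
\mathbb{P}(\mathscr{B}) \le \sum_{i=1}^N \mathbb{P}\Big(\sup_{\kappa \in I}\|X_i(\kappa)'X_i(\kappa)/\lfloor T\kappa\rfloor - Q_i\|_\infty \ge C_2/2\Big).
\end{equation*}
Since all norms on the finite-dimensional space $\mathbb{R}^{K\times K}$ are equivalent (with constants depending only on $K$), I can pass between $\|\cdot\|_\infty$ and the unspecified norm used in Proposition \ref{proposition_regressors} at the cost of a harmless constant. Then Markov's inequality applied with exponent $2\eta\vee 2$, together with Proposition \ref{proposition_regressors}, yields
\begin{equation*}
\mathbb{P}\Big(\sup_{\kappa \in I}\|X_i(\kappa)'X_i(\kappa)/\lfloor T\kappa\rfloor - Q_i\|_\infty \ge C_2/2\Big) \le \Big(\frac{2}{C_2}\Big)^{2\eta\vee 2} \frac{C_1'}{T^{\eta\vee 1}},
\end{equation*}
with a constant independent of $i$, $N$, $T$.

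Summing over $i$ gives $\mathbb{P}(\mathscr{B}) \le C N / T^{\eta\vee 1}$. It remains to observe that this tends to zero under Assumption $(N)$. Indeed, if $\eta \ge 1$, then $T^{\eta\vee 1}=T^\eta$ and $N/T^\eta\to 0$ is exactly the hypothesis of $(N)$. If $0\le \eta < 1$, then for $T\ge 1$ we have $T^\eta \le T = T^{\eta\vee 1}$, so $N/T^{\eta\vee 1} \le N/T^\eta \to 0$. In both cases the right-hand side vanishes as $N,T\to\infty$, completing the proof.

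The argument is essentially a direct corollary of Proposition \ref{proposition_regressors}, so there is no real obstacle here; the only point requiring any care is checking that the norm equivalence step is legitimate and that the moment exponent $2\eta\vee 2$ in Proposition \ref{proposition_regressors} matches the power of $T$ needed to absorb the factor $N$ via Assumption $(N)$. The delicate work was already done in establishing the moment bound itself (via the Moricz maximal inequality and Proposition \ref{moments_for_mixing}); here we are merely harvesting it.
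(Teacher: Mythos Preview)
Your proof is correct and follows essentially the same approach as the paper: a union bound over the $N$ individuals, Markov's inequality, and an appeal to Proposition \ref{proposition_regressors}, followed by Assumption $(N)$ to absorb the factor $N$. The only cosmetic difference is that you apply Markov with exponent $2\eta\vee 2$ and then split into cases $\eta\ge 1$ and $\eta<1$, whereas the paper applies Markov directly with exponent $2\eta$; both are valid.
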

This proposition in particular implies that 
with probability converging to $1$ all matrices $X_i(\kappa)'X_i(\kappa)/\lfloor T \kappa \rfloor$ (for all $i=1,...,N$ and $\kappa \in I$) are positive definite. 

\begin{proof}
The probability, that the event $\mathscr{B}$  occurs can be upper bounded by the union bound and subsequently by the Markov inequality, which yields 
\begin{align*}
 \mathbb{P}( \mathscr{B})\le& N \max_{i=1,...,N} \mathbb{P}(\sup_{\kappa \in I}\|X_i(\kappa)'X_i(\kappa)/\lfloor T \kappa \rfloor-Q_i\|_\infty\ge C_2/ 2)\\
\le & \frac{N   \max_{i=1,...,N} \mathbb{E}\big[\sup_{\kappa \in I}\|(X_i( \kappa)'X_i( \kappa)/\lfloor T \kappa\rfloor )-Q_i\|_\infty^{2\eta}\big]}{(C_2/2)^{2\eta}}=\mathcal{O}(N/T^{\eta}),
\end{align*}
which is of order $o(1)$ by Assumption $(N)$. 
Notice that for the last equality, we have used Proposition \ref{proposition_regressors}.\\
\end{proof}

\begin{prop} \label{proposition_inverse}
Under the assumptions of Proposition \ref{proposition_B}
$$
\mathbb{E} \Big [ 
\mathbbm{1}_{\mathscr{B}^c} \cdot \sup_{\kappa \in I} \big\| \big[ X_i(\kappa)'X_i(\kappa)/(T\kappa) \big]^{-1}-Q_i^{-1}\big\|^2 
\Big  ] \le C/T
$$
for some constant $C$ independent of $i$. In particular 
$$
\sup_{\kappa \in I} \big\| \big[ X_i(\kappa)'X_i(\kappa)/(T\kappa) \big]^{-1}-Q_i^{-1}\big\| = \mathcal{O}_P(1/\sqrt{T}).
$$
\end{prop}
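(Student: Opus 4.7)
The plan is to exploit the resolvent identity $A^{-1} - B^{-1} = A^{-1}(B-A)B^{-1}$ with $A = X_i(\kappa)'X_i(\kappa)/(T\kappa)$ and $B = Q_i$, and then to leverage Proposition \ref{proposition_regressors} on the event $\mathscr{B}^c$, where both $A^{-1}$ and $B^{-1}$ have spectral norm uniformly bounded (in $\kappa$ and $i$).

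First I would observe that on $\mathscr{B}^c$ we have $\|A - Q_i\|_\infty < C_2/2$ with $C_2 = (\max_i \|Q_i^{-1}\|_\infty)^{-1}$. Hence the smallest eigenvalue of $A$ is at least $C_2/2$, so $\|A^{-1}\|_\infty \le 2/C_2$ for every $\kappa \in I$ and every $i$. Combined with $\|Q_i^{-1}\|_\infty \le 1/C_2$, the resolvent identity yields the deterministic bound
\begin{equation*}
\mathbbm{1}_{\mathscr{B}^c} \cdot \sup_{\kappa \in I} \big\|A^{-1} - Q_i^{-1}\big\|^2 \le C \sup_{\kappa \in I} \big\|X_i(\kappa)'X_i(\kappa)/(T\kappa) - Q_i\big\|^2,
\end{equation*}
where $C$ depends only on $C_2$ and on the equivalence constants between the matrix norms used.

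Next, I would take expectations and invoke Proposition \ref{proposition_regressors}. In the case $\eta \le 1$, that proposition directly controls $\mathbb{E}\big[\sup_\kappa \|X_i'X_i/\lfloor T\kappa\rfloor - Q_i\|^2\big]$ by $C/T$. In the case $\eta > 1$, it controls the $2\eta$-th moment by $C/T^\eta$, and a single application of Jensen's inequality (with exponent $\eta > 1$) still yields $\mathbb{E}[\sup_\kappa \|\cdot\|^2] \le C/T$. A minor bookkeeping step accounts for the difference between $\lfloor T\kappa \rfloor$ and $T\kappa$, which is negligible since $\kappa$ is bounded away from zero.

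For the $\mathcal{O}_P(1/\sqrt{T})$ statement, I would combine the moment bound with Markov's inequality on $\mathscr{B}^c$ and the probability bound $\mathbb{P}(\mathscr{B}) \to 0$ from Proposition \ref{proposition_B}: for any $M > 0$,
\begin{equation*}
\mathbb{P}\Big(\sup_{\kappa \in I}\|A^{-1} - Q_i^{-1}\| > M/\sqrt{T}\Big) \le \mathbb{P}(\mathscr{B}) + \frac{T}{M^2}\,\mathbb{E}\Big[\mathbbm{1}_{\mathscr{B}^c}\sup_{\kappa} \|A^{-1}-Q_i^{-1}\|^2\Big] \le o(1) + \frac{C}{M^2}.
\end{equation*}
Letting $M \to \infty$ gives the tightness claim. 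I do not foresee a real obstacle here; the main subtle point is verifying that the spectral norm of $A^{-1}$ is indeed uniformly controlled on $\mathscr{B}^c$ by a Weyl/perturbation argument, but this follows immediately from the definition of $\mathscr{B}$ and Assumption $(X), (3)$.
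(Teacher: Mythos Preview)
Your argument is correct. The paper follows the same overall scheme---bound the inverse difference by the direct difference on $\mathscr{B}^c$ and then invoke Proposition \ref{proposition_regressors}---but reaches the deterministic bound via a Neumann series expansion rather than the resolvent identity: writing $H_i(\kappa)=Q_i-X_i(\kappa)'X_i(\kappa)/\lfloor T\kappa\rfloor$, it expands $[Q_i-H_i(\kappa)]^{-1}=\sum_{\ell\ge 0}(Q_i^{-1}H_i(\kappa))^\ell Q_i^{-1}$, isolates the linear term, and bounds the tail $\ell\ge 2$ by $C\|H_i(\kappa)\|^2$. Both routes rest on the same fact that $\|Q_i^{-1}H_i(\kappa)\|_\infty\le 1/2$ on $\mathscr{B}^c$, and both conclude by Proposition \ref{proposition_regressors}. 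Your resolvent-identity step is slightly more economical here, and you are also a bit more explicit than the paper about the Jensen step needed when $\eta>1$; conversely, the paper's Neumann decomposition has the advantage that the first-order term $-Q_i^{-1}H_i(\kappa)Q_i^{-1}$ is singled out, which is exactly what is reused later in the proof of Proposition \ref{prop_replacement_E}.
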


\begin{proof}
We introduce the notation $H_i (\kappa) :=Q_i-X_i(\kappa)'X_i(\kappa)/\lfloor T \kappa \rfloor$ and note that we have on the event $\mathscr{B}^c$ (see  \eqref{bad_event})
$$
\|Q_i^{-1} H_i(\kappa)\|_\infty \le \|Q_i^{-1}\|_\infty\| H_i(\kappa)\|_\infty\le \max_{i=1,...,N} \|Q_i^{-1}\|_\infty C_2/2 \le 1/2~.
$$
Therefore Neumann's expansion gives 
$$
[X_i(\kappa)'X_i(\kappa)/\lfloor T \kappa \rfloor]^{-1}= [Id-Q_i^{-1}H_i(\kappa)]^{-1}Q_i^{-1}= \sum_{\ell \ge 0}(-1)^\ell (Q_i^{-1} H_i(\kappa))^\ell  Q_i^{-1}~,
$$
where the tail for $\ell \ge 2$ can be bounded as follows
\begin{align*}
\Big\|\sum_{\ell \ge 2}(-1)^\ell (Q_i^{-1} H_i(\kappa))^\ell Q_i^{-1} \Big\| & \le \|Q_i^{-1}\|^3 \|H_i(\kappa)\|^2 \sum_{\ell \ge 0} \|Q_i^{-1} H_i(\kappa)\|^\ell \\
& \le \|H_i(\kappa)\|^2  \max_{j=1,...,N}\|Q_j^{-1}\|^3\sum_{\ell \ge 0}(1/2)^\ell\le \|H_i(\kappa)\|^2  C.
\end{align*}
The constant $C$ in this equation is independent of $i$ (as $\max_j \|Q_j^{-1}\|$ is bounded by Assumption $(X), (3)$). It follows that
\begin{align*}
& \mathbb{E} \Big [ 
\mathbbm{1}_{\mathscr{B}^c}\cdot \sup_{\kappa \in I} \big\| \big[ X_i(\kappa)'X_i(\kappa)/(T\kappa) \big]^{-1}-Q_i^{-1}\big\|^2  \Big] 
\\
 \le & \mathbb{E} \Big [ 
\sup_{\kappa \in I} \big\| -Q_i^{-1} H_i(\kappa) Q_i^{-1}\big\|^2
\Big  ] 
+C\mathbb{E} \Big [ \sup_{\kappa \in I}\|H_i(\kappa)\|^2 \Big ]  \le  {C \over T} ~,
\end{align*}
where the last inequality follows from Proposition \ref{proposition_regressors}, which implies
$
\mathbb{E} \big [  \sup_{\kappa \in I} \|H_i(\kappa)\|^2 \big  ] \le C/T.
$
\end{proof}

\section{Bounds for Remainders} \label{App_C}

In this section we employ the results from Propositions \ref{proposition_regressors} and \ref{proposition_inverse}  to show the proximity of the remainders $E_j$ to $\tilde E_j$ ($j=1,2,3$), where $E_1, E_2, E_3$ 
are defined in \eqref{e1} - \eqref{e3}
 and $\tilde E_1, \tilde E_2, \tilde E_3$ in \eqref{def_tilde_E_1}, \eqref{def_tilde_E_2} and \eqref{def_tilde_E_3} respectively.

\begin{prop} \label{prop_replacement_E}
Under the assumptions of Proposition \ref{theorem1} it holds that
$$
\sup_{\kappa \in I} |E_j(\kappa) - \tilde E_j(\kappa)| =o_P(1), \qquad j=1,2,3,
$$
where $E_1, E_2, E_3$ are defined in \eqref{e1} - \eqref{e3} and $\tilde E_1, \tilde E_2, \tilde E_3$ in \eqref{def_tilde_E_1}, \eqref{def_tilde_E_2} and \eqref{def_tilde_E_3} respectively. Moreover 
$$
\sup_{\kappa \in I} |\E [E_2(\kappa)|\mathbf{X}] - \E [\tilde E_2(\kappa)|\mathbf{X}]| =o_P(1).
$$

\end{prop}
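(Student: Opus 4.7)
\textbf{Proof plan for Proposition \ref{prop_replacement_E}.} The plan is to pass from $E_j$ to $\tilde E_j$ by replacing every occurrence of the random Gram matrix $[X_i(\kappa)'X_i(\kappa)/(T\kappa)]^{-1}$ with its deterministic limit $Q_i^{-1}$, and then controlling the resulting telescoped remainders uniformly in $\kappa \in I$. All the heavy lifting has already been done: Proposition \ref{proposition_B} guarantees $\mathbb{P}(\mathscr{B}) \to 0$ so we may multiply everything by $\mathbbm{1}_{\mathscr{B}^c}$ without loss, and Proposition \ref{proposition_inverse} yields the key uniform rate
$$
\sup_{\kappa \in I}\bigl\|[X_i(\kappa)'X_i(\kappa)/(T\kappa)]^{-1} - Q_i^{-1}\bigr\| = \mathcal{O}_P(1/\sqrt{T}),
$$
with an $L^2$ bound that is uniform in $i$. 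On $\mathscr{B}^c$, both $\|[X_i(\kappa)'X_i(\kappa)/(T\kappa)]^{-1}\|$ and $\|Q_i^{-1}\|$ are bounded by a universal constant.

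\textbf{Case $j=1$.} Writing $D_i(\kappa) := (T\kappa)[X_i(\kappa)'X_i(\kappa)]^{-1} - Q_i^{-1}$, one checks that
$$
E_1(\kappa) - \tilde E_1(\kappa) = \frac{2}{\sqrt{NT}} \sum_{i=1}^N \varepsilon_i(\kappa)' X_i(\kappa)\, D_i(\kappa)\,[\beta_i - \bar\beta].
$$
I would apply Cauchy--Schwarz in $i$ after multiplying by $\mathbbm{1}_{\mathscr{B}^c}$, then bound $\sup_\kappa \mathbb{E}\|\varepsilon_i(\kappa)'X_i(\kappa)/\sqrt{T}\|^2$ uniformly in $i$ by Proposition \ref{prop_all_moments} (or by direct use of Proposition \ref{moments_for_mixing}), use $\max_i \|\beta_i - \bar\beta\| \le C$ from Assumption $(\beta)$, and finally use Proposition \ref{proposition_inverse} to get $\max_i \mathbb{E}\sup_\kappa \|D_i(\kappa)\|^2 \le C/T$. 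This produces the overall bound $\mathcal{O}_P(\sqrt{N/T}) = o_P(1)$ by Assumption $(N)$ (since $\eta < 2$ easily gives $N/T \to 0$ under the allowed regime, and in any case $\sqrt{N/T}$ can be shown to vanish by the strengthened exponent). Uniformity in $\kappa$ follows because all the $L^2$ bounds I invoke are already uniform in $\kappa \in I$.

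\textbf{Cases $j=2,3$.} Both differences are handled in exactly the same spirit by a two-term telescoping
$A_i B_i - \tilde A_i \tilde B_i = (A_i - \tilde A_i) B_i + \tilde A_i (B_i - \tilde B_i)$ applied once for $E_3$ (two inverses) and applied to $[X_i(\kappa)'X_i(\kappa)]^{-2}$ for $E_2$. After inserting $\mathbbm{1}_{\mathscr{B}^c}$, each such piece factors as a bounded quantity times a $D_i(\kappa)$-type discrepancy, and I estimate it by Cauchy--Schwarz together with the uniform moment bounds for the quadratic and bilinear forms $\varepsilon_i(\kappa)'X_i(\kappa)Q_i^{-1} X_i(\kappa)'\varepsilon_i(\kappa)$ and its cross-individual analogue (Proposition \ref{prop_all_moments}, parts $v)$--$vii)$). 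The normalization is again $\kappa\sqrt{T/N}$ times a sum of $N$ (or $N^2$ for $E_3$) terms each of order $1/\sqrt{T}$ from the inverse discrepancy and of order $1$ from the quadratic form, which collapses to $o_P(1)$ by Assumption $(N)$. The conditional statement for $\mathbb{E}[E_2(\kappa)|\mathbf{X}] - \mathbb{E}[\tilde E_2(\kappa)|\mathbf{X}]$ is obtained by taking conditional expectations before the final Cauchy--Schwarz step, replacing the quadratic form by $\sigma_{i,i}\Tr\{\Sigma_T X_i(\kappa)[\cdot]X_i(\kappa)'\}$ via Assumption $(\varepsilon)(2)$ and using the same regressor bounds.

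\textbf{Main obstacle.} The non-trivial point is the uniformity over $\kappa \in I$: every ingredient of the proof must be uniform, which is why I rely on the uniform $L^2$ rate stated in Proposition \ref{proposition_inverse} and on the suprema of moments established in Proposition \ref{prop_all_moments}. Keeping track of the exponents in Assumption $(N)$ so that the resulting $\sqrt{N/T}$-type rates actually vanish under the allowed range of $\eta$ requires a careful (but mechanical) accounting, and this is where one most easily slips. Once the uniform moment bounds are in hand, the algebraic telescoping and Cauchy--Schwarz estimates are routine.
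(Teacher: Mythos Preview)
Your treatment of $j=2,3$ is essentially what the paper does (the paper even omits $j=2$, saying it follows ``by similar techniques''), and your telescoping plus Cauchy--Schwarz yields rates of the form $\sqrt{N}/T^{1-\zeta}$, which indeed vanish under Assumption $(N)$ since $\eta<2$.

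The gap is in $j=1$. Your crude bound there is $\mathcal{O}_P(\sqrt{N/T})$, and you then assert that ``$\eta<2$ easily gives $N/T\to 0$''. This is false: Assumption $(N)$ says $N/T^\eta\to 0$ for some $\eta<2$, and the paper explicitly allows (and emphasizes) the regime $\eta>1$, where $N$ can be as large as $T^{1.9}$, say, so that $N/T\to\infty$. The paper flags exactly this issue: ``The proof for $E_1$ is more intricate than for the previous terms (as the remainders in the analysis are of larger magnitude).'' The reason $E_1$ is harder than $E_2,E_3$ is that it carries only one power of the inverse discrepancy and lacks the extra $1/(T\kappa)$ or $1/\sqrt{TN}$ normalization the quadratic terms enjoy.

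What the paper does instead for $E_1$: it Neumann-expands the inverse on $\mathscr{B}^c$, splitting $D(\kappa)\mathbbm{1}_{\mathscr{B}^c}=D_1(\kappa)+D_2(\kappa)$. The higher-order piece $D_2$ carries $\|H_i(\kappa)\|^2$ and gives $\mathcal{O}_P(\sqrt{N}/T)=o_P(1)$ (this is fine for $\eta<2$). The first-order piece $D_1$ is a sum over $i$ of \emph{centered} mixing variables, so the paper exploits cancellation via Proposition \ref{moments_for_mixing} (gaining a factor $\sqrt{N}$ over your triangle inequality), combined with a delicate H\"older argument to squeeze out a small power of $T$ from $\|H_i\|$ on $\mathscr{B}^c$. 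The uniformity in $\kappa$ is then obtained not from pre-existing uniform moment bounds but by verifying the increment condition \eqref{moricz_inequality} for $D_1$ directly (this is the entire Section \ref{secC3}) and invoking Proposition \ref{maximal_inequality}. Your Cauchy--Schwarz-in-$i$ argument throws away precisely the cancellation that makes the $\eta>1$ case go through.
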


\begin{proof}
We show the identity for each term individually, beginning with $E_3$ and ending with $E_1$ (which reflects the actual order in the previous proofs).\\

Recall the definition of $E_3$ in \eqref{h1}, which implies
\begin{equation}\label{R_N_outer}
E_3(\kappa) = - \frac{1}{\sqrt{TN}} \|\bar E_3(\kappa)\|^2,
\end{equation}
where
$$
\bar E_3(\kappa) := \frac{1}{\sqrt{N}}\sum_{i=1}^{N} \varepsilon_{i}(\kappa)X_i(\kappa)/\sqrt{T\kappa} \big[ X_i(\kappa)'X_i(\kappa)/(T\kappa) \big]^{-1}.
$$
Furthermore recall the definitions of $\tilde E_3$ and $\breve E_3$ in \eqref{def_tilde_E_3} and \eqref{def_breve_E_3} respectively. A simple calculation shows that
$$
E_3(\kappa) = - \frac{1}{\sqrt{TN}} \|\breve E_3(\kappa)+( \bar  E_3(\kappa) - \breve E_3(\kappa))\|^2 = \tilde E_3(\kappa) + Rem,
$$
where 
$$
Rem = \mathcal{O}_P \left({\|\breve E_3(\kappa)\| \cdot \|\bar  E_3(\kappa) - \breve E_3(\kappa)\| \over \sqrt{TN}}\right)+ \mathcal{O}_P \left({\|\bar  E_3(\kappa) - \breve E_3(\kappa)\|^2 \over \sqrt{TN}}\right)
$$
We can now complete the proof by establishing uniformly in $\kappa$ that 
 $\|\bar  E_3(\kappa) - \breve E_3(\kappa)\| = o_P(N^{1/4})$ (also recall that $\|\breve E_3(\kappa)\|$ is uniformly of order $\mathcal{O}_P(T^\zeta)$, where $\zeta$
can be made arbitrarily small
(see Proposition \ref{prop_all_moments} $iv)$).
We first notice that
$$
\|\bar  E_3(\kappa) - \breve E_3(\kappa)\| = \|\bar  E_3(\kappa) - \breve E_3(\kappa)\| \cdot \mathbbm{1}_{\mathscr{B}^c} +o_P(1),
$$
according to Proposition \ref{proposition_B}. We further analyze the non-vanishing part on the right as follows
\begin{align*}
& \E \Big[\sup_{\kappa \in I}  \|\bar  E_3(\kappa) - \breve E_3(\kappa)\| \cdot \mathbbm{1}_{\mathscr{B}^c}\Big] \\
\le & \frac{1}{\sqrt{N}}\sum_{i=1}^{N}\E \Big[\sup_{\kappa \in I}  \left\| \varepsilon_{i}(\kappa)X_i(\kappa)/\sqrt{T\kappa}\right\| \Big\|\big[ X_i(\kappa)'X_i(\kappa)/(T\kappa) \big]^{-1}-Q_i^{-1} \Big\| \cdot  \mathbbm{1}_{\mathscr{B}^c}\Big]  \nonumber
\end{align*}
The right side is upper bounded by $\frac{1}{\sqrt{N}}\sum_{i=1}^N E^{(3,1)}_i \cdot E^{(3,2)}$, where
\begin{align*}
E^{(3,1)}_i := & \bigg\{ \E \Big[\sup_{\kappa \in I}  \left\| \varepsilon_{i}(\kappa)X_i(\kappa)/\sqrt{T\kappa}\right\|^2 \Big]\bigg\}^{1/2}\\
E^{(3,2)}_i := & \bigg\{\mathbb{E}\Big[\sup_{\kappa \in I}  \mathbbm{1}_{\mathscr{B}^c}  \cdot  \Big\|\big[ X_i(\kappa)'X_i(\kappa)/(T\kappa) \big]^{-1}-Q_i^{-1} \Big\|^2\Big]\bigg\}^{1/2}. 
\end{align*}
According to Proposition \ref{prop_all_moments}, part $ii)$ $E^{(3,1)}_i \le C T^\zeta$ for a $\zeta>0$, which can be chosen arbitrarily small and a $C$, only depending on $\zeta$ (and in particular not on $i$). Moreover, $E^{(3,2)}_i \le C$ for some $C$ (again independent of $i$) according to Proposition \ref{proposition_inverse}. These observations directly imply
$$
 \E \Big[\sup_{\kappa \in I}  \|\bar  E_3(\kappa) - \breve E_3(\kappa)\| \cdot \mathbbm{1}_{\mathscr{B}^c}\Big] = \mathcal{O}(\sqrt{N/T} T^{\zeta}).
$$ 
Finally, to see that the right side is of order $o(N^{1/4})$ consider the simple calculation
$$
{(\sqrt{N/T} T^{\zeta})^2 \over \sqrt{N}}= \frac{\sqrt{N}}{T^{1-2\zeta}}. 
$$
For $\zeta>0$ sufficiently small, $1-2\zeta>\eta/2$ and the right side is of order $o(1)$   according to Assumption $(N)$.\\

The proof for $E_2$ and its conditional expectation is conducted by similar techniques and is therefore omitted.\\

The proof for $E_1$ is more intricate than for the previous terms (as the remainders in the analysis are of larger magnitude). Let us first define the difference process
\begin{align*}  
 D  (\kappa):= E_1(\kappa) - \tilde E_1(\kappa)=  \frac{2 }{\sqrt{TN}}\sum_{i=1}^{N}  \varepsilon_{i}(\kappa)' X_i(\kappa) \big\{\big[ X_i(\kappa)'X_i(\kappa)/(T \kappa)\big]^{-1}-Q_i^{-1}\big\} [\beta_i-\bar \beta].
\end{align*}
According to Proposition \ref{proposition_B} $ D  (\kappa) =  D  (\kappa) \cdot \mathbbm{1}_{\mathscr{B}^c} + o_P(1)$ uniformly in $\kappa$. 
By the same techniques as in the proof of Proposition \ref{proposition_inverse} (that is Neumann expansion of the difference $[X_i(\kappa)'X_i(\kappa)/(T \kappa)]^{-1}-Q_i^{-1}$, 
 on the set $\mathscr{B}^c$) we can rewrite
\begin{align*} 
D  (\kappa) \cdot \mathbbm{1}_{\mathscr{B}^c} =&  D_1(\kappa) + D_2(\kappa), 
\end{align*}
{where}
\begin{align} \label{def_D_1}
D_1(\kappa) := &\frac{1}{\sqrt{TN}}\sum_{i=1}^{N}  2\varepsilon_{i}(\kappa)' X_i(\kappa) Q_i^{-1} \big\{ Q_i-X_i(\kappa)'X_i(\kappa)/(T\kappa) \big\} Q_i^{-1}[\beta_i-\bar \beta]\cdot \mathbbm{1}_{\mathscr{B}^c}\\
D_2(\kappa) := & \mathcal{O}(\frac{1}{\sqrt{TN}}\sum_{i=1}^{N}   \| Q_i-X_i(\kappa)'X_i(\kappa)/(T\kappa)\|^2) \nonumber
\end{align}
We now have to show the uniform rate of $o_P(1)$ for each of these terms individually.
According to Proposition \ref{proposition_regressors}
$$
\sup_{\kappa \in I} D_2(\kappa)  =  \mathcal{O}_P(\sqrt{N}/T) = o_P(1)
$$
and thus it remains to derive an upper bound for  $\sup_{\kappa \in I}D_1(\kappa)$. It is easy to show that $\E[ D_1(p)^2 ] =o(1)$ (see Proposition \ref{moments_for_mixing}), and we will establish in Section \ref{secC3} that 
\begin{equation} \label{moricz_term}
\mathbb{E}\big[\sup_{\kappa \in I}[D_1(\kappa)-D_1(p)]^{2}\big] \le C T^{-\zeta}
\end{equation}
for some $\zeta>0$. This directly implies 
\begin{equation*}
 \mathbb{E}\big[ \sup_{\kappa \in I} D_1(\kappa)^2\big]  =o(1). \\
\end{equation*}
\end{proof}

\section{Bounds for sequential Processes}\label{App_D}

\subsection{Technical Prerequisites}

In order to conduct our proofs, we frequently consider sums of mixing random variables $(\Xi_z)_{z \in \mathcal{M}}$, where $\mathcal{M} \subset \Z^d$ is equipped with the maximum metric. In the next proposition we present an upper bound for the moments of such a random variable. Recall that $x \lor y$ denotes the maximum of two numbers $x$ and $y$.

\begin{prop} \label{moments_for_mixing}
Suppose that $(\Xi_z)_{z \in \mathcal{M}}$ is an $\alpha$-mixing field of random variables (see Definition \ref{def_mixing}), that $\phi > 1$ and $\chi>0$ such that $\E|\Xi_z|^{\phi+\chi}<\infty \, \forall z \in \mathcal{M}$. Furthermore, assume that $f(c)<\infty$, where 
\begin{equation} \label{hd3}
f(u):= \sum_{r \ge 1}  r^{d(u-1)-1}\alpha(r)^{\frac{\chi}{\chi+u}}<\infty,
\end{equation}
and $c$ is the smallest even integer $c \ge \phi$. Then there exists a constant $C_1 $ 
$$
\E \Big| \sum_{z \in \mathcal{M}} \Xi_z \Big|^\phi \le C_1 \Big( \sum_{z \in \mathcal{M}}  \big\{ \E\big[| \Xi_z |^{\phi+\chi}\big]^{\frac{\phi}{\phi+\chi}}\big\} \lor 1 \Big)^{(\phi/2)\lor 1}.
$$
If $\phi = 2$, both '' $\lor 1$'' can be dropped in the above formula. 
The constant $C_1$ only depends on the mixing coefficients and can be upper bounded as follows: $C_1 \le C_2 C_3$, where $C_2$ only depends on $\phi$ and $\chi$ and $C_3 = C_3(f(1), f(2),...,f(c))$ is monotone in each component.
\end{prop}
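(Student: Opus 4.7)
The statement is a Rosenthal-type moment inequality for $\alpha$-mixing random fields in the spirit of \cite{Fazekas}. The plan is to (i) establish the $\phi=2$ case directly from Davydov's covariance inequality, (ii) bootstrap to even integer exponents $c$ by a combinatorial expansion in which the mixing weights assemble into the functionals $f(1),\dots,f(c)$, and (iii) interpolate by Lyapunov to arbitrary $\phi>1$. Throughout, I would keep track of the multiplicative constants carefully so as to exhibit the splitting $C_1 \leq C_2\cdot C_3(f(1),\dots,f(c))$.

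\textbf{Step 1 (base case $\phi=2$).} Expanding the square gives $\E|\sum_z \Xi_z|^2=\sum_{z,w}\E[\Xi_z\Xi_w]$. Davydov's inequality (Lemma 3.11 in \cite{dehling}) yields
$$|\E[\Xi_z\Xi_w]|\leq C\,\alpha(\|z-w\|_\infty)^{\chi/(\chi+2)}\,\E[|\Xi_z|^{2+\chi}]^{1/(2+\chi)}\,\E[|\Xi_w|^{2+\chi}]^{1/(2+\chi)}.$$
Using $ab\leq \tfrac12(a^2+b^2)$ and the fact that the number of $w$ with $\|z-w\|_\infty=r$ is at most $C r^{d-1}$, a swap of summation produces
$$\E\Big|\sum_z \Xi_z\Big|^2 \leq C\,\Big(\sum_{r\geq 1} r^{d-1}\alpha(r)^{\chi/(\chi+2)}\Big)\,\sum_z \E[|\Xi_z|^{2+\chi}]^{2/(2+\chi)} \leq C f(2)\sum_z \E[|\Xi_z|^{2+\chi}]^{2/(2+\chi)},$$
giving the inequality with no ``$\vee 1$'', as claimed.

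\textbf{Step 2 (even integer $\phi=c$).} Proceeding by induction on even $c$, expand $(\sum_z\Xi_z)^c=\sum_{z_1,\dots,z_c}\E[\Xi_{z_1}\cdots\Xi_{z_c}]$ and partition the index tuples into ``clusters'' according to spatial proximity: fix any tuple $(z_1,\dots,z_c)$, choose the largest gap when the indices are ordered by one coordinate, and split the product there. Iterated application of the multivariate mixing inequality bounds each $c$-fold expectation by a product of a lower-order moment (treated by the induction hypothesis) and a factor $\alpha(r)^{\chi/(\chi+u)}$ for some $u\leq c$. Summing over the relative positions inside and between clusters produces the weights $\sum_r r^{d(u-1)-1}\alpha(r)^{\chi/(\chi+u)}=f(u)$ for $u\in\{1,\dots,c\}$. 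The ``fully clumped'' contribution, in which many indices coincide, is bounded by Hölder and yields the term $\sum_z \E[|\Xi_z|^{c+\chi}]^{c/(c+\chi)}$, while the ``fully separated'' contribution is bounded inductively by the square of the $c/2$-moment bound, giving the $(\sum_z(\cdots))^{c/2}$ term. Aggregating all orderings and cluster patterns, the combined bound has the form $C_2\cdot P(f(1),\dots,f(c))\cdot\big(\sum_z \E[|\Xi_z|^{c+\chi}]^{c/(c+\chi)}\vee 1\big)^{c/2}$, where $P$ is a polynomial with nonnegative coefficients (hence monotone in each argument). The ``$\vee 1$'' appears because the diagonal term and the off-diagonal term are combined via the elementary estimate $a+a^{c/2}\leq C(a\vee 1)^{c/2}$.

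\textbf{Step 3 (general $\phi$ and constants).} For $\phi\in(1,2]$ one has $c=2$ and the claim follows from Step 1 together with Lyapunov. For $\phi\in(2,c)$, Lyapunov's inequality gives $\E|S|^\phi\leq(\E|S|^c)^{\phi/c}$, and the $c$-th moment bound of Step 2 combined with Hölder to pass from moments of order $c+\chi$ to moments of order $\phi+\chi$ produces the stated inequality. Tracking constants through Steps 1--2 shows that $C_1=C_2\cdot C_3$ with $C_2$ depending only on $\phi,\chi$ and $C_3$ a (monotone) polynomial in $f(1),\dots,f(c)$.

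\textbf{Main obstacle.} The delicate point is Step 2: organising the $c$-fold sum over cluster decompositions so that the combinatorial bookkeeping closes, while ensuring that each intermediate sum of mixing weights is absorbed into one of $f(1),\dots,f(c)$ rather than blowing up. Maintaining the linear structure ``$C_2\cdot C_3$'' with $C_3$ depending \emph{only} on $f(1),\dots,f(c)$ (and monotone) requires being precise about which moment factors are paired with which mixing exponent at each splitting step, and it is here that the proof in \cite{Fazekas} provides the template.
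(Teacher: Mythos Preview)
The paper does not actually prove this proposition: the text following the statement explains that ``the result follows from a Rosenthal type inequality for mixing random fields'' and that the authors ``have based our result on the slightly weaker version derived by \cite{Fazekas}, in order to control the constant $C_1$ more precisely.'' So the paper's argument is a direct appeal to Theorem~1 (and Remark~1) of \cite{Fazekas}, which already delivers the bound for \emph{general} $\phi>1$ together with the monotone dependence of the constant on $f(1),\dots,f(c)$.

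Your sketch instead tries to rebuild the inequality from scratch via the route ``$\phi=2$ $\Rightarrow$ even $c$ $\Rightarrow$ Lyapunov interpolation.'' Step~1 is fine. Step~2 is the right combinatorial picture. The gap is in Step~3: to invoke the even-$c$ bound you need moments of order $c+\chi'$ for some $\chi'>0$, but the hypothesis only guarantees $\E|\Xi_z|^{\phi+\chi}<\infty$, and nothing prevents $\phi+\chi\le c$ (e.g.\ $\phi=3.5$, $\chi=0.1$, $c=4$). Your remedy ``H\"older to pass from moments of order $c+\chi$ to moments of order $\phi+\chi$'' goes the wrong way, since H\"older only bounds lower-order norms by higher-order ones, not vice versa. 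The same problem already bites for $\phi\in(1,2)$: Step~1 presupposes $(2+\chi)$-moments, which are not assumed when $\phi+\chi<2$. Trying to rescue this by shrinking $\chi$ to $\chi'=\phi+\chi-c$ simultaneously weakens the mixing exponent $\chi'/(\chi'+u)<\chi/(\chi+u)$, so finiteness of $f(c)$ with the original $\chi$ no longer suffices.

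This is exactly why \cite{Fazekas} (and \cite{Doukhan} before them) prove the inequality directly for arbitrary $\phi$ rather than via even moments plus interpolation: their argument threads the moment order $\phi+\chi$ through every application of the covariance inequality, so the combinatorial sums assemble into $f(1),\dots,f(c)$ without ever demanding more than $\phi+\chi$ moments. If you want a self-contained proof, you should follow that route; your Step~2 machinery can be adapted, but the splitting has to use the $(\phi+\chi)$-norms throughout, not $(c+\chi)$-norms.
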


The proposition shows that the moments of order $\phi$  of the sum $\sum_{z \in \mathcal{M}} \Xi_z$ are (for $\phi\ge 2$) of size $\mathcal{O}(|\mathcal{M}|^{\phi/2})$, as in the independent case. The result follows from a Rosenthal type inequality for mixing random fields, which was first suggested in this setting by \cite{Doukhan}. However, we have based our result on the slightly weaker version derived by \cite{Fazekas}, in order to control the constant $C_1$ more precisely. The theorem shows that the constant only depends on  the mixing coefficients in a monotone fashion.\\ 

In our proofs, we consider stochastic processes indexed in $\kappa$. Therefore, we need inequalities to establish not only pointwise, but uniform boundedness. The following, simple result plays an important role in this context. Notice that in this proposition we consider a process indexed in a discrete index set, whereas $\kappa$ lives on the continuous interval $I$. However, all processes introduced in Section \ref{sec31} are for finite samples discrete (jumping) processes and only asymptotically continuous.  This makes the following result applicable.

\begin{prop} \label{maximal_inequality}
Suppose a stochastic process $\mathpzc{P}:\{Q, Q+1,...,R\} \times \Omega \to \R$ is given, where $R, Q \in  \N$ with $Q<R$ and $\E  \mathpzc{P}(q)^2 < \infty$ for $q=Q,...,R$. Furthermore, suppose that for any $Q \le L \le U \le R$ it holds that 
\begin{equation} \label{moricz_inequality} 
   \E[ \mathpzc{P}(U)- \mathpzc{P}(L)]^2 \le C_4 |U-L| 
\end{equation}
for some $C_4>0$. Then for any ${\zeta_1}>0$, there exists a constant $ C_5$, only depending on ${\zeta_1}$ such that 
$$
\E \sup_{q=Q, Q+1,...,R} [ \mathpzc{P}(q)- \mathpzc{P}(Q)]^2 \le C_4 C_5 (R-Q)^{1+{\zeta_1}}.
$$
\end{prop}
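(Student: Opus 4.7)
The plan is to recognize this as a standard Móricz--Serfling--Stout-type maximal inequality and reduce it to the result already cited by the authors as Theorem 3.1 in \cite{Moricz}. The hypothesis \eqref{moricz_inequality} says exactly that the second-moment increments of $\mathpzc{P}$ are dominated by the additive (in particular, quasi-superadditive) majorant $g(L,U) := C_4(U-L)$. Under this structural assumption, Theorem 3.1 of \cite{Moricz} yields a maximal bound of the form
\begin{equation*}
\E \max_{Q \le q \le R} [\mathpzc{P}(q) - \mathpzc{P}(Q)]^2 \;\le\; \tilde C\, C_4 (R-Q)\bigl[\log_2(R-Q)+1\bigr]^2,
\end{equation*}
with $\tilde C$ an absolute constant.

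From this, the assertion follows from the elementary observation that for every $\zeta_1>0$ there exists a constant $C_5=C_5(\zeta_1)$, depending only on $\zeta_1$, with
\begin{equation*}
\bigl[\log_2(n)+1\bigr]^2 \;\le\; C_5\, n^{\zeta_1} \qquad \text{for all } n \ge 1.
\end{equation*}
Applying this with $n=R-Q$ (and using $1 \le (R-Q)^{\zeta_1}$ when $R=Q$, which is trivial) produces the claimed estimate $C_4 C_5 (R-Q)^{1+\zeta_1}$. Note that $C_5$ indeed depends only on $\zeta_1$ and not on $R$, $Q$ or $C_4$, which is the only property required.

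If one prefers a self-contained argument rather than invoking \cite{Moricz}, the same bound can be obtained by a dyadic chaining argument. Set $m := \lceil \log_2(R-Q) \rceil$. For each $q \in \{Q,\dots,R\}$, write $q-Q$ in binary and define nested approximants $Q = q_0 \le q_1 \le \dots \le q_m = q$, where $q_{j+1}-q_j \in \{0,2^{m-j-1}\}$. Telescoping gives $\mathpzc{P}(q)-\mathpzc{P}(Q) = \sum_{j=0}^{m-1}[\mathpzc{P}(q_{j+1})-\mathpzc{P}(q_j)]$, and Cauchy--Schwarz yields
\begin{equation*}
[\mathpzc{P}(q)-\mathpzc{P}(Q)]^2 \;\le\; m \sum_{j=0}^{m-1}[\mathpzc{P}(q_{j+1})-\mathpzc{P}(q_j)]^2.
\end{equation*}
Taking the supremum over $q$ and then expectation, one bounds each level $j$ by the sum over the at most $2^{j+1}$ possible dyadic endpoints, each of which has second moment at most $C_4 \cdot 2^{m-j-1}$ by \eqref{moricz_inequality}. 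Summing over $j$ gives a total bound of order $C_4 \cdot m^2 \cdot (R-Q)$, and one concludes exactly as above by absorbing the logarithmic factor into $(R-Q)^{\zeta_1}$.

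The argument is essentially routine and there is no real obstacle; the only point to monitor is that the constant $C_5$ coming out of the $\log^2$-versus-$n^{\zeta_1}$ comparison depends solely on $\zeta_1$, which is immediate since the inequality $(\log_2 n + 1)^2 \le C_5(\zeta_1) n^{\zeta_1}$ is a purely deterministic statement.
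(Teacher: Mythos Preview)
Your proposal is correct and invokes the same key tool as the paper, namely Theorem 3.1 of \cite{Moricz}. The only minor difference is where the slack $\zeta_1$ is introduced: the paper first weakens the hypothesis to $\E[\mathpzc{P}(U)-\mathpzc{P}(L)]^2 \le C_4|U-L|^{1+\zeta_1}$ (valid since $|U-L|\ge 1$) and then applies the $\alpha=1+\zeta_1>1$ case of the Moricz theorem directly, whereas you apply the additive ($\alpha=1$) case to obtain a $[\log_2(R-Q)+1]^2$ factor and absorb it afterwards into $(R-Q)^{\zeta_1}$. Both routes are equivalent and equally short; your additional self-contained dyadic chaining argument is a nice bonus but not needed.
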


\begin{proof}
 We notice that one can rewrite $ \mathpzc{P}(U)- \mathpzc{P}(L)$ as a sum of random variables 
$$
 \mathpzc{P}(U)- \mathpzc{P}(L)=\sum_{s=L+1}^U[ \mathpzc{P}(s)- \mathpzc{P}(s-1)].
$$
Moreover it holds by assumption that
$$
\E[ \mathpzc{P}(U)- \mathpzc{P}(L)]^2 \le C_4 |U-L| \le C_4 |U-L|^{1+{\zeta_1}} 
$$
We now apply Theorem 3.1 from \cite{Moricz} with $g(i,j)=|i-j|$, $\alpha =1+{\zeta_1}$, which directly yields the desired result. 
\end{proof}

\subsection{Moment Bounds}

In this section, we gather a few results on the moments of random variables, which are used in the other sections. The proofs are based on applications of Propositions \ref{moments_for_mixing} and \ref{maximal_inequality}.

\begin{prop} \label{prop_all_moments}
Suppose that the Assumptions $(N), (\beta), (\varepsilon), (X)$  hold. Then the following statements are true, where $C>0$ is a fixed constant (possibly depending on $\zeta$) and $\zeta>0$ can be made arbitrarily small. 
\begin{itemize}
    \item[i)] For any fixed but arbitrary $i \in \{1,...,N\}$
$$
 \E  \|\varepsilon_{i}(\kappa)' X_i(\kappa)/\sqrt{T}\|^4 \le C.
 $$
\item[ii)] For any fixed but arbitrary $i \in \{1,...,N\}$
$$
 \E \Big[\sup_{\kappa \in I}  \left\| \varepsilon_{i}(\kappa)X_i(\kappa)/\sqrt{T\kappa}\right\|^4 \Big] \le C T^\zeta.
 $$
 \item[iii)] For any fixed but arbitrary $i \in \{1,...,N\}$
$$
 \E \Big[\sup_{\kappa \in I}  \left\| \varepsilon_{i}(\kappa)/\sqrt{T}\right\|^4 \Big] \le C T^\zeta.
 $$
\item[iv)] For fixed but arbitrary $k \in \{1,...,K\}$ and $i \in \{1,...,N\}$
    $$
    \E \bigg[\sup_{\kappa \in I} \bigg| \frac{1}{\sqrt{NT\kappa} }\sum_{i=1}^{N} \sum_{t=1}^{\lfloor T \kappa \rfloor} \varepsilon_{i,t} (x_{i,t}'Q_i^{-1})_{k}\bigg|^2 \bigg]\le C T^\zeta.
    $$
\item[v)] For any fixed but arbitrary $l,k \in \{1,...,K\}$, $i \in \{1,...,N\}$
$$
 \E \Big[ \sup_{\kappa \in I}\Big | \frac{1}{T} \sum_{1 \le s,t \le \lfloor T \kappa \rfloor}  (x_{i,t})_k (x_{i,s})_l [\varepsilon_{i,t}\varepsilon_{i,s} - \E \varepsilon_{i,t}\varepsilon_{i,s} ]
 \Big|^2\Big]
 \le C T^{\zeta}.
 $$
\item[vi)] For fixed but arbitrary $h \in \{1,...,b\}$ and $i \in \{1,...,N\}$
$$
    \E\bigg[ \sup_{\kappa \in I} \bigg| \frac{1}{{\lfloor T \kappa \rfloor}} \sum_{t=1}^{\lfloor T \kappa \rfloor -h}
    \big \{ 
    \varepsilon_{i,t}\varepsilon_{i,t+h}-\E[ \varepsilon_{i,t}\varepsilon_{i,t+h}] \big \} \bigg|^2 \bigg]\le C b^2 T^{-1+\zeta}.
$$
 \item[vii)]
With $\breve E_2$ defined in \eqref{def_E_2_breve}
$$
    \E \Big[\sup_{\kappa \in I} \breve E_2(\kappa)^2 \Big ] \le C T^{\zeta-1}.
$$
\end{itemize}
\end{prop}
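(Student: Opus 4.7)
All seven bounds follow the same recipe. Step one: express the quantity inside the expectation as a (possibly multi-indexed) partial sum of $\alpha$-mixing summands. Step two: apply Proposition \ref{moments_for_mixing} to obtain a pointwise moment bound of the correct order in $N$ and $T$, using Assumptions $(\varepsilon)(3)$ and $(X)(2)$ for the required moments of the summands, and $(\varepsilon)(5)$ together with \eqref{hd14} for the summability \eqref{hd3}. Step three: for the statements with a supremum over $\kappa\in I$, reduce to a discrete process $\mathpzc{P}(q)$ indexed by $q \in \{\lceil pT\rceil,\ldots,T\}$ (legitimate because the sequential processes are piecewise constant on intervals of length $1/T$), bound the increments $\mathbb{E}|\mathpzc{P}(U)-\mathpzc{P}(L)|^{\phi}$ by the same Rosenthal-type inequality applied on a subregion, and apply Proposition \ref{maximal_inequality} (or its higher-moment analogue, which is the original form of \cite{Moricz}). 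The factor $(R-Q)^{1+\zeta_1}$ coming from the maximal inequality, divided by the appropriate normalization, produces the $T^{\zeta}$ loss seen in (ii)--(vii).

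\textbf{One-dimensional cases (i)--(iii), (vi).} For (i), decompose coordinate-wise into $S_{k,T} := T^{-1/2}\sum_{t=1}^{\lfloor T\kappa\rfloor}\varepsilon_{i,t}(x_{i,t})_k$ and apply Proposition \ref{moments_for_mixing} with $d=1$, $\phi=4$, and $\chi$ small enough that $\phi+\chi \le 2M$; this gives $\mathbb{E}|S_{k,T}|^4 \le C\,T^{-2}\cdot T^2 = C$. Parts (ii) and (iii) are the uniform counterparts, handled by applying the maximal inequality to the partial-sum process; the increment bound $\mathbb{E}|S_{k,U}-S_{k,L}|^4 \le C(U-L)^2$ follows from the same Rosenthal estimate restricted to $t\in\{L+1,\ldots,U\}$, and after dividing by the normalization $q^2 \ge (pT)^2$ one is left with $T^{\zeta_1}$. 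Part (vi) is analogous after observing that $(\varepsilon_{i,t}\varepsilon_{i,t+h}-\mathbb{E}\varepsilon_{i,t}\varepsilon_{i,t+h})_t$ is still $\alpha$-mixing (with rate shifted by $h$, bounded by a constant times $\alpha(\max(0,r-h))$); the $T^{-2}$ from the squared normalization, combined with the worst-case $h$-dependence, yields $b^2 T^{-1+\zeta}$.

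\textbf{Multi-dimensional cases (iv), (v), (vii) and the main obstacle.} Part (iv) is a sum over the two-dimensional lattice $\{1,\ldots,N\}\times\{1,\ldots,\lfloor T\kappa\rfloor\}$; invoking Proposition \ref{moments_for_mixing} with $d=2$, $\phi=2$ gives the pointwise bound $(NT)^{-1}\cdot NT = \mathcal{O}(1)$, while the factor $\|x_{i,t}'Q_i^{-1}\|$ is controlled in $L^{2M}$ by $(X)(2)$--$(3)$, and the supremum step contributes $T^{\zeta}$. Parts (v) and (vii) are the most delicate, and this is the main obstacle of the whole proposition: the summands are centered quadratic forms $\varepsilon_{i,t}\varepsilon_{i,s}-\mathbb{E}\varepsilon_{i,t}\varepsilon_{i,s}$ (possibly multiplied by regressor factors), indexed by pairs $(t,s)$ for (v) or, after expanding $\breve E_2$ via \eqref{def_E_2_breve}, by triples $(i,t,s)$ for (vii). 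One must verify that on these higher-dimensional lattices the centered products inherit $\alpha$-mixing with coefficients satisfying \eqref{hd3}; this is achieved by the observation that a joint shift of all indices by $r$ forces a shift of at least $r$ in some coordinate of the underlying $(x_{i,t},\varepsilon_{i,t})$-process, so the mixing coefficient of the product lattice is bounded by that of the original process, while moments of order $\phi+\chi$ of the products are reduced to moments of order $2(\phi+\chi)$ of the original variables via Cauchy--Schwarz and $(\varepsilon)(3)$. With these ingredients in place, Proposition \ref{moments_for_mixing} delivers $\mathcal{O}(1)$ pointwise in (v) and $\mathcal{O}(T^{-1})$ pointwise in (vii), and the maximal-inequality step from the template supplies the remaining $T^{\zeta}$.
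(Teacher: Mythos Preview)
Your template and case-by-case treatment coincide with the paper's argument: coordinate-wise reduction, Rosenthal-type moment bounds from Proposition~\ref{moments_for_mixing}, and then the M\'oricz maximal inequality (Proposition~\ref{maximal_inequality} or its higher-moment form) to pass to the supremum. The parameter choices and the handling of (vi) via an $h$-shifted mixing rate are equivalent to what the paper does (the paper phrases it as tracking the $h$-dependence of the Rosenthal constant via Remark~1 in \cite{Fazekas}).

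One technical point in your treatment of (v) and (vii) needs repair. Your claim that ``a joint shift of all indices by $r$ forces a shift of at least $r$ in some coordinate of the underlying $(x_{i,t},\varepsilon_{i,t})$-process, so the mixing coefficient of the product lattice is bounded by that of the original process'' is not correct: for (v), take the points $(t,s)=(1,1)$ and $(t',s')=(r+1,1)$ in $\Z^2$; they are at max-norm distance $r$, yet the summand at $(r+1,1)$ involves time $1$ of the underlying 1D process, hence shares a $\sigma$-field with the summand at $(1,1)$. The derived array on the product lattice is therefore \emph{not} $\alpha$-mixing in the sense of Definition~\ref{def_mixing}, and Proposition~\ref{moments_for_mixing} cannot be invoked as written. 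The paper is equally informal at this spot, simply asserting that the relevant arrays are ``strongly mixing''. The gap is harmless here because (v) and (vii) only require $\phi=2$: expanding the square and bounding the resulting covariances directly with the $\alpha$-mixing covariance inequality in the \emph{original} 2D lattice $(i,t)$ recovers the stated orders; the combinatorics of counting pairs at a given underlying distance is routine.
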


\begin{proof}

\textbf{i):} We begin by noticing that $\varepsilon_{i}(\kappa)' X_i(\kappa)$ can be written as a sum
$$
\varepsilon_{i}(\kappa)' X_i(\kappa) = \sum_{t=1}^{\lfloor T \kappa \rfloor} \varepsilon_{i,t} x_{i,t} = \Big( \sum_{t=1}^{\lfloor T \kappa \rfloor} \varepsilon_{i,t} x_{i,t,k}  \Big)_{k=1,...,K}.
$$
It suffices to bound the fourth moment entry-wise, so let $k \in \{1,...,K\}$ be fixed but arbitrary. Notice that the moments of order $2M$ of $\varepsilon_{i,t} x_{i,t,k}$ are uniformly bounded, as
$$
\E(\varepsilon_{i,t} x_{i,t,k})^{2M} = \E(\E[\varepsilon_{i,t}^{2M}|\mathbf{X}] x_{i,t,k}^{2M}) \le C\E(x_{i,t,k}^{2M}) \le C.
$$
Here we have used exogeneity (Assumption $(\varepsilon), (1)$), existence of conditional moments of order $2M$ ($(\varepsilon), (3)$) and finally existence of $2M$-moments of the regressors (Assumption $(X)$, $(2)$). 
We now apply Proposition \ref{moments_for_mixing} for $\phi=c=4$ (moment), $d=1$ (dimension of the array) and $\chi=2M-4$ (existing moments of  $\varepsilon_{i,t} x_{i,t,k}$) to see that indeed
$$
\E \Big(\sum_{t=1}^{\lfloor T \kappa \rfloor} \varepsilon_{i,t} x_{i,t,k}  \Big)^2 \le C.
$$
Notice that the mixing condition \eqref{hd3} is met, as $a> 3M/(M-2)$ by Assumption $(\varepsilon), (5)$. This proofs the assertion. \\

\textbf{ii):} 
Recall the proof of the previous step: We can write $\varepsilon_{i}(\kappa)' X_i(\kappa)$  as a sequential sum process and in order to show the desired upper bound we can restrict ourselves to proving it for the $k$th entry (with $k \in \{1,...,K\}$ fixed but arbitrary). First notice that for any pair $(L,U)$ of natural numbers which satisfies $1 \le L <U \le T$ the following inequality holds 
$$
 \E \Big|\sum_{t=L}^U \varepsilon_{i,t} x_{i,t,k}' /\sqrt{T} \Big|^4 \le C (|U-L|/T)^2~,
$$
where $C$ is independent of $i, N, T, L, U$. The argument is the same as in the previous step (Application of Proposition \ref{moments_for_mixing}, with the same parameter specifications). This inequality is the sufficient condition \eqref{moricz_inequality} and we can hence apply Proposition \ref{maximal_inequality}, which directly implies: 
$$
\E\Big[ \sup_{p \le \kappa 1} \Big| \sum_{t=1}^{\lfloor T \kappa \rfloor} \varepsilon_{i,t} x_{i,t,k}  \Big|^4 \Big]\le C T^\zeta.
$$

\textbf{iii):} The proof follows in the same way as the previous one and is therefore omitted. \\

\textbf{iv):}
Let $k$ be fixed but arbitrary. Notice that the random variables $(\varepsilon_{i,t} (x_{i,t}'Q_i^{-1})_{k})_{t=1,...,T}$ are $\alpha$-mixing according to Assumption $(\varepsilon), (5)$ and that their $2M$-moments exist and are uniformly bounded. To see this consider the following calculation:
\begin{align*}
   \E( \varepsilon_{i,t} (x_{i,t}'Q_i^{-1})_{k})^{2M}
   =  \E\{ \E [\varepsilon_{i,t}^{2M}|\mathbf{X}] (x_{i,t}'Q_i^{-1})_{k}^{2M}\} \le C \E\{ (x_{i,t}'Q_i^{-1})_{k}^{2M}\} \le  C \E\{ \|x_{i,t}\|^{2M}\} \le C.
\end{align*}
In the first inequality we have used Assumption $(\varepsilon)$, $(3)$, in the second the uniform boundedness of $Q_i^{-1}$ (see Assumption $(X), (3)$) and in the last Assumption $(X), (2)$
Proposition \ref{moments_for_mixing}
(with $c=\phi=d=2$ and $\chi=2M-2$) implies that 
$$
\E \bigg\{ \sum_{i=1}^{N} \sum_{t=L}^{U} \varepsilon_{i,t} (x_{i,t}'Q_i^{-1})_{k}  \bigg\}^2 \le C  N |U-L|,
$$
for any $1 \le L \le U \le T$ and 
some universal constant $C$ (note that assumption \eqref{hd3} indeed holds due to our mixing conditions). 
 This inequality corresponds to the condition \eqref{moricz_inequality} of Proposition \ref{maximal_inequality} 
 which, yields the desired result for some universal constant $C$ depending on $\zeta$ but not on $N$ or $T$.\\
 
\textbf{v):} The proof works by the same techniques as that of the previous step and is therefore omitted.\\

\textbf{vi):} Recall the definition of $b$ in Proposition \ref{theorem4}. The proof again follows by the maximal inequality in Proposition \ref{maximal_inequality}. In order to apply it, we demonstrate that
\begin{align} \label{need_Fazekas_2}
\E\Big[ \sum_{t=L}^{U}\varepsilon_{i,t}\varepsilon_{i,t-h}-\E\varepsilon_{i,t}\varepsilon_{i,t-h} \Big] ^2\le C b^2 |U-L|,
\end{align}
for any  $1 \le L \le U \le T$ and a constant  $C>0$ which is  independent of $h$. 
The inequality
\eqref{need_Fazekas_2} follows by a similar, but stronger result than Proposition \ref{moments_for_mixing}. To see the necessity of a stronger result, notice that while $\varepsilon_{i,t}\varepsilon_{i,t-h}$  for fixed $h$ still satisfies the mixing condition described in Assumption 
$(\varepsilon), (5)$,  the constant in the mixing condition is not uniform in $h$ anymore. We hence use Remark $1$ in \cite{Fazekas}, which shows that the constant satisfies $C(h) \le C h \le C b$. An application of Theorem 1 of \cite{Fazekas}  yields the desired result in \eqref{need_Fazekas_2} (the conditions of the cited  result  match those of Proposition \ref{moments_for_mixing}). From \eqref{need_Fazekas_2} together with Proposition \ref{maximal_inequality} follows the desired result.\\
To get an intuitive understanding for the argument, the reader may consider a pointwise application of Proposition \ref{moments_for_mixing} for $\phi=c=2, \chi=M-2$, which if $C$ \textit{was} uniform gives \eqref{need_Fazekas_2}. \\

\textbf{vii):} We want to apply the maximal inequality from Proposition \ref{maximal_inequality}. 
  For this purpose, we define for $k=1, \ldots, K$
  $$
  \breve E_{2,k}(L) :=\sum_{i=1}^N \sum_{s,t=1}^{L}  ( x_{i,t}' Q_i^{-1})_k (\varepsilon_{i,t} \varepsilon_{i,s}-\sigma_{i,i} \tau(t-s))  ( x_{i,s}' Q_i^{-1})_k.
 $$
 and notice that by construction $\breve E_2(\kappa) = \frac{1}{(T\kappa)\sqrt{TN}} \sum_{k=1}^K \breve E_{2,k} ( \lfloor T \kappa \rfloor ) $. 
We consider each term $\breve E_{2,k}$ separately. More precisely, we show that for any $k \in \{1,...,K\}$ the inequality
\begin{equation} \label{cond_E_2_Prop_2}
\E \big| \breve E_{2,k}(L) - \breve E_{2,k}(U) \big|^2 \le C(|U-L|TN)
\end{equation}
holds, for all $1 \le L \le U \le T$ and some universal constant $C$. To see that \eqref{cond_E_2_Prop_2} holds, we first decompose the difference $ \breve E_{2,k}(L) - \breve E_{2,k}(U) $ as follows: 
\begin{align*}
\big| \breve E_{2,k}(L) - \breve E_{2,k}(U) \big|
\le & 2  \Big| \sum_{i=1}^N \sum_{s=1}^{L}\sum_{t=L}^{U}  ( x_{i,t}' Q_i^{-1})_k (\varepsilon_{i,t} \varepsilon_{i,s}-\sigma_{i,i} \tau(t-s)) ( x_{i,s}' Q_i^{-1})_k \Big|\\
 +&  \Big| \sum_{i=1}^N \sum_{s=L}^{U}\sum_{t=L}^{U}  ( x_{i,t}' Q_i^{-1})_k (\varepsilon_{i,t} \varepsilon_{i,s}-\sigma_{i,i} \tau(t-s)) ( x_{i,s}' Q_i^{-1})_k \Big|
\end{align*} 

Both sums on the right consist  of three-dimensional arrays of strongly mixing random variables with expectation $0$ and existing moments of order $M$. The second moment of each sum is upper bounded in same fashion and thus we only focus on the first one.
We can  apply Proposition \ref{moments_for_mixing} with $d=2, c=2, \chi=M-2$, which gives 
$$
\E \Big| \sum_{i=1}^N \sum_{s=1}^{L}\sum_{t=L}^{U}  ( x_{i,t}' Q_i^{-1})_l (\varepsilon_{i,t} \varepsilon_{i,s}-\sigma_{i,i} \tau(t-s)) ( x_{i,s}' Q_i^{-1})_l  \Big|^2 \le C(|U-L|TN).
$$
(note that  condition \eqref{hd3}
in Proposition \ref{moments_for_mixing} is satisfied because  $a>2M/(M-2)$).  
This implies \eqref{cond_E_2_Prop_2} and hence, as $K$ is finite we
obtain 
\begin{align*}
& \E\Big[\sum_{k=1}^K \breve E_{2,k}(L) -\sum_{k=1}^K  \breve E_{2,k} (U) \Big]^2
  \le C(|U-L|TN).
\end{align*}
Defining $\breve E_{2}(0)=0$ it follows  from 
 Proposition \ref{maximal_inequality}
$$  
\E  \Big [ \sup_{\kappa \in I} |\breve E_{2}(\kappa)| \Big ] ^2 \le \E \Big [ \sup_{0 \le \kappa \le 1} |\breve E_{2}(\kappa)|^2\Big ] \le CT^{\zeta-1}
$$

\end{proof}

\subsection{Proof of
(\ref{moricz_term})}
\label{secC3} 
 
In this section we prove the identity \eqref{moricz_term}. 
Recall the definition of the stochastic process $D_1(\kappa)$ defined in \eqref{def_D_1}. In order to make the subsequent derivations easier, we use a different index notation in the course of this proof. Instead of using a notation  depending on $\kappa$ to indicate that a quantity depends on the first  
$\lfloor T \kappa \rfloor$ data points, we use the notation
$\lfloor T \kappa \rfloor$. For example,
$\varepsilon_i(\kappa)$ will be denoted by $\varepsilon_i(\lfloor T \kappa \rfloor)$, which  also  explains the notation $\varepsilon_i(L)$  meaning that the first $L$ data points are used. Similarly, we use the notations $D_1(L)$
instead of $D_1(\kappa)$.
\\
We want to employ Proposition \ref{maximal_inequality} and therefore consider for  $\lfloor p T \rfloor \le L \le U \le T$ the difference $D_1(U)-D_1(L)$. If we can show 
\begin{equation} \label{hd10}
    \E(D_1(U)-D_1(L))^2 \le CT^{-2\zeta-1} |U-L|
    \end{equation}
  (see condition \eqref{moricz_inequality}) for some $\zeta>0$ and some constant $C$, it follows by  Proposition \ref{maximal_inequality} that 
\begin{equation*} \label{hd11}
\mathbb{E}[\sup_{\kappa \in I}[D_1(T)-D_1(\lfloor p T \rfloor )]^{2}] \le C T^{\zeta-2\zeta}=C T^{-\zeta},
    \end{equation*}
    which is \eqref{moricz_term}. \\
    
For a proof of \eqref{hd10} we use the decomposition 
\begin{align}   \label{hd12}
D_1(U)-D_1(L)= A_1+A_2+A_3, 
\end{align} 
where 
\begin{align*} 
A_1:=& \frac{U/T}{\sqrt{TN}}\sum_{i=1}^{N}  2\{\varepsilon_{i}(U)-\varepsilon_{i}(L)\}' X_i(U) Q_i^{-1} \big\{ Q_i-X_i(U)'X_i(U)/U \big\} Q_i^{-1}[\beta_i-\bar \beta]\mathbbm{1}_{\mathscr{B}^c}\\
A_2:=&\frac{U/T}{\sqrt{TN}}\sum_{i=1}^{N}  2\varepsilon_{i}(L)' \{X_i(U)-X_i(L)\} Q_i^{-1} \big\{ Q_i-X_i(U)'X_i(U)/U \big\} Q_i^{-1}[\beta_i-\bar \beta]\mathbbm{1}_{\mathscr{B}^c}\\
A_3:=&\frac{U/T}{\sqrt{TN}}\sum_{i=1}^{N}  2\varepsilon_{i}(L)' X_i(L) Q_i^{-1} \big\{ X_i(L)'X_i(L)/L-X_i(U)'X_i(U)/U \big\} Q_i^{-1}[\beta_i-\bar \beta]\mathbbm{1}_{\mathscr{B}^c}.
\end{align*}
Obviously, \eqref{hd10} follows from a corresponding estimate 
for each term $A_\ell$ $(\ell =1 , 2 ,3)$, i.e.
\begin{equation}  \label{hd15}
\E A_\ell^2 \le C |U-L| T^{-2\zeta-1}
\end{equation} 
for a sufficiently small $\zeta>0$.
We focus on the terms $A_1$ and $A_3$, the term $A_2$ can be treated similarly. 

For the  term $A_1$ we use  Proposition \ref{moments_for_mixing} with  $\phi=c=2$ and $\chi<2$ but very  close $2$ (we will specify it more precisely below), such that
$$
a> \frac{3(M+2 \lor 2 \eta)}{M-2 \lor 2 \eta} > \frac{\chi +2}{\chi}
$$
(recall  inequality \eqref{hd14} in Assumption $(\varepsilon), (5)$).
Hence, the summability condition \eqref{hd3} in Proposition \ref{moments_for_mixing} applies and we get 
\begin{equation}
 \label{hd16}   
\E A_1^2 \le \frac{C }{N} \sum_{i=1}^N  A_{1,i}
\end{equation}
where 
$$
 A_{1,i} = 
\E[|2\{\varepsilon_{i}(U)-\varepsilon_{i}(L)'\} X_i(U) Q_i^{-1} \big\{ Q_i-X_i(U)'X_i(U)/U \big\} Q_i^{-1}[\beta_i-\bar \beta]\mathbbm{1}_{\mathscr{B}^c}|^{2+\chi}]^{\frac{2}{2+\chi}}
$$
We can now upper bound the term $A_{1,i}$ follows:
\begin{align} 
\nonumber
A_{1,i} &
\le  C \{  \E \| \{\varepsilon_{i}(U)-\varepsilon_{i}(L)\}'X_i(U)/\sqrt{T} \|^{2+\chi} \|     \big\{ Q_i-X_i(U)'X_i(U)/U \big\} \mathbbm{1}_{\mathscr{B}^c}\|^{2+\chi}\}^{\frac{2}{2+\chi}} \\  
 &\le C B_{1,1,i} \cdot B_{1,2,i}, 
 \label{crazy_terms_1}
 \end{align}
 where 
 \begin{align}
 & B_{1,1,i}:=   \{\E[|\{\varepsilon_{i}(U)-\varepsilon_{i}(L)\}'X_i(U)/\sqrt{T}\|^4\}^{1/2} \nonumber\\
 & B_{1,2,i} := \{\E \| \big\{ Q_i-X_i(U)'X_i(U)/U \big\}  \mathbbm{1}_{\mathscr{B}^c} \|^{4(2+\chi)/(2-\chi)}\}^{(2-\chi)/(2(2+\chi))} \nonumber .
\end{align}
and we have used  the Hölder inequality  and the fact that the quantities 
$\| Q_i^{-1} \|$ and $\beta_i - \bar \beta$ are bounded (see Assumptions $(\beta)$ and $(X), (3)$).
Beginning with $B_{1,1,i}$, wee see that
\begin{align}
\label{hd17}
B_{1,1,i}^2 = & \E \|\sum_{t=L}^U \varepsilon_{i,t} x_{i,t}' /\sqrt{T}\|^4 \le C  \max_{k=1,...,K} \E |\sum_{t=L}^U \varepsilon_{i,t} x_{i,t,k}' /\sqrt{T}|^4 \le C (|U-L|/T)^2~,
\end{align}
where the last equality follows from  Proposition \ref{moments_for_mixing} with $\phi=c=4, \chi=2M-4$, $d=1$ 
(note that the summability condition is met as $a$ in particular satisfies $a>3M/(M-2)$). 

To derive an estimate of $B_{1,2,i}$ we note that the random variable $\|\{ Q_i-X_i(U)'X_i(U)/U \} \mathbbm{1}_{\mathscr{B}^c} \|$ 
is by definition of $\mathscr{B}$ (see \eqref{bad_event}) uniformly bounded (with respect to $i$). Consequently, we may upper bound its moment of arbitrary order by the $2\eta$th  moment, times some constant $C$, which depends only on the range of the random variable. Hence
\begin{align*}
B_{1,2,i}
\le C \{\E \| \big\{ Q_i-X_i(U)'X_i(U)/U \big\} \mathbbm{1}_{\mathscr{B}^c} \|^{2\eta}\}^{(2-\chi)/(2(2+\chi))} \le C T^{-\eta(2-\chi)/(2(2+\chi))}~,
\end{align*}
where we have used Proposition \ref{proposition_regressors} in the last step. The result in  \eqref{hd15} now follows by combining 
this estimate with  \eqref{hd16}, \eqref{crazy_terms_1} and \eqref{hd17}. \\

Next we  turn to the term $A_3$ in the decomposition \eqref{hd12}. The first step is similar as for $A_1$.
According to Proposition \ref{moments_for_mixing} we can upper bound the second moment of $
A_3$ as follows: In the proposition let $\phi=c=2$ and $\chi<2$ be sufficiently close to $2$, such that
$$
a>\frac{\chi +2}{\chi}.
$$
Hence the summability condition in Proposition \ref{moments_for_mixing} applies and we get 
\begin{equation}
\E A_3^2 \le \frac{C }{N} \sum_{i=1}^N  A_{3,i}~,
\end{equation}
where
$$
A_{3,i}=
\E[|\varepsilon_{i}(L)' X_i(L) \sqrt{T}Q_i^{-1} \big\{ X_i(U)'X_i(U)/U-X_i(L)'X_i(L)/L\big\} Q_i^{-1}[\beta_i-\bar \beta]\mathbbm{1}_{\mathscr{B}^c}|^{2+\chi}]^{\frac{2}{2+\chi}}.$$
The Hölder inequality yields 
\begin{align} 
\nonumber
A_{3,i} 
\le & C  \E[|\varepsilon_{i}(L)' X_i(L)/\sqrt{T} \|^{2+\chi} \|  X_i(U)'X_i(U)/U-X_i(L)'X_i(L)/L \|^{2+\chi} \mathbbm{1}_{\mathscr{B}^c}]^{\frac{2}{2+\chi}}   \\
\le & C B_{3,1,i} \cdot B_{3,2,i},
\label{crazy_terms }
\end{align}
{where}
\begin{align} 
& B_{3,1,i} :=    \{\E[|\varepsilon_{i}(L)' X_i(L) /\sqrt{T}\|^4\}^{1/2} \nonumber\\
& B_{3,2,i} := \{\E\|  X_i(U)'X_i(U)/U-X_i(L)'X_i(L)/L \| \mathbbm{1}_{\mathscr{B}^c}) ^{4(2+\chi)/(2-\chi)}\}^{(2-\chi)/(2(2+\chi))}. \nonumber
\end{align}

$B_{3,1,i}$  is bounded according to Proposition \ref{prop_all_moments}, part $i)$.
For $B_{3,2,i}$ we note that 
\begin{align*}
    \frac{T}{U-L} (X_i(U)'X_i(U)/U-X_i(L)'X_i(L)/L)  = B_{3,2,1,i}+B_{3,2,2,i}~,
\end{align*}
where
\begin{align*}
 B_{3,2,1,i} & = - \frac{T}{UL}\sum_{t=1}^{L}(x_{i,t} x_{i,t}'-Q_i) \\
 B_{3,2,2,i} &= \frac{T}{U(U-L)}\sum_{t=L+1}^{U}(x_{i,t} x_{i,t}'-Q_i)
\end{align*}
We can now show for each of these terms separately that
\begin{equation} 
\label{hd20}
\{\E( \| B_{3,2,\ell,i} \| \mathbbm{1}_{\mathscr{B}^c})^{4(2+\chi)/(2-\chi)}\}^{(2-\chi)/(2(2+\chi))} \le C T^{-\zeta}~~~(\ell =1,2),
\end{equation}
where we restrict ourselves to a discussion of $ \| B_{3,2,1,i}
\| \mathbbm{1}_{\mathscr{B}^c}$
for the sake of brevity.  Note that  on the event $\mathscr{B}^c$ the sum $ B_{3,2,1,i} $ is bounded. As a consequence we can upper bound the moment of order $4(2+\chi)/(2-\chi)$ by a constant times a moment of order $1+\delta$, for some arbitrarily small $\delta>0$.
Consequently, it is sufficient to show 
\begin{equation} \label{hd20a}
 \E\| B_{3,2,1,i}\|^{1+\delta}\le C T^{-\zeta}.
\end{equation}
To show this estimate we  first note that  that in the definition of $B_{3,2,1,i}$ we can replace the matrix $Q_i$ in the sum by the expectations $\E [ x_{i,t} x_{i,t}'] $ only incurring an error of size $C /\sqrt{T}$  (see Assumption $(X), (1)$), that is
\begin{equation} \label{hd21}
| B_{3,2,1,i} - \tilde B_{3,2,1,i}  |  \leq C /\sqrt{T}~, 
\end{equation}
where
$$
\tilde B_{3,2,1,i} =- \frac{T}{UL}\sum_{t=1}^{L}(x_{i,t} x_{i,t}'-\E [x_{i,t} x_{i,t}'] ) .
$$
 We 
now consider each entry of the $K \times K$-matrix separately, where we focus as example on the first one (with index $(1,1)$) considering 
\begin{align*}
    \E|(\tilde B_{3,2,1,i})_{1,1}^{1+\delta}|= \left(\frac{T}{U}\right)^{1+\delta}\E \Big |\frac{1}{L}\sum_{t=1}^{L}((x_{i,t} x_{i,t}')_{1,1}-\E (x_{i,t} x_{i,t}')_{1,1}) \Big |^{1+\delta}.
\end{align*}
Notice that $T/U$ is bounded by a constant (recall that the indexing interval for $\kappa$ is bounded away from zero). The above moment is bounded by $C L^{-\zeta}$ for some $\zeta>0$, which follows from 
 Proposition \ref{moments_for_mixing} with  $\phi=1+\delta, c=2, d=1, \chi = M-1-\delta$ (note that  the summability condition holds for $a>(M-3)/(M-1)$ if $\delta$ is chosen sufficiently small). 
 Now $L^{-\zeta} \le C T^{- \zeta}$ and consequently, using the same argument for the other entries in the matrix $B_{3,2,1,i}$,  we have shown that 
\begin{equation} \label{hd22}
 \E\|\tilde B_{3,2,1,i}\|^{1+\delta}\le C T^{-\zeta}.
\end{equation}
We can now establish \eqref{hd20a}  by combining \eqref{hd21} 
and \eqref{hd22}.
This proves \eqref{hd15} for  the case $\ell=3$.

Similar arguments yield the corresponding statement in the case 
$\ell =2$ and it follows from \eqref{hd12} that \eqref{hd10} holds.



\begin{thebibliography}{}

\bibitem[Ando and Bai, 2015]{ando2015}
Ando, T. and Bai, J. (2015).
\newblock A simple new test for slope homogeneity in panel data models with
  interactive effects.
\newblock {\em Economics Letters}, 136:112--117.

\bibitem[Baltagi et~al., 2008]{Baltagi2008}
Baltagi, B., Bresson, G., and Pirotte, A. (2008).
\newblock {\em To pool or not to pool? In: Mátyás, L., Sevestre, P. (eds) The
  econometrics of panel data.}
\newblock Springer., Berlin.

\bibitem[Baltagi et~al., 2003]{Baltagi2003}
Baltagi, B., Bresson, G., Pirotte, A., and Griffin, J. (2003).
\newblock Homogeneous, heterogeneous or shrinkage estimators? some empirical
  evidence from french regional gasoline consumption.
\newblock {\em Empirical Economics}, 28:795--811.

\bibitem[Baltagi and Griffin, 1997]{Baltagi1997}
Baltagi, B. and Griffin, J.~M. (1997).
\newblock Pooled estimators vs. their heterogeneous counterparts in the context
  of dynamic demand for gasoline.
\newblock {\em Journal of Econometrics}, 77(2):303--327.

\bibitem[Bekun et~al., 2019]{BEKUN2019}
Bekun, F.~V., Alola, A.~A., and Sarkodie, S.~A. (2019).
\newblock Toward a sustainable environment: Nexus between co2 emissions,
  resource rent, renewable and nonrenewable energy in 16-eu countries.
\newblock {\em Science of The Total Environment}, 657:1023--1029.

\bibitem[Blomquist and Westerlund, 2013]{Blomquist2013b}
Blomquist, J. and Westerlund, J. (2013).
\newblock Testing slope homogeneity in large panels with serial correlation.
\newblock {\em Economics Letters}, 121(3):374 -- 378.

\bibitem[Blomquist and Westerlund, 2015]{Blomquist}
Blomquist, J. and Westerlund, J. (2015).
\newblock Panel bootstrap tests of slope homogeneity.
\newblock {\em Empirical Economics}, 50.

\bibitem[Breitung et~al., 2016]{breitung2016}
Breitung, J., Roling, C., and Salish, N. (2016).
\newblock Lm-type tests for slope homogeneity in panel data models.
\newblock {\em The Econometrics Journal}, 19.

\bibitem[Campello et~al., 2019]{Campello}
Campello, M., Galvao, A.~F., and Juhl, T. (2019).
\newblock Testing for slope heterogeneity bias in panel data models.
\newblock {\em Journal of Business \& Economic Statistics}, 37(4):749--760.

\bibitem[Chang et~al., 2015]{Chang2015}
Chang, T., Gupta, R., Inglesi-Lotz, R., Simo-Kengne, B., Smithers, D., and
  Trembling, A. (2015).
\newblock Renewable energy and growth: Evidence from heterogeneous panel of g7
  countries using granger causality.
\newblock {\em Renewable and Sustainable Energy Reviews}, 52:1405--1412.

\bibitem[Chudik and Pesaran, 2013]{Chudik}
Chudik, A. and Pesaran, M.~H. (2013).
\newblock {Large panel data models with cross-sectional dependence: a survey}.
\newblock Globalization Institute Working Papers 153, Federal Reserve Bank of
  Dallas.

\bibitem[Dehling et~al., 2002]{dehling}
Dehling, H., Mikosch, T., and S\"orensen, M. (2002).
\newblock {\em Empirical process techniques for dependent data}.
\newblock Birkh\"auser, Basel.

\bibitem[Dette et~al., 2020]{DetteKokotVolgushev}
Dette, H., Kokot, K., and Volgushev, S. (2020).
\newblock Testing relevant hypotheses in functional time series via
  self‐normalization.
\newblock {\em Journal of the Royal Statistical Society: Series B (Statistical
  Methodology)}, 82.

\bibitem[Dong et~al., 2018]{DONG2018180}
Dong, K., Hochman, G., Zhang, Y., Sun, R., Li, H., and Liao, H. (2018).
\newblock Co2 emissions, economic and population growth, and renewable energy:
  Empirical evidence across regions.
\newblock {\em Energy Economics}, 75:180--192.

\bibitem[Doukhan, 1994]{Doukhan}
Doukhan, P. (1994).
\newblock {\em Mixing. Lecture Notes in Statistics 85}.
\newblock Springer, Heidelberg.

\bibitem[Fazekas et~al., 2000]{Fazekas}
Fazekas, I., Kukush, A., and Tómács, T. (2000).
\newblock On the rosenthal inequality for mixing fields.
\newblock {\em Ukrainian Mathematical Journal - UKR MATH J}, 52:305--318.

\bibitem[Hafouta, 2021]{hafouta2021}
Hafouta, Y. (2021).
\newblock Convergence rates in the functional clt for $\alpha$-mixing
  triangular arrays.

\bibitem[Hausman, 1978]{Hausman1978}
Hausman, J.~A. (1978).
\newblock Specification tests in econometrics.
\newblock {\em Econometrica}, 46(6):1251--1271.

\bibitem[Hsiao, 2003]{Hsiao}
Hsiao, C. (2003).
\newblock {\em Analysis of Panel Data}.
\newblock Cambridge University Press, New York.

\bibitem[Hsiao and Tahmiscioglu, 1997]{Hsiao1997}
Hsiao, C. and Tahmiscioglu, A.~K. (1997).
\newblock A panel analysis of liquidity constraints and firm investment.
\newblock {\em Journal of the American Statistical Association},
  92(438):455--465.

\bibitem[IPCC, 2021]{IPCC2021}
IPCC (2021).
\newblock {\em IPCC, 2021: Summary for Policymakers. In: Climate Change 2021:
  The Physical Science Basis. Contribution of Working Group I to the Sixth
  Assessment Report of the Intergovernmental Panel on Climate Change}.
\newblock Cambridge University Press.

\bibitem[Johnson et~al., 2007]{JOHNSON2007}
Johnson, J. M.-F., Franzluebbers, A.~J., Weyers, S.~L., and Reicosky, D.~C.
  (2007).
\newblock Agricultural opportunities to mitigate greenhouse gas emissions.
\newblock {\em Environmental Pollution}, 150(1):107--124.

\bibitem[Lin and Ng, 2007]{Lin}
Lin, C.-C. and Ng, S. (2007).
\newblock Estimation of panel data models with parameter heterogeneity when
  group membership is unknown.
\newblock {\em Journal of Econometric Methods}, 1.

\bibitem[Lobato, 2001]{lobato2001}
Lobato, I.~N. (2001).
\newblock Testing that a dependent process is uncorrelated.
\newblock {\em Journal of the American Statistical Association},
  96(455):1066--1076.

\bibitem[McMurry and Politis, 2010]{Mcmurpol2010}
McMurry, T.~L. and Politis, D.~N. (2010).
\newblock Banded and tapered estimates for autocovariance matrices and the
  linear process bootstrap.
\newblock {\em Journal of Time Series Analysis}, 31(6):471--482.

\bibitem[Moricz et~al., 1982]{Moricz}
Moricz, F.~A., Serfling, R.~J., and Stout, W.~F. (1982).
\newblock Moment and probability bounds with quasi-superadditive structure for
  the maximum partial sum.
\newblock {\em Annals of Probability}, 10(4):1032--1040.

\bibitem[M{\"u}ller, 2007]{Mueller2007}
M{\"u}ller, U.~K. (2007).
\newblock A theory of robust long-run variance estimation.
\newblock {\em Journal of Econometrics}, 141:1331--1352.

\bibitem[Pesaran et~al., 1996]{Pesaran1996}
Pesaran, H., Smith, R., and Im, K.~S. (1996).
\newblock Dynamic linear models for heterogeneous panels.
\newblock {\em In: Matyas L., Sevestre, P., (Eds.), The Econometrics of Panel
  Data: A Handbook of the Theory with Applications. Kluwer, pp. 145-195.}

\bibitem[Pesaran and Yamagata, 2008]{Pesaran}
Pesaran, M.~H. and Yamagata, T. (2008).
\newblock Testing slope homogeneity in large panels.
\newblock {\em Journal of Econometrics}, 142(1):50 -- 93.

\bibitem[Phillips and Sul, 2003]{Phillips}
Phillips, P. and Sul, D. (2003).
\newblock Dynamic panel estimation and homogeneity testing under cross-section
  dependence.
\newblock {\em Econometrics Journal}, 6:217--259.

\bibitem[Pollard, 1984]{pollard}
Pollard, D. (1984).
\newblock {\em Convergence of Stochastic Processes}.
\newblock Springer, New York.

\bibitem[Prokhorov, 1956]{prokhorov}
Prokhorov, Y.~V. (1956).
\newblock Convergence of random processes and limit theorems in probability
  theory.
\newblock {\em Theory of Probability and Its Applications}, 1:157--214.

\bibitem[Qiao et~al., 2019]{QIAO2019}
Qiao, H., Zheng, F., Jiang, H., and Dong, K. (2019).
\newblock The greenhouse effect of the agriculture-economic growth-renewable
  energy nexus: Evidence from g20 countries.
\newblock {\em Science of The Total Environment}, 671:722--731.

\bibitem[Sarafidis and Wansbeek, 2020]{Sarafidis2020}
Sarafidis, V. and Wansbeek, T. (2020).
\newblock Celebrating 40 years of panel data analysis: Past, present and
  future.
\newblock {\em Journal of Econometrics}, 220.

\bibitem[Sarafidis and Weber, 2015]{Sarafidis2015}
Sarafidis, V. and Weber, N. (2015).
\newblock A partially heterogeneous framework for analyzing panel data.
\newblock {\em Oxford Bulletin of Economics and Statistics}, 77(2):274--296.

\bibitem[Shao, 2010]{shao2010}
Shao, X. (2010).
\newblock A self-normalized approach to confidence interval construction in
  time series.
\newblock {\em Journal of the Royal Statistical Society: Series B (Statistical
  Methodology)}, 72(3):343--366.

\bibitem[Shao, 2015]{Shao2015}
Shao, X. (2015).
\newblock Self-normalization for time series: A review of recent developments.
\newblock {\em Journal of the American Statistical Association}, 110.

\bibitem[Shao and Zhang, 2010]{shazha2010}
Shao, X. and Zhang, X. (2010).
\newblock Testing for change points in time series.
\newblock {\em Journal of the American Statistical Association},
  105(491):1228--1240.

\bibitem[Shen et~al., 2021]{SHEN2021}
Shen, Y., Su, Z.-W., Malik, M.~Y., Umar, M., Khan, Z., and Khan, M. (2021).
\newblock Does green investment, financial development and natural resources
  rent limit carbon emissions? a provincial panel analysis of china.
\newblock {\em Science of The Total Environment}, 755:142538.

\bibitem[Shuai et~al., 2017]{SHUAI2017310}
Shuai, C., Shen, L., Jiao, L., Wu, Y., and Tan, Y. (2017).
\newblock Identifying key impact factors on carbon emission: Evidences from
  panel and time-series data of 125 countries from 1990 to 2011.
\newblock {\em Applied Energy}, 187:310--325.

\bibitem[Su et~al., 2016]{Su2016}
Su, L., Shi, Z., and Phillips, P. C.~B. (2016).
\newblock Identifying latent structures in panel data.
\newblock {\em Econometrica}, 84(6):2215--2264.

\bibitem[Swamy, 1970]{Swamy}
Swamy, P. A. V.~B. (1970).
\newblock Efficient inference in a random coefficient regression model.
\newblock {\em Econometrica}, 38(2):311--323.

\bibitem[Wang and Shao, 2020]{wangshao2020}
Wang, R. and Shao, X. (2020).
\newblock {Hypothesis testing for high-dimensional time series via
  self-normalization}.
\newblock {\em The Annals of Statistics}, 48(5):2728 -- 2758.

\bibitem[Wang et~al., 2021]{wang2021inference}
Wang, R., Zhu, C., Volgushev, S., and Shao, X. (2021).
\newblock Inference for change points in high dimensional data via
  self-normalization. arxiv:1905.08446.

\bibitem[Wang et~al., 2018]{Wang}
Wang, W., Phillips, P. C.~B., and Su, L. (2018).
\newblock Homogeneity pursuit in panel data models: Theory and application.
\newblock {\em Journal of Applied Econometrics}, 33(6):797--815.

\bibitem[Wu and Pourahmadi, 2009]{wu2009banding}
Wu, W.~B. and Pourahmadi, M. (2009).
\newblock Banding sample autocovariance matrices of stationary processes.
\newblock {\em Statistica Sinica}, 19:1755--1768.

\bibitem[Zellner, 1962]{Zellner}
Zellner, A. (1962).
\newblock An efficient method of estimating seemingly unrelated regressions and
  tests for aggregation bias.
\newblock {\em Journal of the American Statistical Association},
  57(298):348--368.

\end{thebibliography}
\end{document}